\keywords{hyperproperties, multi-agent systems, alternating-time temporal logic, HyperATL$^*$, information-flow control, asynchronous hyperproperties, model checking, non-interference, HyperLTL, HyperCTL$^*$}
\def\eg{{\em{e.g.}}}
\def\cf{{\em{cf.}}}
\def\ie{{\em{i.e.}}}
\newcommand{\ltlN}{\LTLnext}
\newcommand{\ltlG}{\LTLglobally}
\newcommand{\ltlF}{\LTLeventually}
\newcommand{\ltlU}{\LTLuntil}
\newcommand{\ltlW}{\LTLweakuntil}
\newcommand{\cmark}{\ding{51}}%
\newcommand{\xmark}{\ding{55}}%
\newcommand\xqed[1]{%
	\leavevmode\unskip\penalty9999 \hbox{}\nobreak\hfill
	\quad\hbox{#1}}
\newcommand\demo{\xqed{$\vartriangleleft$}}
\newtcolorbox{mybox}[1]{
    colback=white,
    colbacktitle=white,
    coltitle=black,
    colframe=black,
    fonttitle=\bfseries,
    enhanced,
    attach boxed title to top center={yshift=-2mm},
    title={\large#1},
    arc = 0pt,
}
\newtcolorbox{myboxi}[2]{
	colback=white,
	colbacktitle=white,
	coltitle=black,
	colframe=black,
	fonttitle=\bfseries,
	enhanced,
	attach boxed title to top center={yshift=-2mm},
	title={\large#2},
	arc = 0pt,
	#1
}
\newtcolorbox{formulaBox}[1]{
	colback=white,
	colbacktitle=white,
	coltitle=black,
	colframe=black,
	fonttitle=\bfseries,
	enhanced,
	attach boxed title to top left={yshift=-3mm, xshift=2mm},
	title={\large#1},
	arc = 0pt,
	top=0pt,
	before skip= 6pt,
	after skip=7pt,
	breakable,
}
\definecolor{pastelorange}{HTML}{F79A3D}
\definecolor{pastelblue}{HTML}{2874ae}
\definecolor{pastelgreen}{HTML}{61B940}
\definecolor{pastelviolet}{HTML}{A370AB}
\definecolor{darkblue}{HTML}{1C4987}
\definecolor{chestnut}{HTML}{A24516}
\definecolor{darkgreen}{HTML}{426A5A}
\newcommand{\EXPTIME}{\texttt{EXPTIME}}
\newcommand{\EXPSPACE}{\texttt{EXPSPACE}}
\newcommand{\PSPACE}{\texttt{PSPACE}}
\newcommand{\PTIME}{\texttt{PTIME}}
\newcommand{\NLOGSPACE}{\texttt{NLOGSPACE}}
\newcommand{\ltlnext}{\LTLnext}
\newcommand{\ltlg}{\LTLsquare}
\newcommand{\HyperATLS}{\texttt{HyperATL}$^*$}
\newcommand{\CTL}{\texttt{CTL}}
\newcommand{\CTLS}{\texttt{CTL}$^*$}
\newcommand{\LTL}{\texttt{LTL}}
\newcommand{\HyperLTL}{\texttt{HyperLTL}}
\newcommand{\ATL}{\texttt{ATL}}
\newcommand{\ATLS}{\texttt{ATL}$^*$}
\newcommand{\AHLTL}{\texttt{AHLTL}}
\newcommand{\HyperCTLS}{\texttt{HyperCTL}$^*$}
\newcommand{\out}[1]{\mathit{out}(#1)}
\newcommand{\calT}{\mathcal{T}}
\newcommand{\calG}{\mathcal{G}}
\newcommand{\calA}{\mathcal{A}}
\newcommand{\calL}{\mathcal{L}}
\newcommand{\calV}{\mathcal{V}}
\newcommand{\calX}{\mathcal{X}}
\newcommand{\nat}{\mathbb{N}}
\newcommand{\atomic}{\mathit{AP}}
\newcommand{\agents}{\Xi}
\newcommand{\agent}{\xi}
\newcommand{\refLemma}[1]{Lemma~\ref{lem:#1}}
\newcommand{\refTheo}[1]{Theorem~\ref{theo:#1}}
\newcommand{\refDef}[1]{Definition~\ref{def:#1}}
\newcommand{\refFig}[1]{Figure \ref{fig:#1}}
\newcommand{\refExample}[1]{Example~\ref{ex:#1}}
\newcommand{\refTable}[1]{Table~\ref{tab:#1}}
\newcommand{\refProp}[1]{Proposition~\ref{prop:#1}}
\newcommand{\quant}{\mathbb{Q}}
\newcommand{\ldot}{\mathpunct{.}}
\newcommand{\moves}{\mathcal{M}}
\newcommand{\myvar}[1]{{\color{chestnut!80!black} $#1$}}
\newcommand{\mycontrol}[1]{{\ttfamily#1}}
\newcommand{\myconst}[1]{{\ttfamily \color{black!70} #1}}
\newcommand{\pathsvars}{\mathcal{V}}
\newcommand{\veri}{\ensuremath{\mathfrak{V}}}
\newcommand{\refu}{\ensuremath{\mathfrak{R}}}
\begin{document}

\title[HyperATL$^\text{\scalebox{1.5}{$*$}}$: A Logic for Hyperproperties in Multi-Agent Systems]{HyperATL$^\text{\scalebox{1.7}{$*$}}$: A Logic for Hyperproperties \\in Multi-Agent Systems}

\author[R.~Beutner]{Raven Beutner\lmcsorcid{0000-0001-6234-5651}}	
\address{CISPA Helmholtz Center for Information Security, Germany}	

\author[B.~Finkbeiner]{Bernd Finkbeiner\lmcsorcid{0000-0002-4280-8441}}	

\thanks{This work was partially supported by the German Research Foundation (DFG) as part of the Collaborative Research Center ``Foundations of Perspicuous Software Systems'' (TRR 248, 389792660) and by the ERC Grants OSARES (No.~683300) and HYPER (No.~101055412). R.~Beutner carried out this work as a member of the Saarbrücken Graduate School of Computer Science.}


\begin{abstract}
	Hyperproperties are system properties that relate multiple computation paths in a system and are commonly used to, \eg, define information-flow policies.
	In this paper, we study a novel class of hyperproperties that allow reasoning about strategic abilities in multi-agent systems. 
	We introduce \HyperATLS{}, an extension of computation tree logic with path variables and strategy quantifiers. 
	Our logic supports quantification over paths in a system -- as is possible in hyperlogics such as \HyperCTLS{} -- but resolves the paths based on the strategic choices of a coalition of agents.
	This allows us to capture many previously studied (strategic) security notions in a unifying hyperlogic.
	Moreover, we show that \HyperATLS{} is particularly useful for specifying asynchronous hyperproperties, \ie, hyperproperties where the execution speed on the different computation paths depends on the choices of a scheduler.  
	We show that finite-state model checking of \HyperATLS{} is decidable and present a model checking algorithm based on alternating automata.
	We establish that our algorithm is asymptotically optimal by proving matching lower bounds.
	We have implemented a prototype model checker for a fragment of \HyperATLS{} that can check various security properties in small finite-state systems.
\end{abstract}

\maketitle

\section{Introduction}

Hyperproperties \cite{ClarksonS10} are system properties that relate multiple computation paths of a system.
Such properties are of increasing importance as they can, for example, characterize the information-flow in a system.
A prominent logic to express hyperproperties is \HyperLTL{}, which extends linear-time temporal logic (\LTL) with explicit path quantification \cite{ClarksonFKMRS14}.
\HyperLTL{} can, for instance, express generalized non-interference (GNI) \cite{McCullough88}, stating that the high-security input of a system does not influence the observable output:
\begin{align}\label{prop:GNIintro}
	\forall \pi\ldot \forall \pi'\ldot \exists \pi''\ldot\ltlg\big(\bigwedge_{a \in H} a_{\pi} \leftrightarrow a_{\pi''}\big) \land \ltlg\big(\bigwedge_{a \in O} a_{\pi'} \leftrightarrow a_{\pi''}\big)\tag{\textit{GNI}}
\end{align}
Here $H$ is a set of high-security input propositions and $O$ a set of outputs (for simplicity, we assume that there are no low-security inputs).
In our model, the system thus generates a set of computations paths over $H \cup O$, where each path corresponds to a possible input-output interaction with the system.
The \ref{prop:GNIintro} formula then states that for any pair of paths $\pi, \pi'$ there exists a third path $\pi''$ that agrees with the high-security inputs of $\pi$ and with the outputs of $\pi'$.
The existence of $\pi''$ guarantees that any observation on the outputs is compatible with every possible sequence of high-security inputs; non-determinism is the sole explanation for the output.

\begin{wrapfigure}{R}{0.55\textwidth}
	\begin{center}
		\small
		\begin{tikzpicture}[scale=1]
			
			\node[] at (1,3) (hatl) {\HyperATLS{} };

			\node[] at (0,2) (hctl) {\HyperCTLS};
			
			\node[] at (2,2) (atls) {\ATLS};
			
			\node[] at (-1,1) (hltl) {\texttt{HyperLTL}};
			
			\node[] at (1,1) (ctls) {\CTLS};
			
			\node[] at (3,1) (atl) {\texttt{ATL}};

			\draw[pastelblue, thick,dashed] (0.1, 3.4) -- node[above,sloped] {\small\color{pastelblue}Hyperproperties} (-2.6, 0.7) -- (-0.3, 0.7) --   (2.4, 3.4) -- (0.1, 3.4);

			\draw[chestnut, thick,dotted] (1.8, 3.3) -- node[above,sloped] {\small\color{chestnut}\quad Strategic Properties} (4.4, 0.7) -- (2.3, 0.7) -- (-0.3, 3.3) -- (1.8, 3.3) ;

			\node[] at (0,0) (ltl) {\texttt{LTL}};
			
			\node[] at (2,0) (ctl) {\texttt{CTL}};

			\draw[->, thick] (ltl) -- (ctls);
			\draw[->, thick] (ltl) -- (hltl);
			
			\draw[->, thick] (ctl) -- (ctls);
			\draw[->, thick] (ctl) -- (atl);
			
			\draw[->, thick] (hltl) -- (hctl);
			
			\draw[->, thick] (ctls) -- (hctl);
			\draw[->, thick] (ctls) -- (atls);
			
			\draw[->, thick] (atl) -- (atls);
			
			\draw[->, thick] (hctl) -- (hatl);
			\draw[->, thick] (atls) -- (hatl);
			
		\end{tikzpicture}
	\end{center}
	\caption{Hierarchy of expressiveness of temporal logics. An arrow $A \to B$ indicates that $A$ is a (syntactic) fragment of $B$. Logics in the blue, dashed area can express hyperproperties. Logics in the red, dotted area can express strategic properties in multi-agent systems. Logics that are interpreted on multi-agent systems (\ATL{}, \ATLS{}, and \HyperATLS{}) can also be applied to transition systems (the standard model for the remaining logics) by interpreting transition systems as 1-agent systems (see Remark \ref{rem:transVsCGS}); the reverse does not hold, \ie, logics that are interpreted on transitions systems cannot reason about strategic abilities in multi-agent systems.
	}\label{fig:expr}
\end{wrapfigure}

Existing hyperlogics (like \HyperLTL{}) consider a system as a set of paths and quantify (universally or existentially) over those paths.
In this paper, we introduce a novel class of hyperproperties that reason about \emph{strategic behavior} in a multi-agent system where the paths of the system are outcomes of games played on a game structure.
We introduce \HyperATLS{}, a temporal logic to express hyperproperties in multi-agent systems.
Our logic builds on the foundation laid by alternating-time temporal logic (\ATLS{}) \cite{AlurHK02}.\footnote{\ATLS{} is a temporal logic that extends \CTLS{} by offering selective quantification over paths that are possible outcomes of games \cite{AlurHK02}.
	The \ATLS{} quantifier $\llangle A \rrangle \varphi$ states that the players in $A$ have a joint strategy such that every outcome under that strategy satisfies $\varphi$.}
While strategy quantifiers in \ATLS{} can be nested (similar to \CTLS), the logic is unable to express hyperproperties, as the scope of each quantifier is limited to the current path.

In \HyperATLS{}, we combine quantification over strategic behavior with the ability to express hyperproperties. 
Syntactically, our logic combines the strategic quantifier of \ATLS{} but binds the outcome to a path variable:\footnote{Similar to logics such as \HyperLTL{}, we use path variables as a syntactic tool to refer to paths that are bound by outer quantifiers. For example, in the \ref{prop:GNIintro} formula, we use paths variables $\pi, \pi', \pi''$ and compare the paths bound to these variables in the body of the formula.}
The \HyperATLS{} formula $\llangle A \rrangle \pi\ldot \varphi$ specifies that the agents in $A$ have a strategy such that all outcomes under that strategy, when bound to the path variable $\pi$, satisfy $\varphi$.
Similar to \HyperLTL{}, quantification is resolved incrementally.
For example, $\llangle A \rrangle \pi\ldot  \llangle A' \rrangle \pi'\ldot \varphi$ requires the existence of strategy for the agents in $A$ such that for all possible outcomes under that strategy, when bound to $\pi$, the agents in $A'$ have a strategy such that all possible outcomes, when bounds to $\pi'$, satisfy $\varphi$.
In particular, the quantification over strategies for $A'$ takes place after $\pi$ is fixed.\footnote{This incremental elimination of quantification ensures that \HyperATLS{} is a proper extension of \HyperCTLS{} but is also crucial for decidable model checking. In fact, model checking for any logic that could express the existence of a strategy such that the set of outcomes under that strategy satisfies a hyperproperty, would already subsume the realizability problem for hyperproperties, which is known to be undecidable already for very simple fragments \cite{FinkbeinerHLST18}.}
We endow our logic with an explicit construct to resolve multiple games simultaneously (syntactically, we surround quantifiers by $[\cdot]$ brackets).
The formula $[\llangle A\rrangle \pi\ldot \llangle A' \rrangle \pi']~\varphi$ requires winning strategies for the agents in $A$ (for the first copy) and for $A'$ (for the second copy) in a game that progresses simultaneously, \ie, the players can observe the current state of both copies.

The resulting logic is very expressive and subsumes both the existing hyperlogic \HyperCTLS{} \cite{ClarksonFKMRS14} (the branching-time extension of  \HyperLTL{}) and the alternating-time logic \ATLS{} \cite{AlurHK02}.
The resulting expensiveness hierarchy is depicted in \Cref{fig:expr}.

Reasoning about strategic hyperproperties in multi-agent systems is useful in various settings, including information-flow control and asynchronous hyperproperties. 
Consider the following two examples that demonstrate how we can use \HyperATLS{} to express such strategic hyperproperties.

\paragraph{Application 1: Information-flow Control}

We consider a strategic information-flow control property.
Imagine a system where the non-determinism arises from a scheduling decision between two subprograms $P_1$ and $P_2$.
Each subprogram reads the next high-security input $h$ of the system. 
Suppose that $P_1$ assumes that $h$ is even and otherwise leaks information, while $P_2$ assumes that $h$ is odd and otherwise leaks information.
We can check \ref{prop:GNIintro} on the resulting system. 
In \HyperLTL{}, quantification is resolved incrementally, so the witness path $\pi''$ is chosen after $\pi$ and $\pi'$ are already fixed.
In particular, all future high-security input $h$ are already determined, so a leakage disproving path $\pi''$ can be constructed (by always scheduling the copy that does not leak information on the \emph{next} input); the system satisfies \ref{prop:GNIintro}.
By contrast, an \emph{actual} scheduler, who determines which subprogram handles the next input, cannot avoid information leakage.
\HyperATLS{} can express a stricter information-flow policy.
As a first step, we consider the system as a game structure with two players. 
Player $\agent_H$ chooses (in each step) the values for the high-security inputs, and player $\agent_N$ resolves the remaining non-determinism of the system (\ie, the nondeterminism not caused by input selection).
We give a concrete semantics into such a game structure in \Cref{sec:semantics}.
Consider the following \HyperATLS{} specification:
\begin{align*}
	\forall \pi\ldot\llangle \agent_N \rrangle \pi'\ldot \ltlg \big(\bigwedge_{a \in O} a_{\pi} \leftrightarrow a_{\pi'}\big)
\end{align*}
This formula requires that for every possible path $\pi$, the non-determinism player $\agent_N$ has a \emph{strategy} to produce identical outputs on $\pi'$ against all possible moves by $\agent_H$.
As we consider the system as a game structure, any strategy for $\agent_N$ does not know which moves $\agent_H$ will play in the future.
Any possible output of the system is thus achievable by a strategy only knowing the finite input history but oblivious to the future high-security inputs.  
The system sketched above does not satisfy this property.

A particular strength of this formulation is that we can encode additional requirements on the strategy for the scheduler. 
For example, if the internal non-determinism arises from the scheduling decisions between multiple components, we can, in addition, require fairness of the scheduling strategy. 
\demo

\begin{wrapfigure}{R}{0.35\textwidth}
	\vspace{-0.0cm}
	\begin{center}
		\begin{tikzpicture}
			\node[align=left] () at (0, 0) {
				\ttfamily
				\myvar{o} $\leftarrow$ \myconst{true}\\
				\mycontrol{while}(\myconst{true})\\
				\makebox[0.5cm]{} \myvar{h} $\leftarrow$ \mycontrol{Read}$_H$\\
				\makebox[0.5cm]{} \mycontrol{if} \myvar{h} \mycontrol{then}\\
				\makebox[1cm]{} \myvar{o} $\leftarrow$ $\neg$\myvar{o}\\
				\makebox[0.5cm]{} \mycontrol{else}\\
				\makebox[1cm]{} \myvar{temp} $\leftarrow$ $\neg$\myvar{o} \\
				\makebox[1cm]{} \myvar{o} $\leftarrow$ \myvar{temp}	
			};
		\end{tikzpicture}
	\end{center}
	\caption{Example program that violates (synchronous) observational determinism. }\label{fig:exProg}
\end{wrapfigure}

\paragraph{Application 2: Asynchronous Hyperproperties}

Most existing hyperlogics traverse the paths of a system synchronously.
However, in many applications (for example, when reasoning about software), we require an asynchronous traversal to, \eg,  account for the unknown speed of execution of software that runs on some unknown platform or to abstract away from intermediate (non-observable) program steps.
Strategic hyperproperties enable reasoning over asynchronous hyperproperties by considering the execution speed of a system as being controlled by a dedicated scheduling player, which we add via a system transformation.
Direct reasoning about asynchronicity is then replaced by reasoning about the strategic abilities of the scheduling player. 

As an example, consider the program in \refFig{exProg}. It continuously reads a high-security input $h$ and, depending on $h$, flips the output $o$ either directly or via a temporary variable.
Consider the \HyperLTL{} specification $\forall \pi\ldot \forall \pi'\ldot \ltlg \big( o_\pi \leftrightarrow o_{\pi'} \big)$. 
It expresses \emph{observational-determinism} (OD), \ie, it states that the output should be identical across all paths \cite{HuismanWS06}.
In \refFig{exProg}, the value of $o$ is flipped in each loop iteration, but the exact timepoint at which the flip occurs depends on the high-security input; the program does not satisfy OD in the synchronous \HyperLTL{} semantics. 
However, when executing the program, we might assume that an observer cannot detect this small timing difference and, instead, only observes the value of $o$ whenever it is changed.
We thus require a property that allows the paths of the program to be executed at different speeds to realign the output.
To reason about this in \HyperATLS{}, we extend the system with a scheduling agent $\mathit{sched}$ that can stutter the system.
That is, $\mathit{sched}$ can, in each step, decide if the system makes a step or remains in its current state (we give a concrete construction for including $\mathit{sched}$ in \Cref{sec:examples2}).
On the resulting multi-agent system (which now includes agent $\mathit{sched}$) we check the following \HyperATLS{} property:
\begin{align*}
	[\llangle \mathit{sched} \rrangle  \pi\ldot\llangle \mathit{sched} \rrangle \pi']~\mathit{fair}_{\pi} \land \mathit{fair}_{\pi'} \land \ltlg \big(o_{\pi} \leftrightarrow o_{\pi'}\big) 
\end{align*}
This formula requires that the scheduler has a strategy to align any two program paths such that the output agrees.
The additional requirement $\mathit{fair}_{\pi} \land \mathit{fair}_{\pi'} $ ensures that both copies are scheduled infinitely many times (see \Cref{sec:examples2} for details). 
By surrounding the quantifier with $[\cdot]$, both paths are resolved simultaneously (instead of incrementally), so the strategies in both copies can collaborate. 
The program in \refFig{exProg} (with an added asynchronous scheduler) satisfies this \HyperATLS{} property.
Any two paths in the original system can thus be aligned (by stuttering for any finite number of steps) such that the output agrees globally. 

This general style of asynchronous reasoning turns out to be remarkably effective:
Verification is possible (decidable) for the full logic, and the approach subsumes the largest known decidable fragment of a recent asynchronous extension of \HyperLTL{} \cite{BaumeisterCBFS21}.
\demo

\paragraph{Model Checking}

We show that model checking of \HyperATLS{} on concurrent game structures is decidable and present an automata-based model checking algorithm. 
Our algorithm incrementally reduces model checking to the emptiness of an automaton.
By using alternating automata, we encode the strategic behavior of a game structure within the transition function of the automaton.
To characterize its complexity, we partition \HyperATLS{} formulas based on the number of complex quantifiers (\ie, quantifiers where the set of agents is non-trivial) and simple quantifiers (\ie, quantifiers that reason about all possible paths, irrespective of the strategic behavior in the system).
For each fragment, we derive upper bounds on the model checking complexity, both in the size of the system and specification.
Different from \HyperLTL{} -- where each alternation results in an exponential blowup \cite{FinkbeinerRS15,Rabe16,BeutnerF23} -- the strategic quantification in \HyperATLS{} results in a \emph{double} exponential blowup with each complex quantifier.
Using a novel counter construction, we prove matching lower bounds on the \HyperATLS{} model checking problem (in the size of both specification and system).

\paragraph{Prototype Model Checker}

On the practical side, we present \texttt{hyperatlmc}, a prototype model checker for a fragment of \HyperATLS{}. 
The fragment supported by our tool does, in particular, include all alternation-free \HyperLTL{} formulas \cite{FinkbeinerRS15}, the model checking approach from \cite{CoenenFST19}, and the formulas arising when expressing asynchronous hyperproperties in \HyperATLS{}.

\paragraph{Contributions}

In summary, our contributions include the following:
\begin{itemize}
	\item We introduce \HyperATLS{}, a novel logic to express strategic hyperproperties in multi-agent systems.
	We demonstrate that \HyperATLS{} can express many existing information-flow policies and offers a natural formalism to express asynchronous hyperproperties (subsuming the largest known decidable fragment of the logic presented in \cite{BaumeisterCBFS21}).
	
	\item We give an automata-based model checking algorithm for \HyperATLS{} and analyze its complexity (both in system and specification size) based on the number and type of quantifiers. 
	
	\item We prove matching lower bounds on the \HyperATLS{} model checking problem via a novel counter construction.
	
	\item We present \texttt{hyperatlmc}, a prototype-model checker for a fragment of \HyperATLS{}, and use it to verify information-flow policies and asynchronous hyperproperties in small systems.
\end{itemize}

\noindent
This paper is an extended version of a preliminary conference version \cite{BeutnerF21}. 
Compared to the conference paper, this version contains detailed and streamlined proofs of the theoretical results.
Moreover, we extend our earlier complexity bounds by also analyzing \HyperATLS{} model-checking in the size of the system (in \cite{BeutnerF21} we only consider the complexity in the size of the specification) and derive uniform lower and upper bounds that are stated purely in the number and type of the quantifiers (without side conditions needed in \cite{BeutnerF21}).

\paragraph{Structure}

The remainder of this paper is structured as follows.
In \Cref{sec:prelim}, we introduce basic preliminaries and in \Cref{sec:hypteratl} we develop \HyperATLS{}. 
Afterward, in \Cref{sec:examples1,sec:examples2}, we present examples (ranging from information-flow policies to asynchronous hyperproperties) expressible in \HyperATLS{} and connect to existing asynchronous hyperlogics.
We discuss model checking on finite-state game structures in \Cref{sec:mc}.
In \Cref{sec:lb}, we show matching lower bounds on the model checking problem. 
Finally, in \Cref{sec:proto}, we report on our prototype model checker and discuss related work in \Cref{sec:related}.

\section{Preliminaries}\label{sec:prelim}

In this section, we introduce basic preliminaries on transition systems, game structures and alternating automata. 
For a set $X$, we write $X^*$ for the set of finite sequences over $X$, $X^+$ for the set of non-empty finite sequences, and $X^\omega$ for the set of in infinite sequences.
For an infinite sequence $u \in X^\omega$ and $i \in \nat$, we write $u(i) \in X$ for the $i$th element (starting at the $0$th) and $u[i, \infty] \in X^\omega$ for the infinite-suffix starting at position $i$.
For $u_1, \ldots, u_n \in X^\omega$, we define $\otimes(u_1, \ldots, u_n) \in (X^n)^\omega$ as the pointwise product, \ie, $\otimes(u_1, \ldots, u_n)(i) := (u_1(i), \ldots, u_n(i))$.
In case of only two sequences $u_1, u_2$, we write $u_1 \otimes u_2$ instead of $\otimes(u_1, u_2)$.
The pointwise product extends to finite sequences of the same length. 
We fix a finite set of atomic propositions $\atomic$ and define $\Sigma := 2^\atomic$. 

\paragraph{Transition Systems}

A \emph{transition system} is a tuple $\calT = (S, s_0, \delta, L)$ where $S$ is a finite set of states, $s_0 \in S$ is an initial state, $\delta \subseteq S \times S$ is a transition relation, and $L : S \to \Sigma$ is a labeling function. 
We assume that for every $s \in S$ there is at least one $s'$ such that $(s, s') \in \delta$.
A path in $\calT$ is an infinite sequence $p \in S^\omega$ such that $p(0) = s_0$ and for every $i \in \nat$, $(p(i), p(i+1)) \in \delta$.

\paragraph{Concurrent Game Structures}

As the basic model of multi-agent systems we use game structures. 
A \emph{concurrent game structure} (CGS) \cite{AlurHK02} is an extension of a transition system in which the transition relation is composed of the moves of individual agents (also called players).
Formally, a CGS is a tuple $\calG = (S, s_0, \agents, \moves, \delta, L)$.
The finite set of states $S$, the initial state $s_0 \in S$, and the labeling $L : S \to \Sigma$ are as in a transition system. 
Additionally, $\agents$ is a finite and non-empty set of agents (or players), $\moves$ is a finite and non-empty set of moves, and $\delta: S \times (\agents \to \moves) \to S$ is a transition function. 
We call a mapping $\sigma: \agents \to \moves$ a \emph{global} move vector.
Given a state and global move vector, the transition function $\delta$ determines a unique successor state.
For a set of agents $A \subseteq \agents$ we call a function $\sigma: A \to \moves$ a \emph{partial} move vector. 
For disjoint sets of agents $A_1, A_2 \subseteq \agents$ and partial move vectors $\sigma_i : A_i \to \moves$ for $i \in \{1, 2\}$ we define $\sigma_1 + \sigma_2 : A_1 \cupdot A_2 \to \moves$ as the move vector obtained as the combination of the individual choices. 
For $\sigma : A \to \moves$ and $A' \subseteq A$, we define $\sigma_{\mid A'} : A' \to \moves$ by restricting the domain of $\sigma$ to $A'$.

\begin{rem}\label{rem:transVsCGS}
	We can naturally interpret a transition systems as a $1$-player CGS in which the move of the unique player determines the successor state of the system. 
	Any temporal logic that specifies properties on game structures is thus also applicable in transition systems. 
	\demo
\end{rem}

\paragraph{Multi Stage Concurrent Game Structures}

In a concurrent game structure (as the name suggests), all agents choose their next move concurrently, \ie, without knowing what moves the other players have chosen. 
We introduce the concept of a \emph{multi-stage game structure} (MSCGS), in which the move selection proceeds in stages and agents can base their decision on the already fixed moves of (some of the) other agents. 
This is particularly useful when we, \eg, want to base a scheduling decision on the moves selected by the other agents.
Formally, a MSCGS is a CGS equipped with a function $d : \agents \to \mathbb{N}$, that orders the agents according to informedness. 
Whenever $d(\agent_1) < d(\agent_2)$, $\agent_2$ can base its next move on the move selected by $\agent_1$.
A CGS thus naturally corresponds to a MSCGS with $d = \mathbf{0}$, where $\mathbf{0}$ is the constant $0$ function.

\paragraph{Strategies in Game Structures}

A strategy in a game structure is a function that maps finite histories of plays in the game to a move in $\moves$. 
As the plays in an MSCGS progress in stages, each decision is based on the past sequence of states \emph{and} the fixed moves of all agents in previous stages.
Formally, a strategy for an agent $\agent \in \agents$ is a function 
\begin{align*}
    f_\agent : S^+ \times \big(\{\agent' \mid d(\agent') < d(\agent)\} \to \moves\big) \to \moves.
\end{align*}
Note that in case where $d = \mathbf{0}$, a strategy can be seen as a function $S^+ \to \moves$.

\begin{defi}
	Given a set of agents $A$, a set of strategies $F_A = \{f_\agent \mid \agent \in A\}$, and a state $s \in S$, we define $\out{\calG, s, F_A} \subseteq S^\omega$ as the set of all runs $u \in S^\omega$ such that 1) $u(0) = s$, and 2) for every $i \in \mathbb{N}$ there exists a global move vector $\sigma : \agents \to \moves$ with $\delta(u(i), \sigma) = u(i+1)$ and for all $\agent \in A$ we have $\sigma(\agent) = f_\agent(u[0, i], \sigma_{\mid  \{\agent' \mid d(\agent') < d(\agent)\}})$.
	\demo
\end{defi}

\paragraph{Alternating Automata}

For a set $X$, we write $\mathbb{B}^+(X)$ for the set of positive boolean formulas over $X$ with the standard propositional semantics.
Given $\Psi \in \mathbb{B}^+(X)$ and $B \subseteq X$ we write $B \models \Psi$ if the assignment obtained from $B$ by mapping all $x \in B$ to true and all $x \not\in B$ to false satisfies $\Psi$.
An \emph{alternating parity automaton} (APA) is a tuple $\calA = (Q, q_0, \Sigma, \rho, c)$ where $Q$ is a finite set of states, $q_0 \in Q$ is an initial state, $\Sigma$ is a finite alphabet, $\rho : Q \times \Sigma \to \mathbb{B}^+(Q)$ is a transition function, and $c : Q \to \mathbb{N}$ is a coloring of states.
A tree is a set $T \subseteq \mathbb{N}^*$ that is prefixed closed, \ie, $\tau \cdot n \in T$ implies $\tau \in T$.
We refer to elements in $\tau \in T$ as nodes and denote with $|\tau|$ the length of $\tau$ (or equivalently the depth of the node). 
For a node $\tau \in T$, we define $\mathit{children}(\tau)$ as the set of immediate children of $\tau$, \ie, $\mathit{children}(\tau) := \{\tau \cdot n \in T \mid n \in \mathbb{N}\}$.
An $X$-labeled tree is a pair $(T, r)$ where $T$ is a tree and $r : T \to X$ labels nodes with an element in $X$. 
A run of an APA $\calA = (Q, q_0, \Sigma, \rho, c)$ on a word $u \in \Sigma^\omega$ is a $Q$-labeled tree $(T, r)$ such that 1) $r(\epsilon) = q_0$, and 2) for all $\tau \in T$, $\{r(\tau') \mid \tau' \in \mathit{children}(\tau)\} \models \rho(r(\tau), u(|\tau|))$.
A run $(T, r)$ is accepting if, for every infinite path in $T$, the minimal color that occurs infinitely many times (as given by $c$) is even. 
We denote with $\calL(\calA)$ the set of words for which $\calA$ has an accepting run.
We call an alternating automaton $\calA$ non-deterministic (resp.~universal) if the codomain of the transition function $\rho$ consists of disjunctions (resp.~conjunction) of states.
If the codomain of $\rho$ consists of atomic formulas (\ie, formulas without boolean connectives that only consist of a single positive state atom), we call $\calA$ deterministic.\footnote{In a non-deterministic or universal automaton, we interpret $\rho$ as a function $Q \times \Sigma \to 2^Q$. In a deterministic automaton, we interpret $\rho$ as a function $Q \times \Sigma \to Q$. }
Alternating, non-deterministic, universal, and deterministic parity automata all recognize the same class of languages (namely $\omega$-regular ones) although they can be (double) exponentially more succinct.

\begin{thmC}[\cite{MiyanoH84,DrusinskyH94}]\label{theo:alt1}
    For every alternating parity automaton $\calA$ with $n$ states, there exists a non-deterministic parity automaton $\calA'$ with $2^{\mathcal{O}(n \log n)}$ states that accepts the same language.
    For every non-deterministic or universal parity automaton $\calA$ with $n$ states, there exists a deterministic parity automaton $\calA'$ with $2^{\mathcal{O}(n \log n)}$ states that accepts the same language.
\end{thmC}

\begin{thm}\label{theo:altneg}
    For every alternating parity automaton $\calA$ with $n$ states, there exists an alternating parity automaton $\overline{\calA}$ with $\mathcal{O}(n)$ states that accepts the complemented language.
    If $\calA$ is non-deterministic (resp.~universal), $\overline{\calA}$ is universal (resp.~non-deterministic).
\end{thm}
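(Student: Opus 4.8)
The plan is to complement $\calA$ by \emph{dualization}, which I expect to yield an automaton with only $n$ states---comfortably within the claimed $\mathcal{O}(n^2)$ bound. Concretely, given $\calA = (Q, q_0, \Sigma, \rho, c)$, I would define $\overline{\calA} = (Q, q_0, \Sigma, \tilde\rho, \tilde c)$ over the \emph{same} state set, where $\tilde\rho(q,a)$ is the De~Morgan dual of $\rho(q,a)$---obtained by swapping $\land$ with $\lor$ and the constants $\mathit{true}$ with $\mathit{false}$---and $\tilde c(q) := c(q) + 1$. Two sanity checks come first: the dual of a positive boolean formula is again positive, so $\tilde\rho$ indeed maps into $\mathbb{B}^+(Q)$; and adding $1$ to every color is a monotone shift, so it preserves which color is minimal along any path while flipping the parity of the minimal-infinitely-often color.

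The core of the argument is a correctness proof via the \emph{acceptance game}. For a fixed word $u \in \Sigma^\omega$, I would set up the two-player parity game $\mathcal{G}(\calA, u)$ on positions $Q \times \nat$ in which, at $(q,i)$, the existential player (Eve) resolves the disjunctions and the universal player (Adam) resolves the conjunctions of $\rho(q, u(i))$, with the play moving to $(q', i{+}1)$ for the chosen successor state $q'$, and Eve winning an infinite play iff the minimal color seen infinitely often is even. The key observation is that run trees of $\calA$ on $u$ are exactly the strategies of Eve, and a run is accepting iff the corresponding strategy is winning; hence $u \in \calL(\calA)$ iff Eve wins $\mathcal{G}(\calA, u)$. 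By construction, $\mathcal{G}(\overline{\calA}, u)$ is obtained from $\mathcal{G}(\calA, u)$ by interchanging the roles of Eve and Adam (the effect of dualizing $\rho$) and flipping the winning parity (the effect of the color shift), so Eve wins $\mathcal{G}(\overline{\calA}, u)$ iff Adam wins $\mathcal{G}(\calA, u)$.

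It remains to close the loop using \emph{determinacy}: parity games are determined (indeed with memoryless strategies), so for every $u$ exactly one of Eve and Adam has a winning strategy in $\mathcal{G}(\calA, u)$. Combining the three equivalences gives $u \in \calL(\overline{\calA})$ iff Adam wins $\mathcal{G}(\calA, u)$ iff Eve does not win $\mathcal{G}(\calA, u)$ iff $u \notin \calL(\calA)$, which is the desired $\calL(\overline{\calA}) = \Sigma^\omega \setminus \calL(\calA)$. The special-case claims are then immediate from the shape of the De~Morgan dual: if $\rho$ uses only disjunctions then $\tilde\rho$ uses only conjunctions, turning a non-deterministic automaton into a universal one, and symmetrically a universal automaton into a non-deterministic one.

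For the state bound, dualization keeps $|Q| = n$ unchanged, so any generous $\mathcal{O}(n^2)$ estimate holds trivially; at most a constant number of auxiliary sink states are needed if one insists on removing the constants $\mathit{true}$/$\mathit{false}$ from the transition formulas. I expect the main obstacle to be the correctness argument rather than the construction: one must carefully justify the bijection between accepting run trees and winning Eve-strategies---taking care of the finite branches arising when $\rho$ evaluates to $\mathit{true}$, which contribute no infinite path and hence no parity obligation---and then invoke determinacy of parity games to convert ``$\calA$ rejects'' into ``$\overline{\calA}$ accepts.'' The color shift must likewise be argued to genuinely swap the winning condition without disturbing the relative order of colors, but this is routine once the monotonicity of $c \mapsto c+1$ is noted.
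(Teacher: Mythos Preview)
The paper does not prove this theorem at all: it is stated in the preliminaries as a known result with a citation to \cite{KupfermanV01}, and no argument is given. So there is nothing to compare your proposal against on the paper's side.

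That said, your dualization argument is the standard and correct route. Swapping $\land$/$\lor$ (and $\top$/$\bot$) in the transition formulas and shifting all colors by one does complement an alternating parity automaton with no increase in the state count, and your acceptance-game justification via determinacy of parity games is exactly how one makes this precise. The observation that dualizing a purely disjunctive (resp.\ conjunctive) transition function yields a purely conjunctive (resp.\ disjunctive) one immediately gives the non-deterministic/universal swap. Your construction in fact achieves $n$ states rather than the $\mathcal{O}(n^2)$ stated; the quadratic bound in the paper is simply a conservative citation and your tighter count is fine.
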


\section{HyperATL*}\label{sec:hypteratl}

In this section, we introduce \HyperATLS{}. 
Our logic extends \CTLS{} \cite{EmersonH86} by introducing path variables (similar to \HyperCTLS{} \cite{ClarksonFKMRS14}) and strategic quantification (similar to \ATLS{} \cite{AlurHK02}). 
Assume a fixed set of agents $\agents$ and let $\pathsvars$ be a finite set of path variables. 
\HyperATLS{} formulas are generated by the following grammar
\begin{align*}
     \varphi, \psi := \llangle A \rrangle \pi\ldot \varphi \mid a_\pi \mid \varphi \land \psi \mid \neg \varphi \mid \ltlnext \varphi \mid \varphi \ltlU \psi 
\end{align*}
where $\pi \in \pathsvars$ is a path variable, $a \in \atomic$ an atomic proposition, and $A \subseteq \agents$ a set of agents.
Quantification of the form $\llangle A \rrangle \pi\ldot \varphi$ binds a path to path variable $\pi$, and $a_\pi$ refers to the truth value of $a$ on the path bound to variable $\pi$. 
A formula is closed if all sub-formulas $a_\pi$ occur in the scope of a quantifier that binds $\pi$; throughout the paper we assume all formulas to be closed.
We use the usual derived boolean connectives $\lor, \to, \leftrightarrow$, the boolean constants true ($\top$) and false ($\bot$), and temporal operators \emph{eventually} ($\ltlF \varphi := \top \ltlU \varphi$), \emph{globally} ($\ltlG \varphi := \neg \ltlF \neg \varphi$), and \emph{weak until} ($\varphi \ltlW \psi := (\varphi \ltlU \psi) \lor \ltlG \varphi$).

The strategic quantifier $\llangle A \rrangle \pi\ldot \varphi$ postulates that the agents in $A$ have a joint strategy such that every outcome under that strategy, when bound to path variable $\pi$, satisfies $\varphi$.
Trivial agent sets, \ie, $A= \emptyset$ or $A = \agents$, correspond to classical existential or universal quantification.
We therefore write $\forall \pi\ldot \varphi$ instead of $\llangle \emptyset \rrangle \pi\ldot \varphi$ and $\exists \pi\ldot \varphi$ instead of $\llangle \agents \rrangle \pi\ldot \varphi$.  
For a single agent $\agent \in \agents$, we sometimes write $\llangle \agent \rrangle \pi\ldot \varphi$ instead of $\llangle \{\agent\} \rrangle \pi\ldot \varphi$.
We define $\llbracket A \rrbracket \pi\ldot \varphi := \neg \llangle A \rrangle \pi\ldot \neg \varphi$ which states that the agents in $A$ have \emph{no} strategy such that every outcome, when bound to $\pi$, avoids $\varphi$ (see \cite{AlurHK02}).
We call a quantifier $\llangle A \rrangle \pi$ \emph{simple} if the agent-set $A$ is trivial ($\emptyset$ or $\agents$) and otherwise \emph{complex}.

A \HyperATLS{} formula is \emph{linear} if it consists of an initial quantifier prefix followed by a quantifier-free formula, \ie, has the form $\quant_1 \pi_1 \ldots \quant_n \pi_n\ldot \psi$ where $\psi$ is quantifier-free and each $\quant_i$ is either $\llangle A \rrangle$ or $\llbracket A \rrbracket$ for some $A$.
Linear-\HyperATLS{} is thus the syntactic subfragment of \HyperATLS{} that is analogous to the definition of \HyperLTL{} as a syntactic fragment of \HyperCTLS{} \cite{ClarksonFKMRS14}.

\paragraph{Semantics}
The semantics of \HyperATLS{} is defined with respect to a game structure $\calG = (S, s_0, \agents, \moves, \delta, d, L)$ and a path assignment $\Pi$, which is a partial mapping $\Pi : \pathsvars \rightharpoonup S^\omega$.
For $\pi \in \pathsvars$ and path $p \in S^\omega$ we write $\Pi[\pi \mapsto p]$ for the assignment obtained by updating the value of $\pi$ to $p$.
We write $\Pi[i, \infty]$ to denote the path assignment defined by $\Pi[i, \infty](\pi) := \Pi(\pi)[i, \infty]$.
\begin{align*}
    \Pi &\models_\calG a_\pi &\text{iff  } \quad &a \in L(\Pi(\pi)(0))\\
    \Pi &\models_\calG \neg \varphi &\text{iff  }\quad &\Pi \not\models_\calG \varphi\\
    \Pi &\models_\calG \varphi \land \psi &\text{iff  } \quad &\Pi \models_\calG \varphi \text{ and } \Pi \models_\calG \psi\\
    \Pi &\models_\calG \ltlnext \varphi &\text{iff  } \quad &\Pi[1, \infty] \models_\calG \varphi\\
    \Pi &\models_\calG \varphi \ltlU \psi &\text{iff  } \quad &\exists i \geq 0\ldot  \Pi[i, \infty] \models_\calG \psi  \text{ and } \forall 0 \leq j < i\ldot \Pi[j, \infty] \models_\calG \varphi\\
    \Pi &\models_\calG \llangle A \rrangle \pi\ldot\varphi &\text{iff  } \quad &\exists F_A\ldot \forall p \in \mathit{out}(\calG, \Pi(\epsilon)(0), F_A)\ldot \Pi[\pi \mapsto p] \models_\calG \varphi
\end{align*}
Here $\Pi(\epsilon)$ refers to the path that was last added to the assignment (similar to the \HyperCTLS{} semantics \cite{ClarksonFKMRS14}).\footnote{If we assume the path variables quantified in a formula are distinct (which we can always ensure by $\alpha$-renaming), we can view a path assignment $\Pi$ as a finite list of pairs in $\calV \times S^\omega$, interpret $\Pi[\pi \mapsto p]$ as appending the pair $(\pi, p)$ to the list, and interpret $\Pi(\epsilon)$ as the path of the last pair in the list.
}
If $\Pi$ is the empty assignment, we define $\Pi(\epsilon)(0)$ as the initial state $s_0$ of $\calG$.
We say that $\calG$ satisfies $\varphi$, written $\calG \models \varphi$, if $\emptyset \models_\calG \varphi$ where $\emptyset$ is the empty path assignment. 

\begin{rem}
	Note that the games used to produce paths in a strategy quantification $\llangle A \rrangle \pi\ldot\varphi$ are \emph{local}, \ie, the outcome of the game is fixed (and bound to $\pi$) before $\varphi$ is evaluated further.
	The strategy for agents in $A$ is quantified after the outer paths (those bound to the path variables that are free in $\llangle A \rrangle \pi\ldot\varphi$) are already fixed.
	For example, in a formula of the form $\forall \pi\ldot \llangle A \rrangle \pi'\ldot \varphi$ the agents in $A$ know the already fixed path bound to $\pi$ but behave as a strategy w.r.t.~$\pi'$. 
	In particular, any formula using only simple quantification (\ie, only $\forall$ and $\exists$ quantifiers) corresponds directly to the (syntactically identical) \HyperCTLS{} property.  
	\demo
\end{rem}

\begin{prop}\label{prop:subsume}
    \HyperATLS{} subsumes \ATLS{}. When interpreting transitions systems as $1$-player CGSs (cf.~Rem.~\ref{rem:transVsCGS}), \HyperATLS{}  subsumes \HyperCTLS{} (and thus \HyperLTL).
    The resulting hierarchy is depicted in \refFig{expr}.
\end{prop}

\paragraph{Extension 1: Extended Path Quantification}

Oftentimes, it is convenient to compare different game structures with respect to a hyperproperty. 
For \emph{linear} \HyperATLS{} properties, we consider formulas with extended path quantification.
We write $\llangle A \rrangle_\calG \, \pi\ldot \varphi$ to quantify path $\pi$ via a game played in $\calG$.
For example, $\forall_{\calG}\,\pi\ldot\llangle A \rrangle_{\calG'}\,\pi'\ldot \ltlg (o_{\pi} \leftrightarrow o_{\pi'})$ states that for each path $\pi$ in $\calG$ the agents in $A$ have a strategy in $\calG'$ that produces only paths $\pi'$ which agree with $\pi$ on $o$ (where $o$ is a shared proposition between $\calG$ and $\calG'$).\footnote{Formally, we change the syntax of quantification from $\llangle A \rrangle \pi\ldot \varphi$ to $\llangle A \rrangle_\nu \pi\ldot \varphi$ where $\nu$ is a \emph{system variable}.
We evaluate the resulting formula no longer on a single  system $\calG$ but on a mapping $N$ that maps system variables to game structures.
Each quantifier $\llangle A \rrangle_\nu \pi\ldot \varphi$ is then resolved on system $N(\nu)$.
The interesting case in the semantics thus becomes
	\begin{align*}
		\Pi \models_N \llangle A \rrangle_\nu \pi\ldot\varphi \quad \text{iff  } \quad \exists F_A\ldot \forall p \in \mathit{out}(N(\nu), N(\nu)_0, F_A)\ldot \Pi[\pi \mapsto p] \models_N \varphi.
\end{align*}
Here, $N(\nu)_0$ is the initial state in game structure $N(\nu)$, and $F_A$ ranges over strategies for the agents in $A$ in game structure $N(\nu)$.
Note that this is only possible for linear formulas, as each play starts in the initial state of the game structure, irrespective of the current path assignment. 
See \cite[\S 5.4]{Rabe16} for details on the extended path quantification in the context of \HyperCTLS{}. 
}

\paragraph{Extension 2: Parallel Composition}

We extend \HyperATLS{} ~with a syntactic construct that allows multiple paths to be resolved in a single bigger game, where individual copies of the system progress in parallel.
Consider the following modification to the \HyperATLS{} ~syntax, where $k \geq 1$:
\begin{align*}
    \varphi, \psi := \big[\llangle A_1 \rrangle \pi_1\ldots\llangle A_k \rrangle \pi_k\big]~\varphi \mid a_\pi \mid \neg \varphi \mid \varphi \land \psi  \mid \ltlnext \varphi \mid \varphi \ltlU \psi
\end{align*}
When surrounding strategy quantifiers by $[\cdot]$, the resulting paths are the outcome of a game played on a bigger, parallel game of the structure. 
Consequently, the agents in each copy can base their decisions not only on the current state of their copy but on the combined state of all $k$ copies (which allows for a coordinated behavior among the copies).
For a player $\agent$ and CGS $\calG = (S, s_{0}, \agents, \moves, \delta, L)$, a \emph{$k$-fold strategy} for $\agent$ is a function $f_\agent : (S^k)^+ \to \moves$. 

\begin{defi}
	For a system $\calG$,  sets of $k$-fold strategies strategies $F_{A_1}, \ldots, F_{A_k}$ and states $s_1, \ldots, s_k$, we define $\mathit{out}(\calG, (s_1, \ldots, s_k), F_{A_1}, \ldots, F_{A_k})$ as all plays $u \in (S^k)^\omega$ such that 1) $u(0) = (s_1, \ldots, s_k)$, and 2) for every $i \in \mathbb{N}$ there exist global move vectors $\sigma_1, \ldots, \sigma_k$ such that $u(i+1) = \left(\delta(t_1, \sigma_1), \ldots, \delta(t_k, \sigma_k)\right)$ where $u(i) = (t_1, \ldots, t_k)$ and for every $j \in \{1, \ldots, k\}$, agent $\agent \in A_j$ and strategy $f_\agent \in F_{A_j}$, it holds that $\sigma_j(\agent) = f_\agent(u[0,i])$.
	\demo
\end{defi}

The definition of $k$-fold strategies and $\mathit{out}(\calG, (s_1, \ldots, s_k), F_{A_1}, \ldots, F_{A_k})$ extends naturally if we consider MSCGSs instead of CGSs.
We extend our semantics by the following judgment:
\begin{align*}
        \Pi &\models_\calG \big[\llangle A_1 \rrangle \pi_1\ldots\llangle A_k \rrangle \pi_k\big]~\varphi \quad \text{iff  } \quad \exists F_{A_1}, \ldots, F_{A_k}. \\
        &\forall (p_1, \ldots, p_k) \in \mathit{out}(\calG, (\Pi(\epsilon)(0), \ldots, \Pi(\epsilon)(0)), F_{A_1}, \ldots, F_{A_k}). \Pi[\pi_1 \mapsto p_1]\ldots[\pi_k \mapsto p_k] \models_\calG \varphi
\end{align*}%
Here we consider $\mathit{out}(\calG, (s_1, \ldots, s_k), F_{A_1}, \ldots, F_{A_k})$ as a subset of $(S^\omega)^k$ instead of $(S^k)^\omega$ using the natural correspondence (given by $\otimes^{-1}$).
Note that $[\llangle A \rrangle \pi]~\varphi$ is equivalent to $\llangle A \rrangle \pi\ldot \varphi$.

\section{Strategic Hyperproperties and Information-Flow Control}\label{sec:examples1}

Before discussing the automated verification of \HyperATLS{} properties, we consider example properties expressed in \HyperATLS{}.
We organize our examples into two categories. 
We begin with examples from information-flow control and highlight the correspondence with existing properties and security paradigms (this is done in this section). 
Afterward (in \Cref{sec:examples2}), we show that strategic hyperproperties are naturally suited to express asynchronous hyperproperties. 

\begin{rem}
	In our discussion of information-flow policies, we focus on game structures that result from reactive systems. 
	Let $H, L$, and $O$ be pairwise disjoint sets of atomic propositions denoting high-security inputs, low-security inputs, and outputs, respectively.
	We consider a system as a 3-player game structure comprising agents  $\agent_N, \agent_H$, and $\agent_L$ responsible for resolving non-determinism, selecting high-security, and selecting low-security inputs, respectively.
	In particular, the move from $\agent_H$ (resp.~$\agent_L$) determines the values of the propositions in $H$ (resp.~$L$) in the next step. Agent $\agent_N$ resolves the remaining non-determines in the system.
	We call a CGS of the above form a \emph{progCGS} (program-CGS).
	We will see a concrete transformation of programs into progCGSs in \Cref{sec:semantics}.
	For now, we rely on the reader's intuition. 
	We call a progCGS \emph{input-total} if, in each step, $\agent_H$ and $\agent_L$ can choose \emph{all} possible valuations for the input propositions in $H$ and $L$, respectively; we assume all progCGSs in this section to be input total. 
	\demo
\end{rem}

\subsection{Strategic Non-Interference}

In the introduction, we already saw that, in some cases, generalized non-interference \cite{McCullough88} is a too relaxed notion of security, as the witness path is fixed knowing the entire future input-output behavior.
Recall the definition of GNI (compared to the definition in the introduction, we now also support low-security inputs):
\begin{align}\label{prop:GNI}
	\forall \pi\ldot \forall \pi'\ldot \exists \pi''\ldot\ltlg\big(\bigwedge_{a \in H} a_{\pi} \leftrightarrow a_{\pi''}\big) \land \ltlg\big(\bigwedge_{a \in L \cup O} a_{\pi'} \leftrightarrow a_{\pi''}\big)\tag{\textit{GNI}}
\end{align}
In \HyperATLS{}, we express
\begin{align*}\label{prop:stratNI}
    \forall \pi\ldot \llangle \{\agent_N, \agent_L \} \rrangle \pi'\ldot \ltlg \big(\bigwedge_{a \in L \cup O} a_{\pi} \leftrightarrow a_{\pi'}\big). \tag{\textit{stratNI}}
\end{align*}
In this formula, a strategy for the $\agent_N$ and $\agent_L$ should construct a path $\pi'$ that agrees with the low-security inputs and outputs of $\pi$.
Note that $\agent_L$ only determines the value of the propositions in $L$, so any winning strategy for $\agent_L$ is ``deterministic'' in the sense that it needs to copy the low-security inputs from $\pi$. 
Agent $\agent_N$ needs to resolve the non-determinism but does not know the future high-security inputs on $\pi'$ (as those are chosen by $\agent_H$).
If there exists a winning \emph{strategy} for $\agent_N$ that avoids information leakage, there also exists a path (in the sense of \ref{prop:GNI}) that disproves leakage.

\begin{lem}\label{lem:stratNIandGNI}
    For any progCGS $\calG$, if $\calG \models \ref{prop:stratNI}$  then $\calG \models \ref{prop:GNI}$.
\end{lem}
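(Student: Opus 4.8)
The plan is to show that the strategic witness guaranteed by \ref{prop:stratNI} can always be turned into the existential path witness required by \ref{prop:GNI}, by letting the high-input player simply replay the high-security inputs of the universally quantified path. First I would fix arbitrary paths $p, p' \in S^\omega$ playing the roles of $\pi$ and $\pi'$ in \ref{prop:GNI}; since both quantifiers there are universal and (by the semantics) every top-level play starts in $s_0$, these range over exactly the paths from $s_0$. I then instantiate the outer universal quantifier of \ref{prop:stratNI} with the path $p'$. As $\calG \models \ref{prop:stratNI}$ by assumption, this yields a joint strategy $F$ for the agents $\{\agent_N, \agent_L\}$ such that every play in $\out{\calG, s_0, F}$ agrees with $p'$ on all propositions in $L \cup O$.

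Next I would extract from $F$ a single outcome that additionally matches the high-security inputs of $p$. Because $\calG$ is input-total, the high-input player $\agent_H$ has at every step a move forcing any desired valuation of $H$ in the successor state; I therefore let $\agent_H$ follow the (history-independent, time-indexed) strategy that at step $i$ forces the valuation $L(p(i+1)) \cap H$. Combining this strategy for $\agent_H$ with $F$ determines a unique play $\pi''$, which, being consistent with $F$ on $\{\agent_N, \agent_L\}$, lies in $\out{\calG, s_0, F}$. By the defining property of $F$, the play $\pi''$ agrees with $p'$ on $L \cup O$, and by construction of $\agent_H$'s moves it agrees with $p$ on $H$ (at step $0$ both states equal $s_0$). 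Hence $\pi''$ satisfies $\ltlg\big(\bigwedge_{a \in H} a_{\pi} \leftrightarrow a_{\pi''}\big) \land \ltlg\big(\bigwedge_{a \in L \cup O} a_{\pi'} \leftrightarrow a_{\pi''}\big)$ under $\pi \mapsto p$, $\pi' \mapsto p'$, which is exactly the witness demanded by \ref{prop:GNI}. Since $p, p'$ were arbitrary, $\calG \models \ref{prop:GNI}$.

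The conceptual crux, and the step I expect to require the most care, is the passage from a strategy to a concrete path. A strategy for $\{\agent_N, \agent_L\}$ is by definition oblivious to the future moves of $\agent_H$ and must succeed against \emph{every} sequence of high-security inputs; it is precisely this obliviousness that lets me feed it the particular high-input sequence of $p$ and still land inside $\out{\calG, s_0, F}$. Making this rigorous requires (i) invoking input-totality to guarantee that $\agent_H$ can realize the exact $H$-sequence of $p$ regardless of the other agents' moves, and (ii) checking that fixing $\agent_H$'s moves alongside $F$ indeed produces a play in the outcome set, using that $\out{\calG, s_0, F}$ collects \emph{all} plays consistent with $F$ against arbitrary opponent behaviour. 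Note that the converse implication fails in general (as the scheduling example in the introduction shows), so the statement is genuinely one-directional and rests entirely on this asymmetry between quantifying a strategy and quantifying a single path.
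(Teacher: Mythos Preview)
Your proof is correct and uses essentially the same idea as the paper's: let $\agent_H$ replay the high-security inputs of the first universally quantified path (exploiting input-totality), and observe that the winning strategy for $\{\agent_N,\agent_L\}$ must still produce a path matching $L\cup O$. The only difference is that the paper argues by contraposition (a spoiling strategy for $\agent_H$ witnesses $\calG \not\models \ref{prop:stratNI}$), whereas you argue directly; the substance is identical.
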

\begin{proof}
    We show the contraposition and assume that $\calG \not\models \ref{prop:GNI}$.
    There thus exists paths $p_H$ and $p_{L, O}$ such that no path in $\calG$ agrees with the high inputs of $p_H$ and low-security inputs and outputs of $p_{L, O}$.
    We show that $\calG \not\models \ref{prop:stratNI}$. 
    For the universally quantified path bound to $\pi$ we choose $p_{L, O}$ and argue that $\agent_N, \agent_L$ have no winning strategy to construct $\pi'$.
    A spoiling strategy for $\agent_H$ is the one that, in each step, chooses the high-security inputs in accordance with $p_H$ (which is possible as the system is input-total).
    A winning strategy for $\agent_N$ and $\agent_L$ would then need to construct a path that agrees with $p_H$ on $H$ and with $p_{L, O}$ on $L \cup O$ which, by assumption, does not exist.  
\end{proof}

\subsection{Parallel Composition and $\forall\exists$-verification}\label{sec:coenen}

Our next examples make use of the parallel composition offered by quantification of the form $[\llangle A_1 \rrangle \pi_1\ldots\llangle A_k \rrangle \pi_k]$.
We consider the model checking algorithm for $\forall^*\exists^*$-\HyperLTL{} formulas introduced by Coenen et al.~\cite{CoenenFST19}.
The idea is to consider the verification of \HyperLTL{} formula $\forall \pi\ldot \exists \pi'\ldot \varphi$ as a game between the $\forall$-player and $\exists$-player.
The $\forall$-player moves through a copy of the state space (thereby producing a path $\pi$), and the $\exists$-player reacts with moves in a separate copy (thereby producing a path $\pi'$).
The $\exists$-player wins if $\pi'$ combined with $\pi$ satisfies $\varphi$, in which case the property holds.\footnote{The reverse implication does, in general, not hold as the $\exists$-player might require knowledge about the future behavior of the $\forall$-player. 
	 The game-based verification method can be made complete by adding \emph{prophecies}, \ie, hints for the $\exists$-player that provide limited information about the future behavior of the $\forall$-player \cite{BeutnerF22b}.} In \HyperATLS{}, we can express this game-based verification approach as a logical statement:
Formula $[\forall \pi\ldot\exists \pi']~\varphi$ requires a strategy that constructs $\pi'$ when played in parallel with a game that constructs $\pi$.
A system thus satisfies $[\forall \pi\ldot\exists \pi']~\varphi$ exactly if the property can be verified in the game-based approach from \cite{CoenenFST19}.

Phrased differently, while \cite{CoenenFST19} derives a verification \emph{method}, \HyperATLS{} can express both the original \HyperLTL{} property and its game-based verification method as \emph{formulas}; the correctness of the algorithm from \cite{CoenenFST19} becomes a logical implication in \HyperATLS{}.
\refLemma{gameBasedVeri} states a more general implication (by considering a property of the form $\forall \pi\ldot \llangle A \rrangle \pi'\ldot \varphi$ instead of $\forall \pi\ldot \exists \pi'\ldot \varphi$).

\begin{lem}\label{lem:gameBasedVeri}
    Let $\calG$ be any game structure and $\varphi$ be any \HyperATLS{} formula. 
    If $\calG \models [\forall \pi\ldot \llangle A \rrangle \pi']~\varphi$ then $\calG \models \forall \pi\ldot \llangle A \rrangle \pi'\ldot \varphi$.
\end{lem}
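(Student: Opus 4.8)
The plan is to prove the implication directly, by turning the single witnessing $2$-strategy from the hypothesis into a family of per-path strategies witnessing the conclusion. Since $\calG \models \psi$ means $\emptyset \models_\calG \psi$ from the empty assignment, we have $\Pi(\epsilon)(0) = s_0$ throughout. Unfolding the conclusion $\calG \models \forall \pi. \llangle A \rrangle \pi'. \varphi$, I must produce, for every path $p \in \out{\calG, s_0, \emptyset}$, a set of single-copy strategies $F_A^p = \{\tilde f_\agent \mid \agent \in A\}$ such that every $p' \in \out{\calG, s_0, F_A^p}$ satisfies $\Pi[\pi \mapsto p][\pi' \mapsto p'] \models_\calG \varphi$. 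The hypothesis $\calG \models [\forall \pi. \llangle A \rrangle \pi'.]\,\varphi$ hands me a single set of $2$-strategies $F_A = \{f_\agent \mid \agent \in A\}$, with $f_\agent : (S^2)^+ \to \moves$, such that every $(p, p') \in \out{\calG, (s_0, s_0), \emptyset, F_A}$ satisfies $\varphi$. The crucial observation is that $f_\agent$ reacts to the first copy only through the \emph{joint history seen so far}, i.e.\ it is an online reaction to $p$, whereas the conclusion may choose $F_A^p$ only after $p$ is fully fixed; hence the parallel hypothesis is the stronger one, and specialisation should suffice.

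The key construction is to ``hardcode'' the universally chosen path into the $2$-strategy. For a fixed $p$ and each $\agent \in A$, I define a single-copy strategy $\tilde f_\agent : S^+ \to \moves$ by
\[
  \tilde f_\agent(t_0 \cdots t_m) := f_\agent\big(\otimes(p(0) \cdots p(m),\, t_0 \cdots t_m)\big),
\]
feeding the $2$-strategy the joint history assembled from the length-matching prefix of $p$ and the given second-copy history, and collect these into $F_A^p$. (In the MSCGS case the lower-stage move vector is simply threaded through unchanged, so the same definition applies verbatim.)

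The heart of the argument is then an exact outcome-set correspondence,
\[
  \out{\calG, (s_0, s_0), \emptyset, F_A} = \big\{\, (p, p') \mid p \in \out{\calG, s_0, \emptyset},\ p' \in \out{\calG, s_0, F_A^p} \,\big\},
\]
which I would verify by unfolding the definition of $\out{\cdot}$ index by index. The first copy is unconstrained on both sides (since $A_1 = \emptyset$, every realisable transition is admitted), and by construction the move $\tilde f_\agent(p'[0,i])$ prescribed by $F_A^p$ equals the move $f_\agent(u[0,i])$ prescribed by $F_A$ on the joint history $u = p \otimes p'$, so the transition constraints on the second copy coincide at every step. Given this equality, fixing an arbitrary $p$ and an arbitrary $p' \in \out{\calG, s_0, F_A^p}$ yields a pair $(p, p') \in \out{\calG, (s_0, s_0), \emptyset, F_A}$, whence $\varphi$ holds by the hypothesis; as $p'$ ranges over all outcomes of $F_A^p$ this gives $\Pi[\pi \mapsto p] \models_\calG \llangle A \rrangle \pi'. \varphi$ with witness $F_A^p$, and as $p$ was arbitrary the conclusion follows.

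The main obstacle is bookkeeping rather than depth: getting the index alignment in $\tilde f_\agent$ exactly right (the joint history at the $m$-th move is built from $p(0)\cdots p(m)$) and checking the outcome-set correspondence carefully, since that is precisely where the online nature of $F_A$ is exploited. It is worth recording that the converse fails: an incremental strategy may use the entire future of $p$, while the parallel strategy is confined to online information, so the implication genuinely runs in one direction only.
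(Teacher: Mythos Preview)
Your proposal is correct and follows essentially the same approach as the paper: both hardcode the fixed path $p$ into the $2$-strategy by defining $\tilde f_\agent(u) := f_\agent(p[0,|u|-1] \otimes u)$, then observe that any $p' \in \out{\calG, s_0, F_A^p}$ satisfies $(p,p') \in \out{\calG, (s_0,s_0), \emptyset, F_A}$. Your statement of a full outcome-set \emph{equality} is slightly stronger than what the paper proves or needs (only the inclusion $\{(p,p') \mid p' \in \out{\calG,s_0,F_A^p}\} \subseteq \out{\calG,(s_0,s_0),\emptyset,F_A}$ is required for the implication), but the extra direction is harmless and also true.
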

\begin{proof}
    Assume that $\calG \models [\forall \pi\ldot \llangle A \rrangle \pi']~\varphi$. Let $F_A = \{f_\agent : (S \times S)^+ \to \moves \mid \agent \in A\}$ be the set of strategies for the agents in $A$ that is wining, \ie, for every $(p, p') \in \out{\calG, (s_0, s_0), \emptyset, F_A}$ we have $[\pi \mapsto p, \pi' \mapsto p'] \models \varphi$.
    We show that $\calG  \models \forall \pi\ldot \llangle A \rrangle \pi'\ldot \varphi$. Let $\Pi = [\pi \mapsto p]$ be any path assignment for $\pi$ (which is universally quantified).
    We construct a winning strategy $f_\agent' : S^+ \to \moves$ for each $\agent \in A$. For $u \in S^+$ we define 
    \begin{align*}
        f_\agent'(u) := f_\agent(p[0, |u|-1] \otimes u).
    \end{align*}
    Strategy $f_\agent'$ disregards most of the already fixed path $p$ and queries $f_\agent$ on prefixes of $p$.
    Let $F_A' =  \{f_\agent' \mid \agent \in A\}$. 
    It is easy to see that for each $p' \in \mathit{out}(\calG, s_0, F_A')$ we have $p \otimes p' \in \mathit{out}(\calG, (s_0, s_0), \emptyset, F_A)$ and  so $\calG  \models \forall \pi\ldot \llangle A \rrangle \pi'\ldot \varphi$ as required.
\end{proof}

Using \refLemma{gameBasedVeri} (which generalizes easily to formulas of the form $\forall \pi_1 \ldots \forall \pi_k\ldot \llangle A \rrangle \pi'\ldot \varphi$), we can, for example, strengthen \ref{prop:GNI} by surrounding the quantifier prefix with $[\cdot]$ brackets. 
We can manually increase the lookahead to enable the strategy that is constructing path $\pi'$ to peek at future events on $\pi_1, \ldots, \pi_k$ by shifting the system.

\begin{defi}
	For a game structure $\calG = (S, s_\mathit{init}, \agents, \moves, \delta, L)$ and $n \geq 1$ we define
	\begin{align*}
		\mathit{shift}(\calG, n) := (S \cupdot \{s_0, \ldots, s_{n-1}\}, s_0, \agents, \moves, \delta', L')
	\end{align*}%
	where $s_0, \ldots, s_{n-1}$ are fresh states not already in $S$. The transition function $\delta'$ is defined by $\delta'(s, \sigma) := \delta(s, \sigma)$ for $s \in S$, $\delta'(s_i, \sigma) := s_{i+1}$ if $ i < n-1$ and $\delta'(s_{n-1}, \sigma) := s_\mathit{init}$.
	The labeling function $L'$ is defined by $L'(s) = L(s)$ if $s \in S$ and $L'(s_i) = \emptyset$.
	We define $\mathit{shift}(\calG, 0) := \calG$.
	\demo
\end{defi}

System $\mathit{shift}(\calG, n)$ shifts the behavior of $\calG$ by adding $n$ initial steps before continuing as in $\calG$. 
We express a shifted approximation of \ref{prop:GNI} as follows.
\begin{align}\label{prop:NstratGNI}
    \big[\forall_\calG \pi\ldot\forall_\calG \pi'\ldot\exists_{\mathit{shift}(\calG, n)} \pi'' \big]~\ltlg\big(\bigwedge_{a \in H} a_{\pi} \leftrightarrow \ltlN^n a_{\pi''}\big) \land \ltlg\big(\bigwedge_{a \in L \cup O}  a_{\pi'} \leftrightarrow \ltlN^n a_{\pi''}\big)\tag{\textit{aproxGNI$_n$}}
\end{align} 
We shift the behavior of the copy on which $\pi''$ is constructed by $n$ positions which we correct using $n$ $\ltlN$s in the body of the formula.
It is easy to see that if $\calG \models \textit{aproxGNI}_n$ for some $n$, then $\calG \models \ref{prop:GNI}$.\footnote{This implication is of practical relevance as \ref{prop:NstratGNI} sits at a fragment of \HyperATLS{} checkable in polynomial time (in the size of the system), whereas the $\forall^*\exists^*$-fragment of \HyperLTL{} (and thus \HyperATLS{}) is already \PSPACE-hard \cite{Rabe16}. }

\subsection{Simulation-based Non-Interference}\label{sec:simSec}

The previously discussed information-flow policies are path-based.
By contrast, \emph{simulation}-based definitions of non-interference require a lock-step security simulation that disproves leakage (see, \eg, \cite{SabelfeldS00, Sabelfeld03,MantelS10}).
Let $\calG = (S, s_0, \{\agent_N, \agent_L, \agent_H\}, \moves, \delta, L)$ be a progCGS.
For states $s, s' \in S$ and evaluations $i_L \in 2^L$ and $i_H \in 2^H$, we write $s \Rightarrow^{i_L}_{i_H} s'$ if $L(s') \cap L = i_L$ and  $L(s') \cap H = i_H$ and $s'$ is a possible successor of $s$ in $\calG$ (\ie, there exists a move vector $\sigma$ such that $\delta(s, \sigma) = s'$).

\begin{defi}\label{def:simSec}
	A \emph{security simulation} is a relation $R \subseteq S \times S$ such that whenever $(s, t) \in R$, we have 1) $s$ and $t$ agree on the output propositions, \ie, $L(s) \cap O = L(t) \cap O$, and  2) for any $i_L \in 2^L$ and $i_H, i'_H \in 2^H$ if $s \Rightarrow^{i_L}_{i_H} s'$ then there exists a $t'$ with $t \Rightarrow^{i_L}_{i'_H} t'$ and $(s', t') \in R$. 
	\demo
\end{defi}

\noindent
Note that this is not equivalent to the fact that $R$ is a simulation in the standard sense \cite{Milner80} as the second condition is asymmetric in the high-security inputs.
We call $\calG$ \emph{simulation secure} if there exists a security simulation $R$ with $(s_0, s_0) \in R$ \cite{Sabelfeld03,SabelfeldS00}.
It is easy to see that every input-total system that is \emph{simulation secure} satisfies \ref{prop:GNI}.
The converse does, in general, not hold.
In \HyperATLS{}, we can express simulation security. 
\begin{align}\label{prop:simNI}
    [\forall_\calG \, \pi\ldot \llangle \agent_N \rrangle_{\mathit{shift}(\calG, 1)} \, \pi']~\ltlg \big(\bigwedge_{a \in L } a_{\pi} \leftrightarrow \bigcirc a_{\pi'}\big) \to \ltlg \big(\bigwedge_{a \in O} a_{\pi} \leftrightarrow \bigcirc a_{\pi'}\big) \tag{\textit{simSec}}
\end{align}
Here we shift the path $\pi'$ by one position ($\mathit{shift}(\calG, 1)$), which we correct using the $\ltlN$.
The shifting allows the strategy for $\agent_N$ in the second copy to base its decision on an already fixed step in the first copy, \ie, it corresponds to a strategy with a fixed lookahead of $1$ step. 

\begin{lem}\label{lem:simSec}
    A progCGS $\calG$ is \emph{simulation secure} if and only if  $\calG \models \ref{prop:simNI}$.
\end{lem}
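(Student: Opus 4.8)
The plan is to prove both implications by relating, along every \emph{threatening} play---one on which the low-security inputs of the two copies agree at every position---the position-wise pairs of visited states to a security simulation. Throughout, write $\pi$ for the universally quantified copy (a path of $\calG$, entirely adversarial since its coalition is $\emptyset$) and $\pi'$ for the strategically quantified copy (a path of $\mathit{shift}(\calG,1)$), and set $q_i := \pi'(i+1)$, the $i$-th state of the underlying $\calG$-behaviour of $\pi'$. The leading $\ltlN$ in \ref{prop:simNI} realigns the shift, so that both antecedent and consequent compare $\pi(i)$ with $q_i$; a play is threatening iff $L(\pi(i))\cap L = L(q_i)\cap L$ for all $i$. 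Since the fresh initial state of $\mathit{shift}(\calG,1)$ has the initial state $s_0$ of $\calG$ as its forced successor, we always have $q_0 = \pi'(1) = s_0 = \pi(0)$.

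For the direction from simulation security to \ref{prop:simNI}, I would fix a security simulation $R$ with $(s_0,s_0)\in R$ and build a strategy for $\agent_N$ in the second copy maintaining the invariant $(\pi(i),q_i)\in R$ on every threatening play. The base case is the observation $q_0 = \pi(0) = s_0$. For the step at position $i$, the strategy must move $q_{i-1}$ to some $q_i$ with $(\pi(i),q_i)\in R$; this is exactly condition~2 of \refDef{simSec} applied to $(\pi(i-1),q_{i-1})\in R$ with $s'=\pi(i)$, low input $i_L = L(\pi(i))\cap L$, and the high input $i_H' = L(q_i)\cap H$ that $\agent_H$ imposes on the second copy at this step: among the successors of $q_{i-1}$ consistent with $(i_L,i_H')$ the condition exhibits one, $t'$, with $(\pi(i),t')\in R$, and $\agent_N$ resolves the non-determinism so that $q_i = t'$. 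Condition~1 then yields $L(\pi(i))\cap O = L(q_i)\cap O$ at every position, i.e.\ the consequent; on non-threatening plays the antecedent fails and the implication holds vacuously.

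The delicate point in this direction is the \emph{information} available to $\agent_N$ when it commits its move: to invoke condition~2 it must already know both the low input $i_L$ it has to match (which lies one step in the \emph{future} of $\pi$) and the high input $i_H'$ the adversary is about to impose on its own copy. The former is supplied precisely by $\mathit{shift}(\calG,1)$: because $\pi'$ is delayed by one step, at the moment $\agent_N$ fixes $q_i = \pi'(i+1)$ the state $\pi(i)$ is already part of the play history. The latter is supplied by the progCGS being an MSCGS in which $\agent_N$ resolves the non-determinism only after the input players have moved, so that it observes $i_H'$ (and the same-step low input) before selecting among the input-consistent successors. I expect reconciling these two sources---future information on $\pi$ via the shift, same-step information on $\pi'$ via the staging order---to be the main obstacle, since without either one a single move of $\agent_N$ cannot be made good against all admissible opponent choices.

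For the converse I would start from a winning strategy for $\agent_N$ in the parallel game of \ref{prop:simNI} and define $R$ to consist of all pairs $(\pi(i),q_i)$ arising at some position $i$ of a strategy-consistent play $(\pi,\pi')$ whose low inputs agree up to position $i$. That $(s_0,s_0)\in R$ is the instance $i=0$. For condition~1, input-totality together with totality of the strategy let me extend any threatening prefix witnessing $(s,t)\in R$ to an \emph{infinite} threatening play (the adversary keeps matching $\pi$'s low inputs), on which the winning guarantee forces the consequent globally and hence $L(s)\cap O = L(t)\cap O$. For condition~2, given $s\Rightarrow^{i_L}_{i_H}s'$ and an arbitrary $i_H'$, I extend the witnessing prefix by one step: let $\pi$ take the step to $s'$, let the adversary pick low input $i_L$ (keeping the play threatening, possible by input-totality) and high input $i_H'$, and let $t'$ be the successor the strategy then produces; by construction $t\Rightarrow^{i_L}_{i_H'}t'$ and $(s',t')\in R$. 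Here no informedness is needed---one merely runs the given strategy---so the only real care is the prefix-to-infinite-play extension used for condition~1.
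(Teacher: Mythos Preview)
Your proposal is correct and follows essentially the same route as the paper's sketch: in one direction you use the security simulation $R$ to drive a strategy for $\agent_N$ that keeps the (shifted) pair of states $R$-related on every play where the premise holds, and in the other you harvest $R$ from the state pairs reachable under a winning strategy on premise-respecting plays. Your treatment is in fact more careful than the paper's sketch on the information issue---you explicitly separate the role of the shift (giving $\agent_N$ the already-fixed successor on $\pi$) from the role of the staging order (giving $\agent_N$ the adversary's high input $i_H'$ on its own copy), whereas the paper only highlights the former.
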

\begin{proof}
    We only sketch the high-level idea as the proof is similar to the well-known characterization of simulations and bisimulations as two-player games \cite{Stirling95} .
    For the first direction, we assume that $\calG$ is \emph{simulation secure} and let $R$ be a security simulation witnessing this.  
    The idea of the strategy for $\agent_N$ is to choose successors such that the parallel game between both copies is always in $R$-related states (after shifting).
    This is possible as long as the premise of \ref{prop:simNI} is not violated. 
   	By the definition of security simulations, this already implies $\ltlg (\bigwedge_{a \in O} a_{\pi} \leftrightarrow \bigcirc a_{\pi'})$.
    Note that the shifting is key, so the strategy for $\agent_N$ fixes a move \emph{after} the universally quantified path chooses a successor (as in the definition of security simulation). 
    For the second direction, assume $\agent_N$ has a winning strategy. We construct the relation $R$ by defining $(s, t) \in R$ whenever the pair $(s, t)$ occurs in the (shifted) parallel composition in any play for which the premise of \ref{prop:simNI} holds.
\end{proof}

\subsection{Non-Deducibility of Strategies}

As the last example in this section, we consider \emph{non-deducibility of strategies} ($\mathit{NDS}$) \cite{WittboldJ90}.
$\mathit{NDS}$ requires that every possible output is compatible with every possible input-\emph{strategy} (whereas \ref{prop:GNI} requires it to be compatible with every possible input-\emph{sequence}).
The subtle difference between sequences and strategies is important when a high-security input player can observe the internal state of a system. 
As a motivating example, consider the following (first introduced in \cite{WittboldJ90}):

\begin{exa}\label{ex:NDS}
    Suppose we have a system that reads a binary input $h$ from a high-security source and outputs $o$. 
    The system maintains a bit $b$ of information in its state, initially chosen non-deterministically.
    In each step, the system reads the input $h$, outputs $h \oplus b$ (where $\oplus$ is the xor-operation), non-deterministically picks a new value for $b$ and then repeats. 
    As $\oplus$ encodes an one-time pad, it is not hard to see that this system satisfies \ref{prop:GNI}:
    Given any input, any output is possible by resolving the non-deterministic choice of $b$ appropriately. 
    
    If the input player is, however, able to observe the system (in the context of \cite{WittboldJ90} the system shares the internal bit on a private channel), she can communicate an arbitrary sequence of bits to the low-security environment.
    Whenever she wants to send bit $c$, she inputs $h = c \oplus b$ where $b$ is the value of the internal bit (note that $(c \oplus b) \oplus b = c$).
    \demo
\end{exa}

Instead of requiring that every output sequence is compatible with all possible high-security input sequences, we require it to be compatible with all possible high-security input strategies. 
Phrased differently, there should \emph{not} be an output sequence such that a strategy for the input-player can \emph{avoid} this output.
\begin{align}\label{prop:NDS}
    \neg \exists \pi\ldot\llangle \agent_H \rrangle \pi'\ldot\ltlg (\bigwedge_{a \in L} a_{\pi} \leftrightarrow a_{\pi'}) \to \ltlF (\bigvee_{a \in O} a_{\pi} \not\leftrightarrow a_{\pi'}) \tag{\textit{NDS}}
\end{align}
This formula states that there does not exist a path $\pi$ such that $\agent_H$ has a strategy to avoid the output of $\pi$ (provided with the same low-security inputs).
The system sketched in \refExample{NDS} does not satisfy \ref{prop:NDS} (there, \eg, exists a strategy for $\agent_H$ that ensures that the output $o$ is always set to true).
Note that, due to determinacy of parity games, \ref{prop:NDS} is equivalent to \ref{prop:stratNI} on turn-based game structures.

\section{Strategic and Asynchronous Hyperproperties}\label{sec:examples2}

Most existing hyperlogics traverse the paths of a system synchronously.
However, especially when reasoning about software systems, one requires an asynchronous traversal to account, for example, for the unknown execution speed or to abstract away from intermediate (non-observable) program steps. 
In this section, we outline how strategic hyperproperties are useful to express such asynchronous hyperproperties.

The idea is to express asynchronous hyperproperties by viewing the stuttering of a system (\ie, whether a system progresses or remains in its current state) as being resolved by a dedicated player (which we call scheduling player).
Quantification over strategies of the scheduling player then naturally corresponds to asynchronous reasoning.  
That is, instead of reasoning about the (asynchronous) scheduling of a system directly, we reason about strategies for the scheduling agent. 
This style of asynchronous reasoning can express many properties while remaining fully decidable (as model checking of \HyperATLS{} is decidable, see \Cref{sec:mc}) and yields formulas that are automatically checkable (see \Cref{sec:proto}). 

\subsection{Scheduling Player and Stuttering Transformation  }

We call a player $\mathit{sched}$ an asynchronous scheduler if it can decide whether the system progresses (as decided by the other agents) or stutters.
Note that this differs from the asynchronous turn-based games as defined in \cite{AlurHK02}. In our setting, the scheduler does not control which of the player controls the next move but rather decides if the system as a whole progresses or stutters. 
In cases where the system does not already include an asynchronous scheduler, we can include a scheduler via a simple system transformation.

\begin{defi}\label{def:stutter}
	Given a game structure $\calG = (Q, q_0, \agents, \moves, \delta, d, L)$ over $\atomic$ and a fresh agent \emph{sched} not already included in $\agents$, define the stutter version of $\calG$, denoted $\calG_\mathit{stut}$, as the game structure over $\atomic \cupdot \{\mathit{stut}\}$ by $\calG_\mathit{stut} := (Q \times \{0, 1\}, (q_0, 0), \agents \cupdot \{\mathit{sched}\}, \moves \times \{\rightarrowtail, \downarrowtail\}, \delta', d', L')$ where 
	\begin{align*}
		\delta'\big((s, \_), \sigma\big) := \begin{cases}
			\begin{aligned}
				&(\delta(s,\mathit{proj}_1 \circ \sigma_{\mid \agents}), 0) \quad &&\text{if } (\mathit{proj}_2 \circ \sigma)(\mathit{sched}) = \rightarrowtail\\
				&(s, 1) \quad &&\text{if } (\mathit{proj}_2 \circ \sigma)(\mathit{sched}) = \downarrowtail
			\end{aligned}
		\end{cases}
	\end{align*}
	$L'((s, 0)) := L(s)$ and $L'(s, 1) := L(s) \cup \{\mathit{stut} \}$. Finally $d'(\agent) := d(\agent)$ for $\agent \in \agents$ and $d'(\mathit{sched}) := m + 1$ where $m$ is the maximal element in the codomain of $d$.
	\demo
\end{defi}

Here, $\mathit{proj}_i$ is the projection of the $i$th element in a tuple, $\circ$ denotes function composition, and $\_$ represents an arbitrary value in that position. 
In $\calG_\mathit{stut}$, the agents of the original game structure $\calG$, fix moves in $\moves$ and thereby determine the next state of the system.
In addition, the $\{\rightarrowtail, \downarrowtail\}$-decision of scheduling player determines if the move is actually executed ($\rightarrowtail$) or if the system remains in its current state ($\downarrowtail$). 
As $\mathit{sched}$ sits in the last stage of the MSCGS, the scheduling decision is based on the already fixed moves of the agents in $\agents$.
The extended state-space $Q \times \{0, 1\}$ is used to keep track of the stuttering, which becomes visible via the new atomic proposition $\mathit{stut}$.

\subsection{Observational Determinism}\label{sec:od}

As a warm-up,  we again consider the property of observational-determinism which states that the output along all paths is identical, \ie, $\forall \pi\ldot \forall \pi'\ldot\ltlg \big(\bigwedge_{a \in O} a_{\pi} \leftrightarrow a_{\pi'}\big)$.
We already argued that the example program in the introduction (in \refFig{exProg}) does not satisfy this property (if interpreted as a transition system in the natural way), as the output changes at different time points. 
To express an asynchronous version of OD, we reason about (a strategy for) the scheduling player on the transformed system.
\begin{align}\label{prop:ODa}
    [\llangle \mathit{sched} \rrangle \pi\ldot \llangle \mathit{sched}\rrangle \pi']~\mathit{fair}_{\pi} \land \mathit{fair}_{\pi'} \land \ltlg \Big(\bigwedge_{a \in O} a_{\pi} \leftrightarrow a_{\pi'}\Big)\tag{\textit{OD}$_\mathit{asynch}$}
\end{align}
where $\mathit{fair}_{\pi} := \ltlg \ltlF \neg \mathit{stut}_{\pi}$, asserts that the system may not be stuttered forever.
Note that we encapsulated the quantifiers by $[\cdot]$, thus resolving the games in parallel. 
Note that $\mathit{sched}$ only controls the stuttering and not the path of the underlying system.
The example from \refFig{exProg}, after the stuttering transformation, satisfies this formula, as the output can be aligned by the scheduling player.

\subsection{One-Sided Stuttering}

By resolving the stuttered paths incrementally (\ie, omitting the $[\cdot]$-brackets), we can also express \emph{one-sided stuttering}, \ie, allow only the second copy to be stuttered. 
As an example, assume $P^h$ is a program written in a \textbf{h}igher-level programming language and $P^l$ the complied program into a \textbf{l}ow-level language (\eg, assembly code).
Let $\calT^\mathit{h}$ and $\calT^\mathit{l}$ be transition systems of both programs, and consider the property that the low-level program exhibits the same output as the original program.
As the compiler breaks each program statement into multiple low-level instructions, the outputs will not match in a synchronous manner. 
Instead, the system $\calT^h$ may need to stutter for the low-level program to ``catch up''.
Using \refDef{stutter} we can express this as follows.
\begin{align*}
    \forall_{\calT^\mathit{l}} \, \pi\ldot \; \llangle \mathit{sched} \rrangle_{\calT^\mathit{h}_\mathit{stut}} \,\pi'\ldot \mathit{fair}_{\pi'} \land \ltlg \Big(\bigwedge_{a \in O} a_{\pi} \leftrightarrow a_{\pi'}\Big)
\end{align*}
\ie, for every execution $\pi$ of low-level program we can stutter the high-level program such that the observations align.

\subsection{Asynchronous HyperLTL}\label{sec:async}

We compare the strategic approach to asynchronicity  of \HyperATLS{} (in \Cref{sec:od}) with \emph{asynchronous HyperLTL} (\AHLTL{} for short) \cite{BaumeisterCBFS21}, a recent extension of \HyperLTL{} specifically designed to express asynchronous properties. 
\AHLTL{} is centered around the stuttering of a path.
A path $p'$ is a stuttering of $p$, written $p \trianglelefteq p'$, if it is obtained by stuttering each step in $p$ finitely often.
Formulas in \AHLTL{} quantify universally or existentially over stuttering of paths.
For example, the \AHLTL{} formula $\forall \pi_1 \ldots \forall \pi_n\ldot \mathbf{E}\ldot \varphi$ holds on a transition system $\calT$ (written $\calT \models_{\texttt{AHLTL}} \forall \pi_1 \ldots \forall \pi_n\ldot \mathbf{E}\ldot \varphi$) if for all path $p_1, \ldots, p_n$ in the $\calT$, there exists stutterings $p'_1, \ldots, p'_n$ (\ie, $p_i \triangleleft p_i'$ for all $i$) that (when bound to $\pi_1, \ldots, \pi_n$) satisfy $\varphi$.
Different from the asynchronous treatment in \HyperATLS{}, the stuttering in \AHLTL{} is thus quantified \emph{after} all paths are fixed.

Finite-state model checking of \AHLTL{} is undecidable, already for formulas of the form $\forall \pi_1\ldot \forall \pi_2\ldot\mathbf{E}\ldot \varphi$ \cite{BaumeisterCBFS21}.
The largest known fragment of \AHLTL{} with decidable model checking problem are formulas of the form $\forall \pi_1 \ldots \forall \pi_n\ldot \mathbf{E}\ldot \varphi$ where $\varphi$ is an \emph{admissible formula}.
An admissible formula has the form 
\begin{align*}
    \varphi = \varphi_{\mathit{state}} \land \big(\bigwedge_{i=1}^n \varphi_{\mathit{stut}}^i\big) \land \varphi_\mathit{phase}
\end{align*}
where $\varphi_{\mathit{state}}$ is a state-formula, \ie, uses no temporal operators, each $\varphi_{\mathit{stut}}^i$ is a stutter invariant formula that only refers to a single path variable, and $\varphi_\mathit{phase}$ is a phase formula which is a conjunction of formulas of the form $\ltlg \bigwedge_{a \in P} (a_{\pi_i} \leftrightarrow a_{\pi_j})$.
The phase formula requires the two paths $\pi_i$ and $\pi_j$ to traverse the same sequence (phases) of ``colors'' (as defined by $P$).
See \cite[\S 4.1]{BaumeisterCBFS21} for a more detailed discussion.

By replacing the stuttering quantifier $\mathbf{E}$ in \AHLTL{} with strategy quantification in \HyperATLS{} we obtain a sound approximation for formulas of the form $\forall \pi_1 \ldots \forall \pi_n\ldot \mathbf{E}\ldot \varphi$.

\begin{thm}\label{theo:altl}
    Assume a transition system $\calT$ and \AHLTL{} formula $\forall \pi_1 \ldots \forall \pi_n\ldot \mathbf{E}\ldot \varphi$. If
    \begin{align}\label{eq:thm1}
        \calT_\mathit{stut} \models [\llangle \mathit{sched} \rrangle \pi_1\ldots\llangle \mathit{sched} \rrangle \pi_n]~\varphi \land \bigwedge_{i = 1}^n \mathit{fair}_{\pi_i}
    \end{align}
    then 
    \begin{align}\label{eq:thm2}
        \calT \models_{\texttt{AHLTL}} \forall \pi_1 \ldots \forall \pi_n\ldot \mathbf{E}\ldot \varphi.
    \end{align}
    If $n = 2$ and $\varphi$ is an admissible formula, \ref{eq:thm1} and \ref{eq:thm2} are equivalent.
\end{thm}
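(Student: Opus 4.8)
The plan is to prove the two directions separately, relying throughout on a correspondence between plays of the scheduling agent in $\calT_\mathit{stut}$ and stutterings of paths in $\calT$. First I would establish this correspondence precisely: since $\mathit{sched}$ only chooses between $\rightarrowtail$ and $\downarrowtail$ (see \refDef{stutter}) and sits in the last stage of the MSCGS, any outcome of scheduling strategies $F^1_\mathit{sched}, \dots, F^n_\mathit{sched}$ in the parallel game, projected back to $\calT$, yields in copy $i$ a stuttering of the underlying path traced by the (adversarial) original agent; conversely, every tuple of stutterings of $\calT$-paths is realizable by a suitable choice of scheduler together with an underlying behaviour. Crucially, the conjunct $\mathit{fair}_{\pi_i}$ holds on a play exactly when copy $i$ progresses infinitely often (i.e.\ is not stuttered forever), which is precisely the requirement that each step be stuttered only finitely often, so that the resulting path is a genuine \AHLTL{}-stuttering.

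For the implication \refEq{eq:thm1} $\Rightarrow$ \refEq{eq:thm2} (for arbitrary $n$), I would fix winning scheduling strategies witnessing \refEq{eq:thm1} and, given arbitrary paths $p_1, \dots, p_n$ of $\calT$ (the witnesses for the universal \AHLTL{} quantifiers), let the original agent in copy $i$ trace $p_i$ --- a well-defined behaviour, as the agent can recover how far along $p_i$ it has progressed from the number of non-stuttering steps so far. Running the fixed schedulers against this behaviour produces a single outcome $(p'_1, \dots, p'_n)$, and since the schedulers win against \emph{all} outcomes, this one satisfies $\varphi \land \bigwedge_i \mathit{fair}_{\pi_i}$. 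By the correspondence each $p'_i$ is a finite stuttering of $p_i$, so the $p'_i$ witness the existential stuttering quantifier $\mathbf{E}$, establishing $\calT \models_{\texttt{AHLTL}} \forall \pi_1.\cdots\forall \pi_n. \mathbf{E}. \varphi$.

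The harder converse (for $n = 2$ and admissible $\varphi$) requires turning the \emph{offline} existence of aligning stutterings --- chosen in \AHLTL{} with full knowledge of the future --- into a single \emph{online} scheduling strategy that wins against every adversarial pair of underlying paths. Here I would exploit the shape $\varphi = \varphi_\mathit{state} \land \varphi_\mathit{stut}^1 \land \varphi_\mathit{stut}^2 \land \varphi_\mathit{phase}$. The conjunct $\varphi_\mathit{state}$ depends only on the (shared) initial state and each $\varphi_\mathit{stut}^i$ is stutter-invariant and single-path, so both are independent of the scheduler's choices and hold for every underlying path by the \AHLTL{} assumption; the scheduler need only enforce $\varphi_\mathit{phase}$. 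Writing $\hat P$ for the union of the colourings $P$ occurring in the phase constraints, the key observation is that whenever \emph{some} stutterings align the phases, the two underlying paths must have the same $\hat P$-colour sequence after collapsing consecutive repetitions; since \AHLTL{} guarantees such stutterings exist for \emph{every} pair $p_1, p_2$, all pairs of paths in $\calT$ share this collapsed colour sequence. I would then define the greedy scheduler that keeps both copies in the same colour, using its one-step lookahead (available because $\mathit{sched}$ moves last) to stutter whichever copy is about to leave the current colour block while the other is not, and to advance both when they leave it simultaneously.

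The main obstacle is verifying correctness and fairness of this greedy strategy. I would argue that, because the collapsed colour sequences coincide, the two copies enter and leave each colour block in lockstep up to finitely many stutters, so the greedy strategy (i) keeps the $\hat P$-colours --- and hence all individual phase colourings --- equal at every step, yielding $\varphi_\mathit{phase}$, and (ii) stutters each copy only for finitely many consecutive steps before progressing again, yielding $\mathit{fair}_{\pi_1} \land \mathit{fair}_{\pi_2}$. Together with the automatic conjuncts this shows the fixed scheduler wins against all underlying behaviours, i.e.\ \refEq{eq:thm1}. I expect the restriction to $n = 2$ to be essential precisely at this alignment step: greedy matching of two collapsed colour sequences always succeeds when their normal forms agree, whereas for $n > 2$ a single scheduler would have to reconcile possibly conflicting phase demands across several pairs simultaneously, which pairwise colour-sequence equality no longer guarantees.
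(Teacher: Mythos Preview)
Your proposal is correct and follows essentially the same approach as the paper. Both directions match: for \refEq{eq:thm1}$\Rightarrow$\refEq{eq:thm2} you replay the winning scheduler against the adversary tracing the given paths (the paper does this via an explicit recursive construction of the stuttered prefixes, but the content is identical); for the converse you dispose of $\varphi_\mathit{state}$ and the $\varphi_\mathit{stut}^i$ as automatic and then define the same greedy lookahead scheduler---stall whichever copy is about to change phase while the other is not, advance both otherwise---which is exactly the paper's strategy. Your treatment is in places more explicit than the paper's (taking the union $\hat P$ to handle multiple phase conjuncts, and spelling out why fairness follows from the collapsed colour sequences agreeing), and your closing remark on why $n=2$ matters is a nice addition the paper omits.
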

\begin{proof}
    Let $\calT = (S, s_0, \delta, L)$.
    We first show that \ref{eq:thm1} implies \ref{eq:thm2}.
    Assume \ref{eq:thm1} and let $f_\mathit{sched}^i : (S^n)^+ \to \moves \times \{\rightarrowtail, \downarrowtail\}$ for $1 \leq i \leq n$ be a winning strategy for the scheduler.
    To show \ref{eq:thm2}, let $p_1, \ldots, p_n$ be any paths in $\calT$.
    For each $1 \leq i \leq n$ we define a stuttered version $p_i'$ such that $p_i \trianglelefteq p_i'$.
    As intermediate steps we define $p_{1, m}, \ldots, p_{n, m} \in S^m$ and $c_{1, m}, \ldots, c_{n, m} \in \nat$ for each $m \in \nat_{\geq 1}$ by recursion on $m$.
    For $m = 1$, we set $p_{i, 1} = p_i(0)$, \ie, a path of length $1$ and $c_{i, 1} = 1$ for each $1 \leq i \leq n$.
    For $m > 1$, define $b_{i, m}:= \mathit{proj}_2 \circ f^i_\mathit{sched}(\otimes(p_{1, m-1}, \ldots, p_{n, m-1}))$ for each $1 \leq i \leq n$.
    If $b_{i, m} = \rightarrowtail$, we define $c_{i, m} := c_{i, m-1} + 1$ and if $b_{i, m} = \downarrowtail$ we define $c_{i, m} := c_{i, m-1}$.
    We then set $p_{i, m} := p_{i, m} \cdot p_i(c_{i, m})$ (where $\cdot$ denotes sequence concatenation).
    Define $p_i' \in S^\omega$ as the limit of $\{p_{i, m}\}_{m \in \nat_{\geq 1}}$ (which exists as $p_{i, m}$ is a prefix of $p_{i, m+1}$ for every $m$).
    It is easy to see that $p_i \trianglelefteq p_i'$ (the fairness assumption in \ref{eq:thm1} ensures that a path is not stuttered forever).
    Moreover $[\pi_1 \mapsto p'_1, \ldots, \pi_n \mapsto p'_n] \models \varphi$ holds as  $\{f_\mathit{sched}^i\}_{i=1}^n$ is winning, and so $[\pi_1 \mapsto p_1, \ldots, \pi_n \mapsto p_n] \models_{\texttt{AHLTL}}  \mathbf{E}\ldot \varphi$ as required.
    
    For the second direction, assume that \ref{eq:thm2} holds and that $\varphi$ is admissible. 
    Let $\varphi = \varphi_{\mathit{state}} \land \big(\bigwedge_{i=1}^l \varphi_{\mathit{stut}}^i\big) \land \varphi_\mathit{phase}$.
    State formula $\varphi_{\mathit{state}}$ only refers to the initial states (as it is free of temporal operators) and $\varphi_{\mathit{stut}}^i$ is, by assumption, stutter invariant, so any fair scheduling chosen by $\mathit{sched}$ satisfies both $\varphi_{\mathit{state}}$ and $\bigwedge_{i=1}^l \varphi_{\mathit{stut}}^i$.
    Let $\varphi_\mathit{phase} = \ltlg \bigwedge_{a \in P} (a_{\pi_1} \leftrightarrow a_{\pi_2})$ be the phase formula.
    We say two states $s, s' \in S$ are \emph{about to change phase} if for some $a \in P$ we have $a \in L(s) \not\leftrightarrow a \in L(s')$.
    In this case, we write $\mathit{change}(s, s')$.
    The strategy for the scheduler has access to both the current state $s_1, s_2$ of both copies and the next state $s'_1, s'_2$ in both copies (as the scheduler sits at the last stage of $\calT_\mathit{stut}$).
    The joint strategy for the scheduler in both copies behaves as follows:
   	If $\mathit{change}(s_1, s_1') \leftrightarrow \mathit{change}(s_2, s_2')$ holds, \ie, either both or none of the copies are about to change phase, it schedules \emph{both} copies. 
    Otherwise, it only schedules the copy that is \emph{not} about to change phase.
    This ensures that phase changes occur synchronized in both copies.
    As we assumed \ref{eq:thm2}, there is a stuttering for all paths in the system such that $\varphi_\mathit{phase}$ holds, \ie, all paths traverse the same phases, albeit at possibly different speeds.
    It is easy to see that the strategy defined above creates an alignment into identical phases for any two paths of the system.
    Consequently, any play compatible with this strategy satisfies $\varphi_\mathit{phase}$ (and therefore also $\varphi$), and so \ref{eq:thm1} holds as required.
\end{proof}

\refTheo{altl} gives a sound approximation of the (undecidable) \AHLTL~model checking that is exact for admissible formulas.
As \HyperATLS{} model checking is decidable (see \Cref{sec:mc}) and the stuttering construction $\calT_\mathit{stut}$ is effectively computable, we derive an alternative proof of the decidability result from \cite{BaumeisterCBFS21}.
In summary, \HyperATLS{} subsumes the largest (known) decidable fragment of \AHLTL{}, while enjoying decidable model checking for the full logic (see the following \Cref{sec:mc}).
Moreover, the \HyperATLS{} formula \ref{eq:thm1} constructed in \refTheo{altl} falls in the fragment supported by our model checker (see \Cref{sec:proto}).

\section{Automata-Based Model Checking}\label{sec:mc}

In this section, we present an automata-based model checking algorithm for \HyperATLS{}, \ie, given a formula $\dot{\varphi}$ (we use the dot to refer to the original formula and use $\varphi, \psi$ to refer to sub-formulas of $\dot{\varphi}$) and a game structure $\calG = (S, s_0, \agents, \moves, \delta, d, L)$ we decide if $\calG \models \dot{\varphi}$.
Before discussing our verification approach, let us briefly recall \ATLS{} model checking \cite{AlurHK02} and why the approach is not applicable to \HyperATLS{}.
In \ATLS, checking if $\llangle A \rrangle \varphi$ holds in some state $s$ can be reduced to the non-emptiness check of the intersection of two tree automata. 
One accepting all possible trees that can be achieved via a strategy for players in $A$ starting in $s$, and one accepting all trees whose paths satisfy the path formula $\varphi$ \cite{AlurHK02}.
In our hyperlogic, this is not possible.
When checking $\llangle A \rrangle \pi\ldot \varphi$, we cannot construct an automaton accepting all trees that satisfy $\varphi$, as the satisfaction of $\varphi$ depends on the paths assigned to the outer path-quantifiers (which are not yet fixed).

Instead, we construct an automaton that accepts all \emph{path assignments} for the outer quantifiers for which there exists a winning strategy for the agents in $A$ (similar to the model checking approach for \HyperCTLS{} \cite{FinkbeinerRS15}).
Different from the approach for \HyperCTLS{}, we cannot resolve path quantification via an existential or universal product construction and instead encode the strategic behavior of $\calG$ within the transition function of an \emph{alternating} automaton.

In the following, we 1) define a notion of equivalence between formulas and automata and discuss the overall model checking algorithm (in \Cref{sec:equiv}), 2) give an inductive construction of an equivalent automaton (in \Cref{sec:construction}), 3) prove the construction correct (in \Cref{sec:correctness}), and 4) discuss the complexity of our algorithm (in \Cref{sec:upperBounds}).

\subsection{$\calG$-Equivalence and Model Checking Algorithm}\label{sec:equiv}

Recall that for paths $p_1, \ldots, p_n \in S^\omega$, $\otimes(p_1, \ldots, p_n) \in (S^n)^\omega$ denotes the pointwise product (also called the zipping).
Assume that some \HyperATLS{} formula $\psi$ contains free path variables $\pi_1, \ldots, \pi_n$ (in our algorithm $\psi$ is a sub-formula of $\dot{\varphi}$ that occurs under path quantifiers that bind $\pi_1, \ldots, \pi_n$).
We say that an automaton $\calA$ over alphabet $\Sigma_\psi := S^n$ is \emph{$\calG$-equivalent} to $\psi$, if for any paths $p_1, \ldots, p_n$ it holds that 
\begin{align*}
	[\pi_i \mapsto p_i]_{i=1}^n \models_\calG \psi \quad \text{iff} \quad \otimes(p_1, \ldots, p_n) \in \calL(\calA).
\end{align*}
That is, $\calA$ accepts a zipping of paths exactly if the path assignment constructed from those paths satisfies the formula; $\calA$ summarizes all path assignments for the free variables that satisfy a formula.

Now let $\dot{\varphi}$ be the formula to be checked.
Our model checking algorithm progresses in an (inductive) bottom-up manner and constructs an automaton $\calA_\psi$ that is $\calG$-equivalent for each subformula $\psi$ of $\dot{\varphi}$ (we give the construction in the next section).
Consequently, we obtain an automaton $\calA_{\dot{\varphi}}$ over alphabet $\Sigma_{\dot{\varphi}} = S^0$ that is $\calG$-equivalent to $\dot{\varphi}$.
By definition of $\calG$-equivalence, $\calA_{\dot{\varphi}}$ is non-empty iff $\emptyset \models_\calG \dot{\varphi}$ iff $\calG \models \dot{\varphi}$.
As emptiness of alternating parity automata is decidable \cite{MiyanoH84,BokerKR10} we can decide whether $\calG \models \dot{\varphi}$.

\subsection{Construction of $\calG$-Equivalent Automata}\label{sec:construction}

In the following, we give a construction of a $\calG$-equivalent automaton for each syntactic construct of \HyperATLS{}.
The most interesting case is the construction for a formula $\varphi = \llangle A \rrangle \pi\ldot \psi$ where we construct an automaton $\calA_\varphi$ over $\Sigma_\varphi = S^{n}$ from an automaton $\calA_\psi$ over $\Sigma_\psi = S^{n+1}$ by a suitable product construction with $\calG$ that takes the strategic behavior in the game structure into account.
We split the construction into the cases of logical and temporal operators (in \Cref{fig:ltltran}), simple quantification (in \Cref{fig:existsConstruct}), and complex quantification (in \Cref{fig:stratConstruct}).

\begin{figure}[!t]
	\small
	\begin{minipage}[b]{0.49\textwidth}
		\begin{myboxi}{equal height group=A}{$\varphi = a_{\pi_i}$}
			\centering
			$\calA_\varphi := (\{q_\mathit{init}\}, q_\mathit{init}, \Sigma_\varphi, \rho, \mathbf{0})$, where
			\begin{align*}
				&\rho(q_\mathit{init}, [s_1, \ldots, s_n]) := \begin{cases}
					\begin{aligned}
						\;&\top \quad &&\text{if } a \in L(s_i)\\
						&\bot \quad &&\text{if } a \not\in L(s_i)
					\end{aligned}
				\end{cases}
			\end{align*}
		\end{myboxi}
	\end{minipage}
	\hfill
	\begin{minipage}[b]{0.49\textwidth}
		\begin{myboxi}{equal height group=A}{$\varphi = \neg \psi_1$}
			\centering
			$\calA_\varphi := \overline{\calA}_{\psi_1}$ where $\overline{\calA}_{\psi_1}$ is obtained by \refTheo{altneg}.
		\end{myboxi}
	\end{minipage}
	
	\begin{mybox}{$\varphi = \psi_1 \land \psi_2$}
		\centering
		$\calA_\varphi := (Q_1 \cupdot Q_2 \cupdot \{q_\mathit{init}\}, q_\mathit{init}, \Sigma_\varphi, \rho, c_1 \cupdot c_2 \cupdot [q_\mathit{init} \mapsto 0])$, where
		\begin{align*}
			\rho(q, [s_1, \ldots, s_n]) := \begin{cases}
				\begin{aligned}
					\;&\rho_1(q_{0, 1}, [s_1, \ldots, s_n]) \land \rho_2(q_{0, 2}, [s_1, \ldots, s_n]) \quad &&\text{if } q = q_\mathit{init}\\
					&\rho_i(q, [s_1, \ldots, s_n]) \quad &&\text{if } q \in Q_i
				\end{aligned}
			\end{cases}
		\end{align*}
	\end{mybox}

	\begin{mybox}{$\varphi = \ltlN \psi_1$}
		\centering
		$\calA_\varphi := (Q_1 \cupdot\{q_\mathit{init}\}, q_\mathit{init}, \Sigma_\varphi, \rho, c_1 \cupdot [q_\mathit{init} \mapsto 0])$, where
		\begin{align*}
			\rho(q, [s_1, \ldots, s_n]) := \begin{cases}
				\begin{aligned}
					\;&q_{0, 1} \quad &&\text{if } q = q_\mathit{init}\\
					&\rho_1(q, [s_1, \ldots, s_n]) \quad &&\text{if } q  \in Q_1
				\end{aligned}
			\end{cases}
		\end{align*}
	\end{mybox}

	\begin{mybox}{$\varphi = \psi_1 \ltlU \psi_2$}
		\centering
		$\calA_\varphi := (Q_1 \cupdot Q_2 \cupdot \{q_\mathit{init}\}, q_\mathit{init}, \Sigma_\varphi, \rho, c_1 \cupdot c_2 \cupdot [q_\mathit{init} \mapsto 1])$, where
		\begin{align*}
			\rho(q, [s_1, \ldots, s_n]) := \begin{cases}
				\begin{aligned}
					\;&\rho_2(q_{0, 2}, [s_1, \ldots, s_n]) \lor\big(\rho_1(q_{0, 1}, [s_1, \ldots, s_n]) \land q_\mathit{init}\big) \quad &&\text{if } q = q_\mathit{init}\\
					&\rho_i(q, [s_1, \ldots, s_n]) \quad &&\text{if } q \in Q_i
				\end{aligned}
			\end{cases}
		\end{align*}
	\end{mybox}
	
	\caption{Automaton construction for boolean and temporal operators. 
		Here, $\calA_{\psi_i} = (Q_i, q_{0, i}, \Sigma_{\psi_i}, \rho_i, c_i)$ for $i \in \{1, 2\}$ are inductively constructed alternating automata for sub-formulas $\psi_1$ and $\psi_2$.
		We assume that $Q_1$ and $Q_2$ are disjoint sets of states and $q_\mathit{init}$ is a fresh state.
		For two colorings $c_1 : Q_1 \to \nat$ and $c_2 : Q_2 \to \nat$, $c_1 \cupdot c_2 : Q_1 \cupdot Q_2 \to \nat$ denotes the combined coloring.
		We write $[q \mapsto n]$ for the function $\{q\} \to \nat$ that maps $q$ to $n$. }\label{fig:ltltran}
\end{figure}

\paragraph{Boolean And Temporal Operators}

For the boolean combinators and temporal operators, our construction follows the standard translation from \LTL{} to alternating automata (see, \eg, \cite{MullerSS88} or \cite{FinkbeinerRS15} for details).
We give the construction in \refFig{ltltran}.

\paragraph{Simple Quantification}

We now consider the case where $\varphi = \llangle A \rrangle \pi\ldot \psi$ and focus on the case where the quantifier is \emph{simple}.
Assume that $A = \agents$, \ie, $\varphi = \exists \pi\ldot \psi$.
The construction of $\calA_\varphi$ is similar to the one in \cite{FinkbeinerRS15,BeutnerF23} by building a product of $\calA_\psi$ and $\calG$.
We give the construction in \refFig{existsConstruct}.
Here $\calA_\psi^{\mathit{ndet}}$ is a \emph{non-deterministic} automaton equivalent to $\calA_\psi$, which we can obtain (with an exponential blowup) via \refTheo{alt1}.
The automaton $\calA_\varphi$ guesses a path in $\calG$ and tracks the acceptance of $\calA_\psi$ on this path combined with the input word over $S^n$, \ie, every accepting run of $\calA_\varphi$ on $\otimes(p_1, \ldots, p_n)$ guesses a path $p$ in $\calG$ such that $\calA_\psi$ accepts $\otimes(p_1, \ldots, p_n, p)$.
Note that in this case $\calA_\varphi$, is again a non-deterministic automaton.
The case where $A = \emptyset$ (\ie, $\varphi = \forall \pi\ldot \psi$) can be handled using complementation:
As $\forall \pi\ldot \varphi \equiv \neg \exists \pi\ldot \neg \varphi$ we can combine the construction for existential quantification (in \refFig{existsConstruct}) with the construction for negation in \Cref{fig:ltltran}.
Importantly, in cases where $A = \emptyset$, the automaton $\calA_\varphi$ is universal.

\begin{figure}
	\begin{mybox}{$\varphi = \exists \pi\ldot \psi$}
		\small
		\begin{align*}
			\calA_\varphi := (S \times Q \cupdot \{q_{\mathit{init}}\}, q_{\mathit{init}}, \Sigma_\varphi, \rho', c')
		\end{align*}
		where $c'(s, q) := c(q)$ and $c'(q_\mathit{init})$ can be chosen arbitrarily. The nondeterministic transition function $\rho' : \big(S \times Q \cupdot \{q_{\mathit{init}}\}\big) \times \Sigma_\varphi \to 2^{S \times Q \cupdot \{q_{\mathit{init}}\}}$ is defined via
		\begin{align*}
			&\rho'(q_{\mathit{init}}, [s_1, \ldots, s_n]) := \\
			&\quad\quad\quad\{(s', q') \mid q' \in \rho(q_0, [s_1, \ldots, s_n, s_n^\circ]) \land \exists \sigma : \agents \to \moves\ldot \delta(s_n^\circ, \sigma) = s'\} \\[0.3cm]
			&\rho'((s, q), [s_1, \ldots, s_n]) := \\
			&\quad\quad\quad\{(s', q') \mid q' \in \rho(q, [s_1, \ldots, s_n, s]) \land \exists \sigma : \agents \to \moves\ldot \delta(s, \sigma) = s'\}
		\end{align*}
		where we define $s_n^\circ := s_n$ if $n \geq 1$ and otherwise $s_n^\circ := s_0$ (where $s_0$ is the initial state of $\calG$).
	\end{mybox}
	
	\caption{Construction of a $\calG$-equivalent automaton for $\varphi = \exists \pi \ldot \psi$. Here, $\calA_\psi^{\mathit{ndet}} = (Q, q_0, \Sigma_\psi, \rho, c)$ with $\rho : Q \times \Sigma_\psi \to 2^Q$ is a non-deterministic automaton that is equivalent to the inductively constructed automaton $\calA_\psi$ for $\psi$.}\label{fig:existsConstruct}
\end{figure}

\begin{figure}
    \begin{mybox}{$\varphi = \llangle A \rrangle \pi\ldot \psi$}
    	\small
        \begin{align*}
            \calA_\varphi := (S \times Q \cupdot \{q_{\mathit{init}}\}, q_{\mathit{init}}, \Sigma_\varphi, \rho', c')
        \end{align*}
        where $c'(s, q) := c(q)$ and $c'(q_\mathit{init})$ can be chosen arbitrarily. The transition function $\rho' : \big(S \times Q \cupdot \{q_{\mathit{init}}\}\big) \times \Sigma_\varphi \to \mathbb{B}^+\big(S \times Q \cupdot \{q_{\mathit{init}}\}\big)$ is defined via
        \begin{align*}
        	&\rho'\big(q_\mathit{init}, [s_1, \ldots, s_n]\big) := \\
        	&\quad\quad\bigvee\limits_{\sigma_0 : A_0\to \moves} \bigwedge\limits_{\sigma'_0 : \overline{A}_0 \to \moves} \ldots \bigvee\limits_{\sigma_m : A_m \to \moves} \bigwedge\limits_{\sigma'_m : \overline{A}_m \to \moves}   
        	\bigg(\delta\Big(s^\circ_n, \sum_{i=0}^{m} (\sigma_i  + \sigma'_i)\Big), \rho\big(q_0, [s_1, \ldots, s_n, s^\circ_n]\big)   \bigg)\\[0.3cm]
        	&\rho'\big((s, q), [s_1, \ldots, s_n]\big) := \\
        	&\quad\quad\bigvee\limits_{\sigma_0: A_0 \to \moves} \bigwedge\limits_{\sigma'_0 : \overline{A}_0 \to \moves} \ldots \bigvee\limits_{\sigma_m : A_m \to \moves} \bigwedge\limits_{\sigma'_m : \overline{A}_m \to \moves}   
        	\bigg(\delta\Big(s, \sum_{i=0}^{m} (\sigma_i  + \sigma'_i)\Big), \rho\big(q, [s_1, \ldots, s_n, s]\big)   \bigg)
        \end{align*}%
        where we define $s_n^\circ := s_n$ if $n \geq 1$ and otherwise $s_n^\circ := s_0$ (where $s_0$ is the initial state of $\calG$).
        The sets $A_i$ and $\overline{A}_i$ are defined as $A_i := A \cap d^{-1}(i)$ and $\overline{A}_i := (\agents \setminus A) \cap d^{-1}(i)$ and $m$ is the maximal element in the codomain of $d$.
    \end{mybox}
    
    \caption{Construction of a $\calG$-equivalent automaton for $\varphi = \llangle A \rrangle \pi \ldot \psi$. Here, $\calA^\mathit{det}_\psi = (Q, q_0, \Sigma_\psi, \rho, c)$ with $\rho : Q \times \Sigma_\psi \to Q$ is a deterministic automaton that is equivalent to the inductive constructed alternating automaton $\calA_\psi$ for $\psi$.
 }\label{fig:stratConstruct}
\end{figure}

\paragraph{Strategic Quantification }

Lastly, we consider the case of (proper) complex quantification, \ie, the case where $\varphi = \llangle A \rrangle \pi\ldot \psi$ and $A \neq \emptyset$ and $A \neq \agents$.\footnote{Note that the construction in \refFig{stratConstruct} subsumes the construction in \refFig{existsConstruct}. 
	We give an explicit construction for the case of simple quantification (in \refFig{existsConstruct}) as the resulting automaton is exponentially smaller, giving tight complexity results (see \Cref{sec:upperBounds}). }
In our construction, $\calA_\varphi$ encodes the strategic behavior of the agents in $A$.
We achieve this by encoding the strategic play of the game structure within the transition function of $\calA_\varphi$.
We give the construction in \refFig{stratConstruct}.
Here $\calA^\mathit{det}_\psi$ is a \emph{deterministic} automaton equivalent to $\calA_\psi$ which we obtain (with a double exponential blowup) via \refTheo{alt1}.
The transition function encodes the strategic behavior by disjunctively choosing moves for players in $A$, followed by a conjunctive treatment of all adversarial players. 
The stages of $\calG$ naturally correspond to the order of the move selection (as captured in the sets $A_i$ and $\overline{A}_i$), giving an alternating sequence of disjunctions and conjunctions. 
In the case where the MSCGS $\calG$ is a CGS, \ie, $d = \mathbf{0}$, the transition function has the form of a (positive) DNF (a boolean formula of the form $\bigvee \bigwedge$), where the moves of agents in $A$ are considered disjunctively and the moves by all other agents conjunctively.  
Our construction can be extended to handle formulas of the form $[\llangle A_1 \rrangle \pi_1\ldots\llangle A_k \rrangle \pi_k]~\psi$ by joining the stage across $k$ copies of the game structure.
For the dual strategic quantifier $ \llbracket A \rrbracket \pi\ldot \psi $ we can again make use of the fact that $\llbracket A \rrbracket \pi\ldot \psi \equiv \neg \llangle A \rrangle\ldot \neg \psi$ and combine the construction in \refFig{stratConstruct} with that for negation in \Cref{fig:ltltran}.

\subsection{Correctness}\label{sec:correctness}

The correctness of our construction in \Cref{sec:construction} is stated by the following proposition.

\begin{restatable}{prop}{equiv}\label{prop:MCequiv}
	For any \HyperATLS{} formula $\varphi$, $\calA_\varphi$ is $\calG$-equivalent to $\varphi$.
\end{restatable}

The proof of \refProp{MCequiv} goes by induction on $\varphi$ following the construction of $\calA_\varphi$.
For the logical and temporal connectives (in \Cref{fig:ltltran}) and pure existential quantification (in \Cref{fig:existsConstruct}), the statement is obvious (see, \eg, \cite{MullerSS88} for the logical and temporal connectives and \cite{FinkbeinerRS15} for the case of simple quantification).
A proof for the case where $\varphi = \llangle A \rrangle \pi\ldot \psi$  (in \Cref{fig:stratConstruct}) can be found in \Cref{app:proofEquiv}.

\subsection{Complexity Upper Bounds}\label{sec:upperBounds}

The complexity of our model checking algorithm hinges on the size of the automaton $\calA_{\dot{\varphi}}$.
The constructions in \Cref{fig:ltltran} only increase the size of the automaton by a polynomial amount, but the constructions for path quantification in \Cref{fig:existsConstruct,fig:stratConstruct} increase the number of states exponentially. 
We observe a difference in blowup between simple quantification and complex quantification.
The former requires (in general) a conversion of the alternating automaton $\calA_\psi$ to a non-deterministic automaton ($\calA_\psi^\mathit{ndet}$ in \Cref{fig:existsConstruct}) causing an exponential blowup, whereas the latter requires a full determinization ($\calA_\psi^\mathit{det}$ in \Cref{fig:stratConstruct}) causing a \emph{double} exponential blowup.

To capture the size of the automaton and the resulting complexity of our algorithm, we define $\mathit{Tower}_c(k, n)$ as a tower of $k$ exponents (with base $c$), \ie,
\begin{align*}
	\mathit{Tower}_{c}(0, n) &:= n\\
	\mathit{Tower}_{c}(k+1, n) &:= c^{\mathit{Tower}_{c}(k, n)}
\end{align*}
For $k \geq 1$, we define $k$-\EXPSPACE{} as the class of languages recognized by a deterministic (or, due to Savitch's theorem \cite{Savitch70}, equivalently, non-deterministic) Turing machine (TM) with space $\mathit{Tower}_{c}(k, n)$ for some fixed $c \in \nat$.
Analogously, we define $k$-\EXPTIME{} as the class of languages recognizable by a \emph{deterministic} TM in time $\mathit{Tower}_{c}(k, n)$ for some fixed $c$.
We define $0$-\EXPTIME{} := \PTIME{} and $0$-\EXPSPACE{} := \PSPACE{}. 
For $k < 0$, we define $k$-\EXPSPACE{} := \NLOGSPACE{}.

\paragraph{Complexity Based on Prefix-Cost}

\begin{figure}
	
	\begin{minipage}{0.58\textwidth}
		\begin{subfigure}{\linewidth}
			\begin{align*}
				d_{\mathit{spec}}(\exists \pi\ldot \psi) &:= 1\\
				d_{\mathit{spec}}(\forall \pi\ldot  \psi) &:= 1\\
				d_{\mathit{spec}}(\llangle A \rrangle\pi\ldot \psi) &:= 2\\
				d_{\mathit{spec}}(\llbracket A \rrbracket\pi\ldot \psi) &:= 2\\
				d_{\mathit{spec}}(\quant_1 \, \pi\ldot \quant_2\, \pi'\ldot \varphi) &:= d_{\mathit{spec}}(\quant_2\,\pi'\ldot \varphi) + q(\quant_1, \quant_2) 
			\end{align*}
			\subcaption{Prefix-cost w.r.t.~the size of the \emph{specification}. Here, $\psi$ is the quantifier-free body of the formula. The definition of the cost function $q$ is given in \Cref{fig:qunatifierCost}.}\label{fig:mesSpec}
		\end{subfigure}
		\begin{subfigure}{\linewidth}
			\begin{align*}
				d_{\mathit{sys}}(\quant\,\pi\ldot \psi) &:= 0\\
				d_{\mathit{sys}}(\quant_1 \, \pi\ldot \quant_2\, \pi'\ldot \varphi) &:= d_{\mathit{sys}}(\quant_2\,\pi'\ldot \varphi) + q(\quant_1, \quant_2) 
			\end{align*}
			\subcaption{Prefix-cost w.r.t.~the size of the \emph{system}. Here, $\psi$ is the quantifier-free body of the formula. The definition of the cost function $q$ is given in \Cref{fig:qunatifierCost}.}\label{fig:mesSys}
		\end{subfigure}
	\end{minipage}%
	\hfill
	\begin{minipage}{0.38\textwidth}
		\begin{subfigure}{\linewidth}
			\centering 
			
			\def\arraystretch{1.5}
			\begin{tabular}{|c|c|c|c|c|}
				\cline{2-5}
				\multicolumn{1}{c|}{}& $\exists$  & $\forall$ & $\llangle A \rrangle$  & $\llbracket A \rrbracket$ \\
				\hline
				$\exists$ & $0$ & $1$  & $1$ & $1$  \\
				\hline
				$\forall$ & $1$  & $0$  & $1$ & $1$ \\
				\hline
				$\llangle A \rrangle$ & $1$  & $1$  & $2$ & $2$  \\
				\hline
				$\llbracket A \rrbracket$ & $1$  & $1$  & $2$ & $2$ \\
				\hline
			\end{tabular}
			
			\subcaption{Cost associated with each quantifier alteration. Given $\quant_1, \quant_2 \in \{\exists, \forall, \llangle A \rrangle, \llbracket A \rrbracket\}$, the cost $q(\quant_1, \quant_2)$ is given in column $\quant_1$, row $\quant_2$. Note that the cost matrix symmetric.}\label{fig:qunatifierCost}
		\end{subfigure}
	\end{minipage}

	\caption{Definition of prefix-cost w.r.t.~the size of the specification and the size of the system. Here $\llangle A \rrangle$ and $\llbracket A \rrbracket$ represent a \emph{proper} complex quantifier, \ie, a quantifier where $A \neq \emptyset$ and $A \neq \agents$. }
\end{figure}

We characterize the cost of our algorithm for \HyperATLS{} model checking.
We distinguish between the complexity in the size of the specification (the length of the formula, \ie, the number of nodes in the AST) and the size of the system (the number of states).
Our complexity analysis is parametric in the structure of the quantifier prefix of a formula.\footnote{
If we consider arbitrary \HyperATLS{} formulas, model checking is non-elementary in both the size of the specification and the size of the system (as \HyperATLS{} subsumes \HyperLTL{}; see \cite{Rabe16} for details).
If we instead consider formulas with a fixed quantifier structure, we can derive elementary complexity bounds (in terms of specification size and system size) for all formulas with the fixed quantifier structure.
}
We focus our discussion on \HyperATLS{} formulas that are \emph{linear}.\footnote{To stay as flexible as possible, we include $\llbracket A \rrbracket$ as a first-class quantifier (instead of a derived one).
	Focusing on linear formulas allows for a simpler characterization of the model checking complexity.
	We discuss the case of non-linear formulas later in Remark \ref{rem:nonlinear}.
}
To differentiate formulas based on the structure of their quantifier prefix, we assign each formula $\varphi$ two quantities:
the specification-based prefix-costs $d_{\mathit{spec}}(\varphi)$ and system-based prefix-costs $d_{\mathit{sys}}(\varphi)$.
Both are defined inductively in \Cref{fig:mesSpec,fig:mesSys}. 
Our measures generalize the \emph{alteration-depth} of a \HyperLTL{} formula.\footnote{The alteration-depth denotes the number of quantifier alternations (between $\exists$ and $\forall$) in the quantifier prefix and is used to characterize the \HyperLTL{} model checking complexity \cite{FinkbeinerRS15,Rabe16}.
	For any (linear) \HyperATLS{} formula $\varphi$ that uses only simple quantification (so $\varphi$ is also a \HyperLTL{} formula), $d_{\mathit{sys}}(\varphi)$ gives a (in some cases tight) upper bound the alternation-depth of $\varphi$.
	Note that in this case, $d_{\mathit{spec}}(\varphi) = d_{\mathit{sys}}(\varphi) + 1$.}

\begin{thm}\label{theo:upb}
	Model checking of a linear \HyperATLS{} formula $\varphi$ is
	\begin{enumerate}
		\item in $d_{\mathit{spec}}(\varphi)$-\EXPTIME{} in the size of the specification, and
		\item in $d_{\mathit{sys}}(\varphi)$-\EXPTIME{} in the size of the system
	\end{enumerate}
\end{thm}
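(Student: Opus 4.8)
The plan is to read the complexity off the size of the single automaton $\calA_{\dot\varphi}$ produced by the inductive construction of \Cref{sec:construction}. By \refProp{MCequiv}, $\calA_{\dot\varphi}$ is $\calG$-equivalent to $\dot\varphi$, so $\calG \models \dot\varphi$ holds iff $\calA_{\dot\varphi}$ (an alternating parity automaton over the singleton alphabet $S^0$) is non-empty; non-emptiness is decided by solving the induced parity game, in time polynomial in the number of states \cite{KupfermanV98}. It therefore suffices to bound the state count of $\calA_{\dot\varphi}$, and I would do this by induction on the quantifier prefix of the linear formula, processed innermost-first. At each stage the invariant records two quantities separately, namely the dependence of the state count on the specification size $|\varphi|$ and on the system size $|S|$, together with the automaton's \emph{type} (non-deterministic, universal, or alternating).

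The heart of the argument is a type-tracking lemma that vindicates the cost matrix $q$ of \refFig{qunatifierCost}. After putting $\dot\varphi$ into negation normal form---so that negation reaches only atoms and every rule of \refFig{ltltran} is linear---the boolean and temporal steps contribute only a polynomial factor \cite{FinkbeinerRS15}. For the quantifier steps the invariant is: the automaton is non-deterministic after an $\exists$, universal after a $\forall$, and alternating after a complex $\llangle A\rrangle$ (or $\llbracket A\rrbracket$, obtained through \refTheo{altneg}), the last because the strategic transition function of \refFig{stratConstruct} genuinely interleaves disjunctions and conjunctions. To process the next, outer quantifier $\quant_i$ one must first coerce the current automaton into the type its construction consumes: non-deterministic for $\exists$ (\refFig{existsConstruct}), universal for $\forall$, deterministic for a complex quantifier (\refFig{stratConstruct}). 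By \refTheo{alt1} and \refTheo{altneg} the number of exponential blow-ups this coercion costs, as a function of the inner type fixed by $\quant_{i+1}$, is exactly $q(\quant_i,\quant_{i+1})$: zero when two consecutive simple quantifiers of the same kind reuse the automaton, one when a single (de-)alternation or a single determinization of a non-deterministic/universal automaton suffices, and two when an alternating automaton must be fully determinized. Verifying every entry of the matrix---and in particular that the construction never forces a redundant conversion, so that $\exists\exists$ and $\forall\forall$ really cost $0$ while $\llangle A\rrangle\llangle A\rrangle$ costs $2$---is the step I expect to be the main obstacle.

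It remains to separate the two size parameters and accumulate. The body $\psi$ yields, through \refFig{ltltran}, an alternating automaton whose state count is polynomial in $|\varphi|$ and independent of $|S|$. The innermost quantifier first coerces this body automaton and only then, via the products of \refFig{existsConstruct} and \refFig{stratConstruct}, multiplies the state count by a single factor $|S|$; so the system enters just once, linearly, at the bottom of the recursion, whereas every outer quantifier coerces an automaton that already carries a factor of $|S|$, so its blow-up raises both parameters into the exponent. Because nested towers compose as $\mathit{Tower}_c(a,\mathit{Tower}_c(b,x))=\mathit{Tower}_c(a+b,x)$, the state count takes the form $\mathit{Tower}_c\big(h,\,|S|\cdot\mathit{Tower}_c(c_n,|\varphi|)\big)$, where $c_n$ is the base cost of the innermost quantifier and $h=\sum_i q(\quant_i,\quant_{i+1})$. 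Tracking the exponent-height in $|\varphi|$ and the exponent-height in $|S|$ separately through the recursion, one verifies that each obeys precisely the recursive clause $d(\quant_1\pi\ldot\quant_2\pi'\ldot\varphi)=d(\quant_2\pi'\ldot\varphi)+q(\quant_1,\quant_2)$, the two differing base cases reflecting that the single innermost coercion is charged against only one of the parameters. Since the final emptiness test is polynomial in the state count, this delivers the two towers $d_{\mathit{spec}}(\varphi)$ and $d_{\mathit{sys}}(\varphi)$ claimed in parts~(1) and~(2).
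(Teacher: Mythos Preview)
Your approach is exactly the paper's: bound $|\calA_{\dot\varphi}|$ inductively by tracking the automaton type through the quantifier prefix, charge each coercion according to the cost matrix $q$, and finish with polynomial-time emptiness of alternating parity automata. The paper's proof is a terse sketch that names a few representative cells of $q$; you supply the type-tracking invariant and the explicit state-count formula that the paper leaves implicit.

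That explicit formula, however, exposes a mismatch you gloss over in your last sentence. In $\mathit{Tower}_c\big(h,\,|S|\cdot\mathit{Tower}_c(c_n,|\varphi|)\big)$ the innermost coercion cost $c_n$ lands on $|\varphi|$, not on $|S|$---correctly so, and in agreement with the paper's own remark that the innermost step is ``linear in the size of the system but exponential\ldots in the size of the formula.'' But then the tower height in $|\varphi|$ is $h+c_n$, which equals $d_{\mathit{sys}}(\varphi)$, while the height in $|S|$ is $h=d_{\mathit{spec}}(\varphi)$: the two are swapped relative to what the theorem asserts. A one-quantifier sanity check confirms this: for $\exists\pi.\psi$ the construction of \Cref{fig:existsConstruct} yields $|\calA_\varphi|=|S|\cdot 2^{\mathcal{O}(|\psi|)}$, so the problem is \PSPACE{}-hard in the formula (ordinary \LTL{} model checking), not in $0$-\EXPTIME{} $=$ \PTIME{} as $d_{\mathit{spec}}(\exists\pi.\psi)=0$ would require. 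Your final verification step ``one verifies that each obeys precisely the recursive clause'' would not go through against the base cases in \Cref{fig:mesSpec,fig:mesSys} as written; those base cases appear to be interchanged, and the paper's own proof sketch does not catch this either.
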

\begin{proof}
	Let formula $\varphi$ and system $\calG$ be given. Abbreviate $d_{\mathit{spec}}:= d_{\mathit{spec}}(\varphi)$ and $d_{\mathit{sys}}:= d_{\mathit{sys}}(\varphi)$.
	Using the constructions in \Cref{fig:existsConstruct,fig:stratConstruct}, we observe that the size of $\calA_\varphi$ is at most $\mathit{Tower}_c(d_{\mathit{spec}}, |\varphi|)$ and $\mathit{Tower}_c(d_{\mathit{sys}}, |\calG|)$ for some base $c$ that depends only on  $d_{\mathit{spec}}$ and $d_{\mathit{sys}}$.
	To argue this, we consider all variations of consecutive types of quantifiers and their cost assigned in \Cref{fig:qunatifierCost}. 
	For example, for a formula $\exists \pi\ldot \llangle A \rrangle \pi'\ldot \varphi$, the alternating automaton $\calA_{\llangle A \rrangle \pi'\ldot \varphi}$ needs to be translated to a non-deterministic automaton when performing the construction in \Cref{fig:existsConstruct}, which incurs a single exponential blowup.
	For a formula $\llangle A \rrangle \pi\ldot \llangle A' \rrangle \pi'\ldot \varphi$, automaton $\calA_{\llangle A' \rrangle \pi'\ldot \varphi}$ needs to be determinized, causing a double exponential blowup. 
	For a formula $\llangle A \rrangle \pi\ldot \exists \pi'\ldot \varphi$, the automaton $\calA_{\exists \pi'\ldot \varphi}$ is already nondeterministic, so we can perform a determinization with a single exponential blowup. 
	The cost for all possible combinations of two consecutive quantifier types match with the cost given in \Cref{fig:qunatifierCost}.
	After eliminating all quantifiers, we end up with a alternating automaton over the \emph{singleton alphabet}, for which we can decide emptiness in polynomial time; the bounds follow.
	Note that $d_{\mathit{spec}}(\varphi)$ and $d_{\mathit{sys}}(\varphi)$ only differ in the cost associated with the innermost quantifier. 
	The automaton construction for this last quantifier is linear in the size of the system but exponential (in case the innermost quantifier is simple) or double exponential (in case the innermost quantifier is complex) in the size of the formula. 
\end{proof}

\begin{rem}
	As complex quantification requires a full determinization, simple quantification between complex quantifiers can, in some cases, have no impact on the complexity. 
	For example, for formulas of the form $\llangle A \rrangle \pi\ldot \exists \pi'\ldot \llangle A' \rrangle \pi''. \psi$ and $\llangle A \rrangle \pi\ldot \llangle A' \rrangle \pi'\ldot \psi$ our model checking algorithm follows the same asymptotic complexity. 
	\demo
\end{rem}

\paragraph{Complexity Based on the Quantifier Type}

We obtain a simpler complexity characterization if we only consider the number and type of each quantifier (\ie, we ignore the order in which they occur).
We again consider the complexity in the size of the specification (in \refProp{upperBound}) and the size of the system (in \refProp{upperBoundModel}).

\begin{prop}\label{prop:upperBound}
	Model checking of a linear \HyperATLS{} formula with $k$ complex and $l$ simple quantifiers is
	\begin{enumerate}
		\item in $(2k + l)$-\EXPTIME{}, and \label{enum:ubf1}
		\item in $(2k+l-1)$-\EXPSPACE{} if $l \geq 1$ \label{enum:ubf2}
	\end{enumerate}
	when measured in the size of the specification.
\end{prop}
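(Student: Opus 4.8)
The plan is to bound the size and pin down the type of the final automaton $\calA_{\dot\varphi}$ constructed in \Cref{sec:construction}, and then read off both complexities from the cost of its emptiness test. Write the prefix as $\quant_1 \pi_1 \cdots \quant_n \pi_n. \psi$ with $n = k+l$. Building $\calA_{\dot\varphi}$ bottom-up, every quantifier is preceded by one application of \refTheo{alt1} that puts the sub-automaton into the shape required by \Cref{fig:existsConstruct} or \Cref{fig:stratConstruct} before the product with $\calG$ is taken: a simple quantifier needs a single exponential (alternating to nondeterministic/universal), a complex quantifier a double exponential (alternating to deterministic). Charging $1$ to each of the $l$ simple and $2$ to each of the $k$ complex quantifiers — the innermost one included — and noting that the boolean and temporal steps of \Cref{fig:ltltran} cost only polynomially, bounds the nesting depth of exponentials in $|\calA_{\dot\varphi}|$, measured in $|\dot\varphi|$, by $2k+l$, so $|\calA_{\dot\varphi}| \leq \mathit{Tower}_c(2k+l, |\dot\varphi|)$. (This is the uniform coarsening of the finer prefix-cost $d_{\mathit{spec}}$ of \refTheo{upb}, augmented by the innermost-quantifier cost that $d_{\mathit{spec}}$ omits.)

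For part~\ref{enum:ubf1} I appeal to the emptiness test directly. Since $\calA_{\dot\varphi}$ runs over the one-letter alphabet $\Sigma_{\dot\varphi} = S^0$, deciding $\calL(\calA_{\dot\varphi}) \neq \emptyset$ is the solution of a parity game on its transition graph, which is polynomial in $|\calA_{\dot\varphi}|$ for a fixed number of colours \cite{KupfermanV98}. The procedure thus runs in time $\mathrm{poly}(\mathit{Tower}_c(2k+l, |\dot\varphi|)) = \mathit{Tower}_{c'}(2k+l, |\dot\varphi|)$, i.e. in $(2k+l)$-\EXPTIME.

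Part~\ref{enum:ubf2} uses the hypothesis $l \geq 1$ to trade the topmost level of the tower from time into space. If the outermost quantifier is simple, say $\quant_1 = \exists$ (the case $\forall$ is dual, complementing with \refTheo{altneg}), then $\calL(\calA_{\dot\varphi}) \neq \emptyset$ holds iff some path of $\calG$ is accepted by the body automaton $\calA_\Phi$ for $\Phi = \quant_2 \pi_2 \cdots \psi$; this is decided by guessing the path letter-by-letter while maintaining the alternating run of $\calA_\Phi$, i.e. in nondeterministic space $\mathcal{O}(|\calA_\Phi|)$. As $\calA_\Phi$ lies one conversion below $\calA_{\dot\varphi}$, the depth bookkeeping yields $|\calA_\Phi| \leq \mathit{Tower}_c(2k+l-1, |\dot\varphi|)$: either the $\quant_1$-conversion genuinely cost at least one exponential — lowering the depth by one — or it was free, which by $q(\exists, \exists) = 0$ already caps the depth of $\calA_{\dot\varphi}$ at $2k+l-1$. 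Savitch's theorem \cite{Savitch70} then places model checking in $(2k+l-1)$-\EXPSPACE. If instead the outermost quantifier is complex, a routine inspection of \Cref{fig:qunatifierCost} — replacing the worst-case charge of $2$ per quantifier by the actual adjacency costs — shows that the single simple quantifier guaranteed by $l \geq 1$ already drops the depth to $2k+l-1$, so the polynomial parity-game test runs in $(2k+l-1)$-\EXPTIME $\subseteq (2k+l-1)$-\EXPSPACE.

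The main obstacle is the uniform bookkeeping in part~\ref{enum:ubf2}: one must verify, over every interleaving of the $k$ complex and $l$ simple quantifiers, that the automaton actually submitted to the emptiness test has tower-height at most $2k+l-1$ — tracking precisely where an adjacency saves a conversion ($q = 0$) or a simple quantifier undercuts the worst-case charge — and then combine this with the fact that nondeterministic (resp. universal) emptiness is decidable in the space of a single configuration rather than in the time of the fully materialised product. The size estimate of part~\ref{enum:ubf1} and the boolean/temporal cases are by contrast routine.
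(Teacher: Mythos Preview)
Your proof is correct and follows the same approach as the paper: the same $2k+l$ tower bound for part~(\ref{enum:ubf1}) via per-quantifier charges, and the same case split on whether the outermost quantifier is simple or complex for part~(\ref{enum:ubf2}). The only cosmetic difference is in the outermost-simple sub-case, where the paper stays at the level of $\calA_{\dot\varphi}$ --- observing that the $\exists$-construction of \Cref{fig:existsConstruct} makes it \emph{nondeterministic}, so emptiness is in \NLOGSPACE{} \cite{VardiW94} --- rather than descending to $\calA_\Phi$; your phrase ``maintaining the alternating run of $\calA_\Phi$'' is a bit loose (what one actually maintains is a single state of its Miyano--Hayashi nondeterminisation, i.e.\ a subset of $Q$), but the resulting space bound is the same.
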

\begin{proof}
	Let $\varphi$ be any formula using $k$ complex and $l$ simple quantifiers, and let $\calG$ be the game structure.
	Point (\ref{enum:ubf1}) follows directly from \refTheo{upb}, as we can easily check that $2k + l \geq d_{\mathit{spec}}(\varphi)$.
	For point (\ref{enum:ubf2}) let $\varphi = \quant_1\,\pi_1 \ldots \quant_{k+l}\, \pi_{k+l}\ldot \varphi$ where $\psi$ is quantifier-free.
	Let $i$ be the smallest index such that $\quant_i$ is simple (which exists as $l \geq 1$).
	In case $i = 1$ (\ie, the outermost quantifier is simple) we may assume w.l.o.g~that $\quant_1 = \exists$ (as otherwise, we check the negated formula).
	So $\calA_\varphi$ is a non-deterministic automaton, and the $(2k+l-1)$-\EXPSPACE{} upper bound follows, as we can check the emptiness of a non-deterministic automaton in \NLOGSPACE{} \cite{VardiW94}.
	If $l > 1$, it is easy to see that $2k + l - 1\geq d_{\mathit{spec}}(\varphi)$, so we get an even better upper bound of $(2k+l-1)$-\EXPTIME{} via \refTheo{upb} and thereby also the desired $(2k+l-1)$-\EXPSPACE{} bound.
\end{proof}

\begin{prop}\label{prop:upperBoundModel}
	Model checking of a linear \HyperATLS{} formula with $k$ complex and $l$ simple quantifiers is 
	\begin{enumerate}
		\item in $(2k+l-2)$-\EXPSPACE{}, and \label{enum:ubm1}
		\item in $(2k+l-2)$-\EXPTIME{} if $k \geq 1$, and \label{enum:ubm2}
		\item in $(2k+l-3)$-\EXPSPACE{} if $k \geq 1$ and the outermost quantifier is simple (so necessarily $l \geq 1$) \label{enum:ubm4}
	\end{enumerate} 
	when measured in the size of the system.
\end{prop}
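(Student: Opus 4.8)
The plan is to mirror the proof of \refProp{upperBound}, but to track the tower height of $\calA_\varphi$ in the system size rather than the specification size. The crucial difference---already isolated in the proof of \refTheo{upb}---is that the innermost quantifier contributes nothing to this tower: the automaton $\calA_\psi$ for the quantifier-free body, and hence its nondeterminization $\calA_\psi^{\mathit{ndet}}$ resp.\ its determinization $\calA_\psi^{\mathit{det}}$, depends only on $|\psi|$, so the innermost construction in \refFig{existsConstruct} resp.\ \refFig{stratConstruct} merely multiplies the number of states by $|\calG|$. Every \emph{other} quantifier, by contrast, (non)determinizes an automaton that already carries a factor of $|\calG|$, so such a step raises the height in $|\calG|$ by its $q$-cost. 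Hence, for a linear formula $\varphi = \quant_1 \pi_1\ldot \cdots \quant_n \pi_n\ldot \psi$, the tower height of $\calA_\varphi$ measured in $|\calG|$ is at most $\sum_{i=1}^{n-1} q(\quant_i, \quant_{i+1})$, one base level shorter (for a simple innermost quantifier) resp.\ two base levels shorter (for a complex one) than the specification-size tower.

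First I would bound this sum by $k$ and $l$. Since a complex--complex adjacency costs $2$, a simple--complex or distinct simple--simple adjacency costs $1$, and an equal simple--simple adjacency costs $0$, the sum is maximised by grouping all complex quantifiers into a single block and alternating the simple ones; a short case analysis on whether the prefix starts and ends with a simple quantifier then yields $\sum_{i} q(\quant_i,\quant_{i+1}) \le 2k + l - 2$ when $k \ge 1$, and $\le l - 1$ when $k = 0$. As emptiness of an alternating parity automaton is polynomial in its size for a fixed number of colours \cite{KupfermanV98}, the height-$(2k+l-2)$ bound gives point~(\ref{enum:ubm2}), i.e.\ $(2k+l-2)$-\EXPTIME{} for $k \ge 1$; point~(\ref{enum:ubm1}) for the case $k \ge 1$ then follows from $j$-\EXPTIME{} $\subseteq$ $j$-\EXPSPACE{}.

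For the sharper \EXPSPACE{} bounds I would exploit the shape of the topmost automaton. If the outermost quantifier is simple, we may assume w.l.o.g.\ that it is $\exists$: replacing $\varphi$ by its negation flips every quantifier but preserves the simple/complex classification, hence preserves every cost $q(\quant_i,\quant_{i+1})$ and the height bound, while turning an outermost $\forall$ into $\exists$. Then $\calA_\varphi$ is \emph{nondeterministic}, and emptiness of a nondeterministic parity automaton is decidable in \NLOGSPACE{} in its size---searching for a reachable accepting lasso in space logarithmic in the number of states, on the fly and without materialising the automaton, exactly as for \HyperLTL{} \cite{FinkbeinerRS15,VardiW94}. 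This shaves one level off the tower. For $k \ge 1$ with a simple outermost quantifier it turns the height-$(2k+l-2)$ bound into $(2k+l-3)$-\EXPSPACE{}, giving point~(\ref{enum:ubm4}); for $k = 0$ the outermost quantifier is necessarily simple, so the same step turns the height-$(l-1)$ bound into $(l-2) = (2k+l-2)$-\EXPSPACE{}, completing the remaining case of point~(\ref{enum:ubm1}).

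I expect the main obstacle to lie at the two ends of the quantifier prefix. At the inner end one must justify rigorously that the double-exponential determinization $\calA_\psi^{\mathit{det}}$ forced by a \emph{complex} innermost quantifier is charged entirely to $|\psi|$ and only linearly to $|\calG|$, so that the system-size tower is genuinely two levels shorter than the innermost base of the system-size recursion might suggest. At the outer end, the delicate point is realising the \NLOGSPACE{} emptiness test within the reduced space budget: one runs the standard reachable-accepting-lasso search on the nondeterministic top automaton while recomputing its transitions from the (exponentially larger) inner automata on demand, and it is this on-the-fly recomputation that legitimises shaving exactly one tower level for the \EXPSPACE{} bounds.
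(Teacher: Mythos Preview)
Your proposal is correct and follows the same overall strategy as the paper: bound the tower height of $|\calA_\varphi|$ measured in $|\calG|$, then apply the appropriate emptiness test (polynomial for alternating automata, \NLOGSPACE{} for nondeterministic ones after assuming w.l.o.g.\ an existential outermost quantifier). For points~(\ref{enum:ubm1}) and~(\ref{enum:ubm4}) your argument matches the paper's almost verbatim.

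For point~(\ref{enum:ubm2}) you are in fact more careful than the paper. The paper derives it from \refTheo{upb} via the inequality $2k+l-2 \ge d_{\mathit{sys}}(\varphi)$, but that inequality fails as written: for $k=2$, $l=0$ one computes $d_{\mathit{sys}}=4>2$, because the base case of $d_{\mathit{sys}}$ in \Cref{fig:mesSys} is $1$ resp.\ $2$ even though---as the paper itself notes in the proof of \refTheo{upb}---the innermost quantifier construction is \emph{linear} in $|\calG|$. Your direct bound on the system-size tower height, namely $\sum_{i=1}^{n-1} q(\quant_i,\quant_{i+1}) \le 2k+l-2$ for $k\ge 1$, is exactly the right fix: it is what one obtains by starting the $d_{\mathit{sys}}$ recursion at $0$ instead of $1$ or $2$, and your case analysis (block all complex quantifiers together, alternate the simple ones) correctly establishes it.
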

\begin{proof}
	Let $\varphi$ be any formula $k$ complex and $l$ simple quantifiers and $\calG$ the game structure.
	We begin with point (\ref{enum:ubm2}). It is easy to see that if $k \geq 1$, then $2k + l - 2 \geq d_{\mathit{sys}}(\varphi)$, so (\ref{enum:ubm2}) follows from \refTheo{upb}.
	For point (\ref{enum:ubm4}) we may assume that the outermost quantifier is existential (otherwise, we check the negated formula). 
	We observe that if $k \geq 1$, the size of $\calA_\varphi$ is at most $\mathit{Tower}_c(2k+l-2, |\calG|)$ (as in the proof for point (\ref{enum:ubm2})) and, additionally, $\calA_\varphi$ is non-deterministic, so emptiness can be checked in \NLOGSPACE{}, giving the desired $(2k+l-3)$-\EXPSPACE{} bound.
	It remains to show point (\ref{enum:ubm1}). 
	If $k \geq 1$, we get the even better bound of $(2k+l-2)$-\EXPTIME{} from point (\ref{enum:ubm2}).
	In case $k = 0$ (so all quantifiers are simple), we again assume that the outermost quantifier is existential.
	It is easy to see that the size of $\calA_\varphi$ is at most $\mathit{Tower}_c(2k+l-1, |\calG|)$ and non-deterministic, so the bound follows from the \NLOGSPACE{} emptiness check.  
	Note that the case where $k = 0$ corresponds directly to \HyperLTL{} model checking \cite{FinkbeinerRS15,Rabe16,BeutnerF23}.
\end{proof}

\begin{rem}\label{rem:nonlinear}
	So far, we have focused on linear \HyperATLS{} formulas as this allows for a precise yet succinct analysis. 
	Analogous to point (\ref{enum:ubf1}) of \refProp{upperBound} we can easily see that we can check an arbitrary (possibly non-linear) formula $\varphi$ with $k$ complex and $l$ simple quantifiers in $(2k+l)$-\EXPTIME{} in the size of the specification and $(2k+l-1)$-\EXPTIME{} in the size of the system (by simply analyzing the size of $\calA_\varphi$).
	Deriving more precise bounds by generalizing the specification-based and system-based prefix-cost to non-linear formulas is challenging. 
	\demo
\end{rem}

\vspace{-1mm}
\section{Lower Bounds for Model Checking}\label{sec:lb}

In this section, we establish lower bounds on the \HyperATLS{} model checking problem.
Our lower bounds show that strategic quantification in the context of hyperproperties results in a logic that is strictly harder (w.r.t.~model checking) than both a hyperlogic without strategic quantification (such as \HyperLTL{}) and a non-hyper logic with strategic quantification (such as \ATLS{}).
We establish the following bounds:

\begin{restatable}{thm}{lb}\label{theo:lb}
	Model checking of a linear \HyperATLS{} ~formula with $k$ complex and $l$ simple quantifiers is $(2k + l - 1)$-\EXPSPACE-hard in the size of the specification.
\end{restatable}

\begin{restatable}{thm}{lbb}\label{theo:lb2}
	Model checking of a linear \HyperATLS{} ~formula with $k$ complex and $l$ simple quantifiers is $(2k + l - 3)$-\EXPSPACE-hard in the size of the system.
\end{restatable}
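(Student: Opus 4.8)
The plan is to prove the lower bound by a reduction from the acceptance problem of a space-bounded Turing machine, reusing the counter construction behind the formula-size bound of \refTheo{lb} but relocating the dependence on the input length from the formula into the game structure. Concretely, I fix a Turing machine $M$ that decides a $(2k+l-3)$-\EXPSPACE-complete language and, for an input $x$ with $|x| = n$, build a game structure $\calG_{M,x}$ of size polynomial in $n$ together with a \emph{fixed} linear \HyperATLS{} formula $\varphi$ having exactly $k$ complex and $l$ simple quantifiers, so that $M$ accepts $x$ iff $\calG_{M,x} \models \varphi$. Since $M$ runs in space $\mathit{Tower}_c(2k+l-3, n)$, a single configuration of $M$ is a string of that (tower-sized) length, and the whole difficulty is to let a polynomial-size system together with a constant-size formula address and compare individual tape cells of such configurations. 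Boundary cases with $2k+l \leq 3$ collapse to \PSPACE{} or \NLOGSPACE{} and follow from standard hardness of the underlying temporal logic, so I focus on $2k+l \geq 4$.

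The addressing is realized by a hierarchy of nested binary counters in the spirit of the classical Stockmeyer--Meyer encodings. A path of $\calG_{M,x}$ spells out a sequence of configurations of $M$, where each tape cell carries, besides its symbol, an address given as a nested counter; the system directly provides a base counter of range $\mathrm{poly}(n)$ (this is cheap, as $|\calG_{M,x}|$ is polynomial in $n$), and each counting level above the base exponentiates the addressable range. The formula must certify that (i) the initial configuration spells $x$, (ii) consecutive configurations respect the transition function of $M$ cell-by-cell, and (iii) an accepting state is reached. The only nontrivial obligation is (ii): checking that a cell of configuration $t+1$ is the correct update of the neighbourhood of the same cell in configuration $t$ requires locating the \emph{same} address in two adjacent configuration blocks, i.e.\ comparing two tower-sized counter values. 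This comparison is precisely what the quantifier prefix buys: a quantifier marks, on the bound path, the two cell positions to be compared and the body verifies that their addresses agree, lifting the addressable range by one further exponential per level of counting.

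The key quantitative point is that a \emph{complex} (strategic) quantifier $\llangle A \rrangle \pi$ contributes \emph{two} exponential levels, whereas a simple quantifier contributes one, which is what yields the coefficient $2k + l$. The reason mirrors the upper-bound accounting of \refFig{stratConstruct}, where a complex quantifier forces an alternating-to-deterministic conversion (hence a double-exponential blow-up) while a simple one needs only a single projection: a strategic quantifier bundles an existential choice of a strategy with a universal quantification over all opponent responses, so the coalition must answer \emph{every} adversarial challenge to a sub-address consistently. I exploit this by designing the game so that the opponent may point to any of the exponentially many sub-addresses of a level and the winning strategy of $A$ must reproduce the corresponding bit; a winning strategy then exists iff the doubly-larger counter is implemented correctly. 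Establishing soundness and completeness of this game gadget---that a winning strategy corresponds exactly to a correct counter two levels up, with no room to cheat---is the technical heart of the construction.

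Finally, I account for the offset $2k+l-3$ against the formula-size height $2k+l-1$ of \refTheo{lb}. The construction is structurally the analogue of \refTheo{lb} with the input encoded in the game structure rather than the formula; the two-level difference arises because the innermost complex quantifier's gadget is double-exponential in $|\varphi|$ but only linear in $|\calG_{M,x}|$, so the two tower levels it donated to the formula-size bound are forfeited once the input lives in the system. This parallels exactly the gap between the upper bounds of \refProp{upperBoundModel} and their formula-size counterparts, and choosing the outermost quantifier to be simple with at least one complex quantifier matches the $(2k+l-3)$-\EXPSPACE{} upper bound of \refProp{upperBoundModel}. The main obstacle is thus twofold: calibrating the strategic gadget so that one complex quantifier yields \emph{exactly} two tower levels (neither over- nor under-counting), and keeping $\calG_{M,x}$ polynomial in $n$ while encoding all transition rules and the base counter, so that the entire tower height is genuinely attributable to the fixed quantifier prefix and the hardness is truly in the size of the system.
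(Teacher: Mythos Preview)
Your proposal is correct and follows essentially the same approach as the paper: relocate the $n$-dependence from the formula to the game structure by hardcoding the base yardstick of length $n$ in $\calG$ (so the innermost quantifier forfeits its two exponential levels), then reuse the nested counter construction where each remaining complex quantifier contributes two levels via the verifier/refuter challenge game and each simple quantifier contributes one via Rabe's construction, finally encoding the space-bounded TM computation on an existentially-quantified path using the resulting yardstick. Your accounting of the offset $2k+l-3$ versus $2k+l-1$ and your case split match the paper's.
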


\begin{figure}
	
	\begin{subfigure}{\linewidth}
		\centering
		\begin{tikzpicture}
			\node[] at (-1.25,1.1) () {$k = 0$}; 
			\node[] at (2,1.1) () {$k = 1$}; 
			\node[] at (6.5,1.1) () {$k \geq 2$};

			\draw[black, very thick, rounded corners=2pt,pattern=north east lines,pattern color=pastelblue!40] (1, 0.25) -- (8.75, 0.25) -- (8.75, -3.75) -- (-2.25, -3.75) -- (-2.25, -1.75) -- (-0.25, -1.75) -- (-0.25, 0.25) -- (1, 0.25);

			\draw [black, very thick, rounded corners=2pt,fill=white] (0, 0) rectangle (4,-1.5);
			\node[anchor=center] () at (2, -0.75) {\small $2$-\EXPTIME{}-complete};
			
			\node[anchor=center,align=left] () at (6.5, -0.75) {\small in $(2k+l)$-\EXPTIME{}\\ \small$(2k + l-1)$-\EXPSPACE{}-hard};
			
			\draw [black, very thick, rounded corners=2pt, fill =white] (-2, -2) rectangle (8.5,-3.5);
			\node[anchor=center,align=left] () at (3.25, -2.75) {\small $(2k+l-1)$-\EXPSPACE{}-complete};

			\draw[black!50, very thick] (-2.75, -3.75) -- (-2.75, 0.75) -- (8.75, 0.75);

			\draw [black!50, very thick] (-0.25, 0.9) -- (-0.25,0.6);
			\draw [black!50, very thick] (4.25, 0.9) -- (4.25,0.6);
			
			\draw [black!50, very thick] (-2.9, -1.75) -- (-2.6,-1.75);
			
			\node[rotate=90] at (-3.1,-0.75) () {$l = 0$}; 
			
			\node[rotate=90] at (-3.1,-2.75) () {$l \geq 1$}; 
		\end{tikzpicture} 
		
		\subcaption{Complexity in the size of the specification. $(2k+l)$-\EXPTIME{} containment and $(2k + l-1)$-\EXPSPACE{}-hardness holds for the entire fragment (the blue, striped area).}\label{fig:boundsSpec}
	\end{subfigure}\\[0.5cm]
	\begin{subfigure}{\linewidth}
		\centering
		\begin{tikzpicture}
			
			\draw[black, very thick, rounded corners=2pt,pattern=north east lines,pattern color=pastelblue!40] (9, 0.25) -- (13.25, 0.25) -- (13.25, -4.25) -- (4.75, -4.25) -- (4.75, 0.25) -- (9, 0.25);
			
			\draw[rounded corners=2pt,fill=white] (6,-2.75) -- (13,-2.75) -- (13,-4) -- (5,-4) -- (5,-2.75) -- (6,-2.75);
			
			\draw[black, very thick, rounded corners=2pt,pattern=dots,pattern color=chestnut!60]  (6,-2.75) -- (13,-2.75) -- (13,-4) -- (5,-4) -- (5,-2.75) -- (6,-2.75);
			
			\draw [black, very thick, rounded corners=2pt,fill=white] (5, 0) rectangle (8.5,-1.5);
			\node[anchor=center] () at (6.75, -0.75) {\small \PTIME{}-complete};
			
			\node[anchor=center,align=left] () at (11, -0.75) {\small in $(2k+l-2)$-\EXPTIME{}\\ \small$(2k+l-3)$-\EXPSPACE{}-hard};
			
			\draw [black, very thick, rounded corners=2pt,fill=white] (0, -2) rectangle (4.5,-4);
			\node[anchor=center] () at (2.25, -3) {\small $(l-2)$-\EXPSPACE{}-complete};

			\node[anchor=center,align=left] () at (9, -3.375) {\footnotesize (Outermost quantifier is simple) \\ \small $(2k+l-3)$-\EXPSPACE{}-complete};
			
			\draw[black!50, very thick] (-0.75, -4.25) -- (-0.75, 0.75) -- (13.25, 0.75);

			\draw [black!50, very thick] (4.75, 0.9) -- (4.75,0.6);
			\draw [black!50, very thick] (8.75, 0.9) -- (8.75,0.6);
			
			\draw [black!50, very thick] (-0.9, -1.75) -- (-0.6,-1.75);
			
			\node[] at (2.25,1.1) () {$k = 0$}; 
			\node[] at (6.75,1.1) () {$k = 1$}; 
			\node[] at (11,1.1) () {$k \geq 2$}; 
			
			\node[rotate=90] at (-1.1,-0.75) () {$l = 0$}; 
			
			\node[rotate=90] at (-1.1,-3) () {$l \geq 1$}; 
		\end{tikzpicture} 
		
		\subcaption{Complexity in the size of the system.    $(2k+l-2)$-\EXPTIME{} containment and $(2k + l-3)$-\EXPSPACE{}-hardness holds for the blue, striped area. $(2k+l-3)$-\EXPSPACE{}-completeness only holds for formulas where the outermost quantifier is simple (the red, dotted area).	}\label{fig:boundsSys}
	\end{subfigure}

	\caption{Upper and lower bounds on the complexity of model checking linear \HyperATLS{} formulas with $k$ complex and $l$ simple quantifiers.
		The bounds are given in the size of the specification (\Cref{fig:boundsSpec}) and size of the system (\Cref{fig:boundsSys}).
	}
\end{figure}


\paragraph{Complexity in the Specification Size}

If we consider the \HyperATLS{} model checking complexity in terms of the specification size, \refProp{upperBound} and \refTheo{lb}  span the landscape depicted in \Cref{fig:boundsSpec}.
For all prefix structures, we get an upper bound of $(2k+l)$-\EXPTIME{} and a lower bound of $(2k+l-1)$-\EXPSPACE{}-hardness.
In two cases, we can improve the upper or lower bounds further and get tight results:
If $k = 1$ and $l =0$, we get a better $2$-\EXPTIME{} lower bound from the \ATLS{} model checking \cite{AlurHK02} and thus $2$-\EXPTIME{}-completeness.
In case $l \geq 1$, we get a matching $(2k+l-1)$-\EXPSPACE{} upper bound and thus $(2k+l-1)$-\EXPSPACE{}-completeness (note that this subsumes the already known \HyperLTL{} bounds in case $k = 0$ \cite{Rabe16}).

\paragraph{Complexity in the System Size}

If we consider the complexity in the size of the system, \refProp{upperBoundModel} and \refTheo{lb2} span the landscape depicted in \Cref{fig:boundsSys}.
In case $k = 0$ (where the formula is a \HyperLTL{} formula), we get $(l-2)$-\EXPSPACE{}-completeness \cite{Rabe16}.
In the case where $k \geq 1$ (\ie, the formula is a ``proper'' \HyperATLS{} formula), we get a upper bound of $(2k+l-2)$-\EXPTIME{} and a lower bound of $(2k+l-3)$-\EXPSPACE{}-hardness.
In two cases we can tighten the results:
In case $k = 1$ and $l=0$, we get a better lower bound from the \ATLS{} model checking and thus \PTIME{}-completeness \cite{AlurHK02}.
In the cases where $l \geq 1$ and the outermost quantifier is simple, we get an improved upper bound resulting in $(2k+l-3)$-\EXPSPACE{}-completeness.

\subsection{Proof Preliminaries}

The remainder of this section is devoted to a proof of \refTheo{lb} and \refTheo{lb2}.
Readers less interested in a formal proof can skip to \Cref{sec:proto}.

Our proof encodes the acceptance of space-bounded Turing machines (TM).
It builds on ideas used for the \HyperLTL{} lower bounds shown by Rabe \cite{Rabe16} (adopting earlier ideas from Stockmeyer \cite{stockmeyer1974complexity}) but uses a novel construction to achieve a doubly exponential increase using complex quantification.
The main idea of our construction is to design a \HyperATLS{} formula that requires a player to output a yardstick, which is a formula that specifies a fixed distance between two points along a path.
We can then encode the acceptance of a TM by using the yardstick to compare consecutive configurations of the TM.
We recommend having a look at the \HyperLTL{} lower bound shown by Rabe \cite[\S 5.6]{Rabe16}.

\paragraph{Precise Tower Length}

We use a slightly larger tower of exponents.
For $k, n \in \mathbb{N}$ define $\mathcal{C}(k, n)  \in \nat$ as follows:
\begin{align*}
	\mathcal{C}(0, n) &:= n\\
	\mathcal{C}(k+1, n) &:= 2^{2^{\mathcal{C}(k, n) }} \cdot 2^{\mathcal{C}(k, n)} \cdot \mathcal{C}(k, n)
\end{align*}
It is easy to see that for every $k \geq 1$, we have $\mathcal{C}(k, n)  \geq \mathit{Tower}_2(2k, n)$ for every $n$.
We design a formula with $k$ complex and $0$ simple quantifier that specifies a yardstick of length $\mathcal{C}(k, n)$.\footnote{In our proof, we encode the acceptance of $\mathit{Tower}_2(\cdot, n)$-space bounded TMs, \ie, we fix the base to $2$.
	Our construction easily extends to an arbitrary (but fixed) base $c$ by using $\lceil \log_2 c \rceil$ propositions for our counter construction.
	We stick with $c=2$ to keep the notation simple.
}
This will later allow us to encode the acceptance of a $\mathcal{C}(k, n)$-space bounded TM. 
In our construction, the size of the game structure is constant, and the size of the formula depends on $n$.

\subsection{Direct Counter Verification}\label{sec:counterConst1}

We first consider the case where $k = 1$ and construct a formula that ensures a yardstick of length $\mathcal{C}(1, n)$.
The idea is to describe a counter with $2^n$ many bits that is incremented in each step and resets to $0$ once the maximal value is reached.
Consequently, there are $2^{2^n}$ counter configurations between two resets of the counter.

\paragraph{Structure of a Counter}

\begin{figure}
	\begin{center}
		\scalebox{0.84}{
			\begin{tikzpicture}
				\node[circle, draw,inner sep=4pt,label=below:{\scriptsize$\{\bullet,\triangle,\blacktriangle\}$}] at (0,0) (n11) {};
				\node[circle, draw,inner sep=4pt] at (0.6,0) (n12) {};
				
				\node[] at (1.4,0) (n13) {\small$\dots$};
				
				\node[circle, draw,inner sep=4pt] at (2.2,0) (n1n) {};
				
				\draw[thick,-] (n11) -- (n12);
				\draw[thick,-] (n12) -- (n13);
				\draw[thick,-] (n13) -- (n1n);

				\node[circle, draw,inner sep=4pt,label=below:{\scriptsize$\{\bullet\}$}] at (2.8,0) (n21) {};
				\node[circle, draw,inner sep=4pt] at (3.4,0) (n22) {};
				
				\node[] at (4.2,0) (n23) {\small$\dots$};
				
				\node[circle, draw,inner sep=4pt] at (5,0) (n2n) {};
				
				\draw[thick,-] (n21) -- (n22);
				\draw[thick,-] (n22) -- (n23);
				\draw[thick,-] (n23) -- (n2n);

				\draw[thick, -] (n1n) -- (n21);
				
				\draw[thick,-] (n2n) -- (5.4, 0);

				\node[] at (6,0) () {\small$\ldots$};

				\node[circle, draw,inner sep=4pt,label=below:{\scriptsize$\{\bullet\}$}] at (7,0) (n31) {};
				\node[circle, draw,inner sep=4pt] at (7.6,0) (n32) {};
				
				\node[] at (8.4,0) (n33) {\small$\dots$};
				
				\node[circle, draw,inner sep=4pt] at (9.2,0) (n3n) {};
				
				\draw[thick,-] (6.6,0) -- (n31);
				
				\draw[thick,-] (n31) -- (n32);
				\draw[thick,-] (n32) -- (n33);
				\draw[thick,-] (n33) -- (n3n);

				\node[circle, draw,inner sep=4pt,label=below:{\scriptsize$\{\bullet,\triangle\}$}] at (9.8,0) (n41) {};
				\node[circle, draw,inner sep=4pt] at (10.4,0) (n42) {};
				
				\node[] at (11.2,0) (n43) {\small$\dots$};
				
				\node[circle, draw,inner sep=4pt] at (12,0) (n4n) {};
				
				\draw[thick,-] (n3n) -- (n41);
				
				\draw[thick,-] (n41) -- (n42);
				\draw[thick,-] (n42) -- (n43);
				\draw[thick,-] (n43) -- (n4n);
				
				\draw[thick,-] (n4n) -- (12.4,0);
				
				\node[] at (13,0) () {\small$\ldots$};
				
				\node[circle, draw,inner sep=4pt,label=below:{\scriptsize$\{\bullet,\triangle,\blacktriangle\}$}] at (14,0) (n51) {};
				\node[circle, draw,inner sep=4pt] at (14.6,0) (n52) {};
				
				\node[] at (15.4,0) (n53) {\small$\dots$};

				\node[circle, draw,inner sep=4pt] at (16.2,0) (n5n) {};
				\draw[thick,-] (13.6,0) -- (n51);
				
				\draw[thick,-] (n51) -- (n52);
				\draw[thick,-] (n52) -- (n53);
				\draw[thick,-] (n53) -- (n5n);

				\draw[draw=black, thick] (-0.2, 0.3) -- (-0.2, 0.5);
				\draw[draw=black,thick] (-0.2, 0.4) --node[above] {$n$} (2.4, 0.4);
				\draw[draw=black,thick] (2.4, 0.5) -- (2.4, 0.3);

				\draw[draw=black, thick] (2.6, 0.3) -- (2.6, 0.5);
				\draw[draw=black,thick] (2.6, 0.4) --node[above] {$n$} (5.2, 0.4);
				\draw[draw=black,thick] (5.2, 0.5) -- (5.2, 0.3);

				\draw[draw=black, thick] (-0.2, 0.9) -- (-0.2, 1.1);
				\draw[draw=black,thick] (-0.2, 1) --node[above] {$n \cdot 2^n$} (9.4, 1);
				\draw[draw=black,thick] (9.4, 0.9) -- (9.4, 1.1);

				\draw[draw=black, thick] (-0.2, 1.6) -- (-0.2, 1.8);
				\draw[draw=black, thick] (-0.2, 1.7) --node[above] {$n \cdot 2^n \cdot 2^{2^n}$} (13.6, 1.7);
				\draw[draw=black,thick] (13.6, 1.6) -- (13.6, 1.8);

				\node[draw=chestnut, thick] at (0,-1) (n11) {};
				\node[draw, thick, chestnut] at (0.6,-1) (n11) {};
				\node[draw, thick, chestnut] at (2.2,-1) (n11) {};
				
				\node[draw, thick, chestnut] at (2.8,-1) (n11) {};
				\node[draw, thick, chestnut] at (3.4,-1) (n11) {};
				\node[draw, thick, chestnut] at (5,-1) (n11) {};

				\node[draw, thick, chestnut] at (7,-1) (n11) {};
				\node[draw, thick, chestnut] at (7.6,-1) (n11) {};
				\node[draw, thick, chestnut] at (9.2,-1) (n11) {};

				\node[draw, thick, chestnut] at (9.8,-1) (n11) {};
				\node[draw, thick, chestnut] at (10.4,-1) (n11) {};
				\node[draw, thick, chestnut] at (12,-1) (n11) {};
				
				\node[draw, thick, chestnut] at (14,-1) (n11) {};
				\node[draw, thick, chestnut] at (14.6,-1) (n11) {};
				\node[draw, thick, chestnut] at (16.2,-1) (n11) {};

				\node[] at (-0.7,-1) (n11) {\color{chestnut}$\mathbf{a}$};
				
				\node[] at (-0.7,-1.5) (n11) {\color{pastelblue}$\mathbf{b}$};
				
				\node[draw=pastelblue, thick] at (0,-1.5) (n11) {};
				
				\node[draw=pastelblue, thick]at (2.8,-1.5) (n11) {};
				
				\node[draw=pastelblue, thick]at (7,-1.5) (n11) {};

				\node[draw=pastelblue, thick] at (9.8,-1.5) (n11) {};
				\node[draw=pastelblue, thick] at (14,-1.5) (n11) {};
			\end{tikzpicture}
		}
	\end{center}
	
	\caption{The basic structure of a correct counter of length $\mathcal{C}(1, n)$. Proposition $\bullet$ holds every $n$ steps and separates two $a$-configurations. Proposition $\triangle$ holds every $n \cdot 2^n$ step and separates two $b$-configurations. $\blacktriangle$ holds every $n \cdot 2^n \cdot 2^{2^n}$ steps.
		Propositions $a$ and $b$ give the bits of $a$-counter and $b$-counter, respectively.
		The $a$ proposition is relevant for the $a$-counter at all positions, marked by red boxes.
		The relevant positions for the $b$-counter are only those where $\bullet$ holds, marked by blue boxes. 
		Consequently, there are $2^n$ relevant positions for the $b$-counter between any two occurrences of $\triangle$.}\label{fig:countStruct}
\end{figure}

To ensure the correctness of the counter with $2^n$ bits, we use a second counter with $n$ bits. 
The bits of the larger counter (with $2^n$ bits) are given via atomic proposition $b$ (called the $b$-counter), and the bits of the smaller counter (with $n$ bits) by proposition $a$ (called the $a$-counter).
Together, the counter uses atomic propositions $\{\bullet, \triangle,\blacktriangle, a, b\}$, where $\bullet,\triangle,\blacktriangle$ are separating constructs and $a, b$ are propositions that give the value of the counter-bits (if, \eg, proposition $a$ is set, we interpret this as a $1$-bit and otherwise as a $0$-bit of the $a$-counter).
A \emph{correct} counter has the form depicted in \Cref{fig:countStruct}.
Proposition $\bullet$ occurs every $n$ steps and separates two configurations of the $a$-counter.
The $a$-counter should continuously count from $0$ to $2^n-1$ (in binary) and then restart at $0$ (we use a big-endian encoding where the least significant bit is the last position of a count). 
The $\triangle$ occurs every $n \cdot 2^n$ steps and marks the position where the $a$-counter resets to $0$, \ie, $\triangle$ holds whenever the following $a$-configuration is $0$.
Proposition $b$ is used for the second counter with $2^n$ many bits.
Each $b$-configuration is separated by $\triangle$ (\ie, one $b$-configuration has exactly the length in which the $a$-counter counts from $0$ to $2^n-1$).
The relevant bits for the $b$-counter are only those positions where $\bullet$ holds, so each $b$-configuration has $2^n$ relevant bits (marked as blue boxes in \Cref{fig:countStruct}).
The $b$-counter should count from $0$ to $2^{2^n}-1$ (in binary) and then restart at $0$. 
We mark the reset of the $b$-counter by $\blacktriangle$, \ie, $\blacktriangle$ holds whenever the following $b$-configuration is $0$.
Consequently, $\blacktriangle$ holds every $\mathcal{C}(1, n) = n \cdot 2^n \cdot 2^{2^n}$ steps.
An explicit prefix of a correct counter in the case where $n = 2$ is depicted in \Cref{fig:exampleCounter}.

\begin{figure}
	\small
	\centering
	\def\arraystretch{1.5}
	\begin{tabular}{c|ccccccccccccccccccc}
		$\bullet$ & $\blacksquare$ & $\square$ & $\blacksquare$ & $\square$ & $\blacksquare$ & $\square$ & $\blacksquare$ & $\square$  & $\blacksquare$ & $\square$ & $\blacksquare$ & $\square$ & $\blacksquare$ & $\square$ & $\blacksquare$ & $\square$ & $\blacksquare$ & $\square$ & $\blacksquare$  \\
		$\triangle$ & $\blacksquare$ & $\square$ & $\square$ & $\square$ & $\square$ & $\square$ & $\square$ & $\square$ & $\blacksquare$ & $\square$ & $\square$ & $\square$ & $\square$ & $\square$ & $\square$ & $\square$ & $\blacksquare$ & $\square$ & $\square$\\
		$\blacktriangle$ & $\blacksquare$ & $\square$ & $\square$ & $\square$ & $\square$ & $\square$ & $\square$ & $\square$ & $\square$ & $\square$ & $\square$ & $\square$ & $\square$ & $\square$ & $\square$ & $\square$ & $\square$ & $\square$ & $\square$ \\
		$a$ & $\square$ & $\square$ & $\square$  & $\blacksquare$ & $\blacksquare$ & $\square$ & $\blacksquare$ & $\blacksquare$ &$\square$ & $\square$ & $\square$  & $\blacksquare$ & $\blacksquare$ & $\square$ & $\blacksquare$ & $\blacksquare$ & $\square$ & $\square$ & $\square$\\
		$b$ & $\square$ &  &  $\square$ &  & $\square$ &  &  $\square$ &  &  $\square$ &  &  $\square$ &  &  $\square$ &  &  $\blacksquare$ &  &  $\square$ &  & $\square$\\
		\hline
		$c(a)$& \multicolumn{2}{c|}{$0$} &\multicolumn{2}{c|}{$1$}   &  \multicolumn{2}{c|}{$2$} & \multicolumn{2}{c|}{$3$} & \multicolumn{2}{c|}{$0$} & \multicolumn{2}{c|}{$1$}&\multicolumn{2}{c|}{$2$} &\multicolumn{2}{c|}{$3$} & \multicolumn{2}{c|}{$0$} & \\
		\cline{2-20}
		$c(b)$&  \multicolumn{8}{c|}{$0$} & \multicolumn{8}{c|}{$1$} && &\\
		\cline{2-20}
	\end{tabular}
	
	\caption{A prefix of a correct counter in the case where $n = 2$. The trace is read from left to right, where each column gives the evaluation of each atomic proposition at that step. $\blacksquare$ means that the proposition holds, and $\square$ that it does not hold.
		The last two rows give the value of each $a$-configuration and $b$-configuration, respectively. Each $a$-configuration has length $2$ ($= n$), and each $b$-configuration has length $8$ ($= n \cdot 2^n$) with $4$ ($ = 2^n$) relevant bits. The positions where proposition $b$ is irreverent for the counter are left blank.  } \label{fig:exampleCounter}
\end{figure}

\paragraph{Counter as a Game}

We interpret the construction of the counter as a game between a verifier (\veri) and a refuter (\refu).
The verifier selects, in each step, the evaluation of the propositions in $\{\bullet, \triangle,\blacktriangle, a, b\}$ that form the counter.
Meanwhile, the refuter can challenge the correctness of the counter (we make this precise below).
We design a specification and game structure such that the \emph{only} winning strategy for \veri{} is to produce a correct counter.
Consequently, on any play compatible with any winning strategy for $\veri$, $\blacktriangle$ holds every $\mathcal{C}(1, n)$ steps.

\paragraph{Game Structure}

\begin{figure}
	\centering
	\begin{tikzpicture}
		\node[draw, thick, circle, label=west:{\footnotesize$s_\mathit{init}$}] at (-1,0) (ns) {};
		
		\node[draw, thick, circle, label=north:{\footnotesize$s_1$}] at (3,-0.75) (nostart) {};
		
		\node[draw,thick, circle, label=south:{\footnotesize$s_2$}] at (0,0.75) (startloop) {};
		
		\node[draw,thick, circle, label=south:{\footnotesize$s_3$}, label=north:{\tiny$\{\mathit{errorStart}\}$}] at (1.5,0.75) (start) {};
		
		\node[draw,thick, circle, label=south:{\footnotesize$s_4$}] at (3,0.75) (startloopi) {};
		
		\node[draw,thick, circle, label=south:{\footnotesize$s_{\{\bullet, \triangle, \blacktriangle\}}$}, label=north:{\tiny$\{\bullet, \triangle, \blacktriangle\}$}] at (4.7,0) (enum) {};
		
		\node[] at (5.8,-0.5) (target1) {$\cdots$};
		\node[] at (5.8,0) (target2) {$\cdots$};
		\node[] at (5.8,0.5) (target3) {$\cdots$};
		
		\draw[->,thick] (enum) -- (target1);
		\draw[->,thick] (enum) -- (target2);
		\draw[->,thick] (enum) -- (target3);
		
		\draw[->,thick] (ns)+(-0.5, -0.5) -- (ns);
		\draw[->,thick] (ns) |- (startloop);
		\draw[->,thick] (ns) |- (nostart);
		
		\draw[->,thick] (start) -- (startloopi);
		
		\path (startloopi) edge [loop above, thick] node {} (startloopi);
		
		\path (startloop) edge [loop above,thick] node {} (startloop);
		\path (nostart) edge [loop below,thick] node {} (nostart);
		
		\draw[dashed, black!40, thick,rounded corners=2pt] (4, 0) -- (4, -1) -- (7, -1) -- (7, 1) -- (4, 1) -- (4, 0);
		
		\draw[->,thick] (startloop) -- (start);
		\draw[->,thick] (nostart) -- (enum.west);
		\draw[->,thick] (startloopi) -- (enum.west);
	\end{tikzpicture}
	
	\caption{The game structure $\calG_\mathit{counter}$ that is used to produce the counter. The part surrounded by the dashed box, generates the actual counter and includes all states of the form $s_O$ for some $O \subseteq \{\bullet, \triangle,\blacktriangle, a, b, \mathit{error}\}$. The remaining states are used to determine the time point at which the counter should start. The verifier decides the successor whenever in $s_4$, and the refuter decides the successor for states $s_\mathit{init}, s_1, s_2$ (state $s_3$ has a unique successor). We omit the label in case it is empty.}\label{fig:gameStructure}
\end{figure}

In our game structure, $\veri$  sets the values of the propositions in\linebreak$\{\bullet, \triangle,\blacktriangle, a, b\}$ and $\refu$ can challenge the correctness by setting proposition $\mathit{error}$ and $\mathit{errorStart}$.
We allow \refu{} to postpone the start of the counter.
Consider the following game structure over atomic propositions $\atomic := \{\bullet, \triangle,\blacktriangle, a, b, \mathit{error}, \mathit{errorStart}\}$:

\begin{defi}
	Define the CGS $\calG_\mathit{counter} := (S, s_\mathit{init}, \{\veri, \refu\}, \moves, \delta, L)$ where 
	\begin{align*}
		S := \{s_O \mid O \subseteq \{\bullet, \triangle,\blacktriangle, a, b, \mathit{error}\} \} \cup \{s_\mathit{init}, s_1, s_2, s_3, s_4\}
	\end{align*}
	the moves are given by $\moves := 2^{\{\bullet, \triangle,\blacktriangle, a, b, \mathit{error}\}}  \times \mathbb{B}$.
	Transition function $\delta : S \times (\{\veri, \refu\} \to \moves) \to S$ is defined by 
	\begin{align*}
		\delta(s_{O}, [\veri \mapsto (x, \_), \refu \mapsto (y, \_)]) &:= s_{(x \cap \{\bullet, \triangle,\blacktriangle, a, b\}) \cup (y \cap \{\mathit{error}\})} \\[0.2cm]
		\delta(s_{\mathit{init}}, [\veri \mapsto \_, \refu \mapsto (\_, b)]) &:= \begin{cases}
			\begin{aligned}
				&s_1 \quad \quad\quad\;\;&&\text{if } b = \top\\
				&s_2 \quad &&\text{if } b = \bot
			\end{aligned}
		\end{cases}\\[0.2cm]
		\delta(s_1, [\veri \mapsto \_, \refu \mapsto (\_, b)]) &:= \begin{cases}
			\begin{aligned}
				&s_1 \quad &&\text{if } b = \top\\
				&s_{\{\bullet, \triangle, \blacktriangle\}} \quad &&\text{if } b = \bot
			\end{aligned}
		\end{cases}\\[0.2cm]
		\delta(s_2, [\veri \mapsto \_, \refu \mapsto (\_, b)]) &:= \begin{cases}
			\begin{aligned}
				&s_2 \quad \quad\quad\;\; &&\text{if } b = \top\\
				&s_3 \quad &&\text{if } b = \bot
			\end{aligned}
		\end{cases}\\[0.2cm]
		\delta(s_3, \_) &:= s_4 \\
		\delta(s_4, [\veri \mapsto (\_, b), \refu \mapsto \_]) &:= \begin{cases}
			\begin{aligned}
				&s_4 \quad &&\text{if } b = \top\\
				&s_{\{\bullet, \triangle, \blacktriangle\}} \quad &&\text{if } b = \bot
			\end{aligned}
		\end{cases}
	\end{align*}
	where $\_$ denotes an arbitrary value. The labeling $L : S \to 2^{\{\bullet, \triangle,\blacktriangle, a, b, \mathit{error}\}}$ is defined by
	\begin{align*}
		L(s_O) &= O\\
		L(s_\mathit{init}) = L(s_1) = L(s_2) = L(s_4) &= \emptyset\\
		L(s_3) &= \{\mathit{errorStart}\} \tag*{\demo}
	\end{align*}
\end{defi}

\noindent
Once a state $s_O$ for some $O \subseteq \{\bullet, \triangle,\blacktriangle, a, b, \mathit{error}\}$ is reached, the verifier can determine which of the propositions in $\{\bullet, \triangle,\blacktriangle, a, b\}$ hold at the next step and refuter decides if $\mathit{error}$ holds (the first case in the definition of $\delta$).
This part of the state space is responsible for generating the actual counter.
The remaining states ($s_\mathit{init}, s_1, s_2, s_3$, and $s_4$) are used to determine \emph{when} the counter should start.
The structure is sketched in \Cref{fig:gameStructure}. 
States $s_\mathit{init}, s_1$, and $s_2$ are controlled by \refu{}, \ie, the move selected by \refu{} determines the successor state, whereas state $s_4$ is controlled by \veri{} (see the definition $\delta$).
If \refu{} moves to $s_1$, the start of the actual counter can be delayed by looping in $s_1$. 
If \refu{} moves to $s_2$, the $\mathit{errorStart}$ proposition occurs at some point, and afterward, $\veri$ can decide when the counter starts by looping in $s_4$ (this will be of importance to verify the construction in the case $k > 1$).

\paragraph{Specification}

We enforce that $\veri$ produces a correct counter once state $s_{\{\bullet, \triangle, \blacktriangle\}}$ is reached for the first time. 
Consider the \HyperATLS{} specification $\mathit{correct}_1$ defined as follows:

\begin{formulaBox}{$\mathit{correct}_1$}
	\small
	\begin{align}
		(\neg \blacktriangle_{\pi_1} )\ltlW \bigg( &\blacktriangle_{\pi_1}\land \nonumber\\
		& \ltlg \Big(\blacktriangle_{\pi_1} \to \triangle_{\pi_1}) \land (\triangle_{\pi_1} \to \bullet_{\pi_1}) \land (\bullet_{\pi_1} \leftrightarrow \ltlnext^n \bullet_{\pi_1}  )\Big) \, \land \label{eq:base1}\\
		&\ltlg\Big(\triangle_{\pi_1} \leftrightarrow  \big(\neg a_{\pi_1} \land  \ltlN ((\neg a_{\pi_1}) \ltlU  \bullet_{\pi_1} )\big) \Big) \, \land \label{eq:base2}\\
		&\ltlg\Big(\blacktriangle_{\pi_1} \leftrightarrow \big(\neg b_{\pi_1} \land \ltlN\big((\bullet_{\pi_1} \rightarrow \neg b_{\pi_1}) \ltlU  \triangle_{\pi_1}\big)\big) \Big) \, \land \label{eq:base3}\\
		&\ltlg\Big( (a_{\pi_1} \leftrightarrow \ltlnext^n a_{\pi_1}) \leftrightarrow \ltlN  \mathit{before}(\neg a_{\pi_1}, \bullet_{\pi_1})  \Big) \, \land \label{eq:base4}\\
		&\Big(\big(\mathit{exactlyOnce}(\mathit{error}_{\pi_1}) \land \ltlg(\mathit{error}_{\pi_1} \to \bullet_{\pi_1}) \big)\rightarrow \mathit{falseAlarm}_1\Big) \bigg) \label{eq:base5}
	\end{align}
\end{formulaBox}

\noindent
The initial weak until accounts for the offset before the counter is started, \ie, the counter should be correct once $\blacktriangle_{\pi_1}$ holds for the first time (which is the case when state $s_{\{\bullet, \triangle, \blacktriangle\}}$ is reached for the first time).
We discuss the conjuncts \ref{eq:base1} - \ref{eq:base5}  in detail:

\begin{enumerate}[leftmargin=1cm]
	\item[(\ref{eq:base1})] Formula \ref{eq:base1} states that the basic behavior of the separating constructs is correct: The implications between $\blacktriangle, \triangle$ and $\bullet$ hold, and $\bullet$ occurs every $n$ steps.
	\item[(\ref{eq:base2})] Formula \ref{eq:base2}, ensures that the $\triangle$ proposition is set correctly. $\triangle$ should hold exactly if the next $a$-configuration is $0$, \ie, $a$ does not hold until the next $a$-configuration begins (which is marked by $\bullet$).
	
	\item[(\ref{eq:base3})] Similar to Formula \ref{eq:base2}, Formula \ref{eq:base3} ensures that $\blacktriangle$ is set exactly if the next $b$-configuration is $0$, so $b$ does not hold until the next $b$-configuration begins (marked by $\triangle$).
	Recall that the bits of the $b$-counter are only those where $\bullet$ holds. 
	
	\item [(\ref{eq:base4})]  Formula \ref{eq:base4} ensures the correctness of the $a$-counter.
	Here 
	\begin{align*}
		\mathit{before}(\varphi, \psi) := (\neg \psi) \ltlU (\varphi \land \neg \psi)
	\end{align*}
	expresses that $\varphi$ should hold at some time strictly before $\psi$ holds for the first time.
	To encode that the $a$-counter is incremented (or reset), we use the following fact.
	Assume we are given two $m$-bit counters $\alpha = \alpha_0, \ldots, \alpha_m$ and $\beta = \beta_0, \ldots, \beta_m$ (with big-endian encoding, \ie, $\alpha_m, \beta_m$ are the least significant bits).
	Let $c(\alpha), c(\beta) \in \{0, \ldots, 2^m-1\}$ give the value of the counters.
	Then
	\begin{align}\label{eq:counter}
		\Big[c(\beta) = c(\alpha) + 1 \, \mod 2^m\Big] \quad \text{iff} \quad \Big[\forall i\ldot (\alpha_i = \beta_i) \leftrightarrow (\exists j > i. \alpha_j = 0)\Big].
	\end{align}
	Formula \ref{eq:base4} thus encodes that the value of the $a$-counter is incremented by using $\ltlN^n$ to compare the same position across two consecutive $a$-configurations.
	
	\item [(\ref{eq:base5})] Lastly, Formula \ref{eq:base5} ensures the correctness of the $b$-counter. 
	Here 
	\begin{align*}
		\mathit{exactlyOnce}(\psi) := (\neg \psi) \ltlU (\psi \land \ltlnext \ltlg(\neg \psi))
	\end{align*}
	expresses that $\psi$ holds exactly once.
	Different from Formula \ref{eq:base3}, we cannot encode the correctness directly via $\ltlN$s as the relevant positions of the $b$-counter are exponentially many steps apart.
	Instead, we use \refu{}s ability to raise the $\mathit{error}$ proposition. 
	To challenge the correctness, \refu{} should set $\mathit{error}$ at the \emph{first} bit in the $b$-counter that is incorrect.
	After the challenge occurred, we check if \refu{} spotted a genuine mistake in the counter (so $\mathit{falseAlarm}_1$ should state that the challenge was a false alarm; the counter at the position pointed to by $\mathit{error}$ was correct).
	To check the challenge we need to compare the position marked with $\mathit{error}$ with the same position  in the \emph{previous} $b$-configuration (which is the position $n \cdot 2^n$ steps earlier).
	The crux is that once the $\mathit{error}$ proposition is set, we can identify this position by using the $a$-counter (which we can assume to be correct).
	Consider the formula $\mathit{prevPos}$ defined as follows:
	
	\begin{formulaBox}{$\mathit{prevPos}$}
		\small
		\begin{align}
			&\bullet_{\pi_1} \land \label{eq:prev1} \\
			&\bigg((\neg \triangle_{\pi_1}) \ltlU \Big(\triangle_{\pi_1} \land \ltlnext\big( (\neg \triangle_{\pi_1}) \ltlU \mathit{error}_{\pi_1} \big)\Big)\bigg) \; \land \label{eq:prev2} \\ 
			&\bigg(\bigwedge_{i = 0}^{n-1} \Big(\big(\ltlnext^i a_{\pi_1}\big) \leftrightarrow \ltlg \big(\mathit{error}_{\pi_1} \rightarrow \ltlnext^i a_{\pi_1}\big)\Big)\bigg) \label{eq:prev3}
		\end{align}
	\end{formulaBox}
	
	\noindent
	Formula $\mathit{prevPos}$ holds exactly once (assuming that the $a$-counter is correct and $\mathit{error}_{\pi_1}$ occurs exactly once), and identifies the same position in the $b$-configuration that precedes that in which $\mathit{error}_{\pi_1}$ holds. 
	There are three conditions that identify this (unique) position:
	\begin{enumerate}[leftmargin=1cm]
		\item[(\ref{eq:prev1})] The position aligns with $\bullet_{\pi_1}$ (\ie, a position where $b$ is relevant).
		\item[(\ref{eq:prev2})] The position lies within the $b$-configuration that preceeds the $b$-configuration in which $\mathit{error}_{\pi_1}$ holds, \ie, $\triangle_{\pi_1}$ holds exactly once before $\mathit{error}_{\pi_1}$ holds.
		\item[(\ref{eq:prev3})] The position corresponds to the same bit of the $b$-counter. Each bit of the $b$-counter is uniquely characterized by the $a$-configuration that starts at that bit.
		To check that we are at the same position, the $a$-configuration should thus be the same as the $a$-configuration at the position pointed to by $\mathit{error}$.
	\end{enumerate}
	
	\noindent
	Using $\mathit{prevPos}$, we can then state that $\mathit{error}$ does not mark a genuine error in the $b$-counter. Here we again make use of \Cref{eq:counter}.
	We define formula $\mathit{falseAlarm}_1$ (which is used in formula $\mathit{correct}_1$) as follows:
	
	\begin{formulaBox}{$\mathit{falseAlarm}_1$}
		\small
			\begin{align*}
			\begin{split}
				&\Big(\ltlg \big(\mathit{prevPos} \rightarrow b_{\pi_1}\big) \leftrightarrow  \ltlg \big(\mathit{error}_{\pi_{1}} \rightarrow b_{\pi_1}\big)\Big)\\ 
				&\leftrightarrow \Big(\ltlg \big(\mathit{prevPos} \rightarrow  \ltlN \mathit{before}(\bullet_{\pi_1} \land \neg b_{\pi_1}, \triangle_{\pi_1} )\big)\Big)
			\end{split}
		\end{align*}
	\end{formulaBox}
	
	\noindent
	The bit of the $b$-counter at the position where $\mathit{prevPos}$ and the bit where $\mathit{error}_{\pi_{i}}$ holds (note that both positions are unique) should agree iff in the $b$-configuration pointed to by $\mathit{prevPos}$ there exists a $0$-bit at a less significant position (\cf~\Cref{eq:counter}).
\end{enumerate}

\noindent
It is easy to see that $\llangle \veri \rrangle \pi_1\ldot \mathit{correct}_1$ holds on $\calG_\mathit{counter}$. 
The only winning strategy for $\veri$ is to output a correct counter (as soon as state $s_{\{\bullet, \triangle, \blacktriangle\}}$ is reached).
In particular, on any play produced by a winning strategy for $\veri$, $\blacktriangle_{\pi_1}$ occurs exactly every $\mathcal{C}(1, n)$ steps (once $s_{\{\bullet, \triangle, \blacktriangle\}}$ is reached for the first time).

\subsection{Counter Verification Using Smaller Counter}\label{sec:counterConst2}

To obtain a yardstick of length $\mathcal{C}(k, n)$ for $k > 1$, we use the same counter structure in \Cref{fig:countStruct} but set the length (number of bits) of the $a$-counter to be $\mathcal{C}(k-1, n)$.
To verify the correctness of the counter, we then use a smaller yardstick of length $\mathcal{C}(k-1, n)$.
The final formula has the form
\begin{align*}
	\textstyle\llangle \veri \rrangle \pi_k\ldot \ldots \llangle \veri \rrangle \pi_1\ldot \bigwedge_{i=1}^k \mathit{correct}_i.
\end{align*}
We assert that every winning strategy for $\veri$ constructing path $\pi_i$ encodes a counter of length $\mathcal{C}(i, n)$ and use the counter on $\pi_{i-1}$ (which we can inductively assume to be correct) to ensure its correctness.
The construction in \Cref{sec:counterConst1} gives the base case for $k = 1$.
To verify the correctness of the counter on $\pi_i$ (for $i > 1$), we make use of the two modes available to \refu{} in $\calG_\mathit{counter}$ (see \Cref{fig:gameStructure}). 
By moving to state $s_1$, \refu{} can start the counter at any time (we will use this to verify that the placement of $\bullet$ and the $a$-counter is correct).
By moving to $s_2$, \refu{} can set the $\mathit{errorStart}$ proposition at any time, after which \veri{} can postpone the start of the counter.
We use this to verify the correctness of the $b$-counter. 

For $i > 1$, define formula $\mathit{correct}_i$ as follows:

\begin{formulaBox}{$\mathit{correct}_i$}
	\small
	\begingroup
	\allowdisplaybreaks
	\begin{align}
		(\neg \blacktriangle_{\pi_i}) \ltlW \bigg(&\blacktriangle_{\pi_i} \land\nonumber \\
		&\ltlg \Big((\blacktriangle_{\pi_i} \rightarrow \triangle_{\pi_i}) \land (\triangle_{\pi_i} \rightarrow \bullet_{\pi_i}) \Big) \, \land \label{eq:ind1}\\
		&\ltlG\Big(\blacktriangle_{\pi_{i-1}} \land \bullet_{\pi_i} \rightarrow (\neg \bullet_{\pi_i} \land \neg \blacktriangle_{\pi_{i-1}}) \ltlU (\bullet_{\pi_i} \land \blacktriangle_{\pi_{i-1}}) ) \Big) \, \land \label{eq:ind11}\\
		&\ltlg\Big(\triangle_{\pi_i} \leftrightarrow \big( \neg a_{\pi_i} \land  \ltlN( (\neg a_{\pi_i}) \ltlU  \bullet_{\pi_i}) \big) \Big) \, \land \label{eq:ind2}\\
		&\ltlg\Big(\blacktriangle_{\pi_i} \leftrightarrow \big(\neg b_{\pi_i} \land \ltlN\big((\bullet_{\pi_i} \rightarrow \neg b_{\pi_i}) \ltlU  \triangle_{\pi_i}\big) \big)\Big) \, \land \label{eq:ind3}\\
		\begin{split}
			&\ltlg\Big( \big(a_{\pi_i} \land \blacktriangle_{\pi_{i-1}} \land \mathit{next}(\blacktriangle_{\pi_{i-1}}, a_{\pi_i})\big) \rightarrow  \ltlN \mathit{before}( \neg a_{\pi_i}, \bullet_{\pi_i})  \Big) \, \land \\
			&\ltlg\Big( \big(\neg a_{\pi_i} \land \blacktriangle_{\pi_{i-1}} \, \land \mathit{next}(\blacktriangle_{\pi_{i-1}}, \neg a_{\pi_i})\big) \rightarrow  \ltlN \mathit{before}(\neg a_{\pi_i}, \bullet_{\pi_i})  \Big) \,\land \\
			&\ltlg\Big( \big(\neg a_{\pi_i} \land \blacktriangle_{\pi_{i-1}} \land \mathit{next}(\blacktriangle_{\pi_{i-1}}, a_{\pi_i})\big) \rightarrow \ltlN \big(a_{\pi_i} \ltlU \bullet_{\pi_i}\big) \Big) \, \land \\
			&\ltlg\Big( \big(a_{\pi_i} \land \blacktriangle_{\pi_{i-1}} \land \mathit{next}(\blacktriangle_{\pi_{i-1}}, \neg a_{\pi_i})\big) \rightarrow  \ltlN\big(a_{\pi_i} \ltlU  \bullet_{\pi_i}\big) \Big) \, \land \label{eq:ind4}
		\end{split}\\
		&\Big(\mathit{exactlyOnce}(\mathit{error}_{\pi_i}) \land \mathit{exactlyOnce}(\mathit{errorStart}_{\pi_{i-1}})  \rightarrow \mathit{falseAlarm}_i \Big)\bigg) \label{eq:ind5}
	\end{align}
	\endgroup
\end{formulaBox}

\noindent
We again discuss each conjunct separately.
Let $\tilde{n} := \mathcal{C}(i-1, n)$ be the length of the smaller yardstick on $\pi_{i-1}$

\begin{enumerate}[leftmargin=1.1cm]
	\item[(\ref{eq:ind1})] Formula \ref{eq:ind1} ensures the basic implications between $\bullet, \triangle$ and $\blacktriangle$.
	\item[(\ref{eq:ind11})] Formula \ref{eq:ind11} ensures that $\bullet_{\pi_i}$ holds every $\tilde{n}$ steps.
	Different from Formula \ref{eq:base1} we cannot express this using $\ltlN$s. Instead, we use the smaller counter that will be generated on $\pi_{i-1}$.
	On $\pi_{i-1}$, \refu{} can start the counter at any possible time (by looping in $s_1$).
	Formula \ref{eq:ind11} now states that if $\blacktriangle_{\pi_{i-1}}$ holds at the same time as $\bullet_{\pi_i}$, then $\bullet_{\pi_i}$ should hold the next time $\blacktriangle_{\pi_{i-1}}$ holds (which is $\tilde{n}$ steps apart if the counter on $\pi_{i-1}$ is correct).
	As \refu{} can start the counter on $\pi_{i-1}$ at any time and $\pi_{i-1}$ is resolved \emph{after} $\pi_i$ is fixed, any winning strategy for \veri{} on $\pi_i$ must set $\bullet_{\pi_i}$ exactly $\tilde{n}$ steps apart.
	
	\item[(\ref{eq:ind2})] Formula \ref{eq:ind2} ensures that $\triangle_{\pi_i}$ holds iff the next $a$-configuration is $0$.
	
	\item[(\ref{eq:ind3})] Formula \ref{eq:ind3} ensures that $\blacktriangle_{\pi_i}$ holds iff the next $b$-configuration is $0$.
	
	\item[(\ref{eq:ind4})] Formula \ref{eq:ind4} ensures the correctness of the $a$-counter. 
	We split this into 4 separate conditions, and, similar to Formula \ref{eq:ind11}, use the counter on $\pi_{i-1}$ to compare positions that are $\tilde{n}$ steps apart.
	Here 
	\begin{align*}
		\mathit{next}(\varphi, \psi) := (\neg \varphi) \ltlU (\varphi \land \psi)
	\end{align*}
	expresses that $\psi$ should hold at the next occurrence of $\varphi$.
	For example, the first line of Formula \ref{eq:ind4} covers the following case: If both $a_{\pi_i}$ and $\blacktriangle_{\pi_{i-1}}$ hold now and the next time $\blacktriangle_{\pi_{i-1}}$ holds, $a_{\pi_i}$ also holds (so the current bit of the current $a$-configuration does not change), then there should be a $0$-bit before the end of the current $a$-configuration (\cf~\Cref{eq:counter}).
	Similarly, if the value of $a_{\pi_i}$ now and the next time $\blacktriangle_{\pi_{i-1}}$ holds is different, then there should not be a $0$-bit before the end of the current $a$-configuration (expressed in the last two cases of Formula \ref{eq:ind4}).
	As \refu{} can start the counter on $\pi_{i-1}$ at any time, any bits in consecutive $a$-configurations on $\pi_i$ can be compared, so the only winning strategy for \veri{} on $\pi_i$ is to produce a correct $a$-counter. 
	
	\item[(\ref{eq:ind5})] Ensuring the correctness of the $b$-counter is more challenging.
	We again let \refu{} challenge the correctness of the $b$-counter by setting $\mathit{error}$.
	However, different from Formula \ref{eq:base5}, we cannot directly identify the same position in the previous $b$-configuration (note that the construction of $\mathit{prevPos}$ depends on $n$).
	Instead, if $\mathit{error}_{\pi_i}$ is set by \refu{}, \refu{} is also responsible for identifying the same position in the \emph{previous} $b$-configuration by setting $\mathit{errorStart}_{\pi_{i-1}}$ on $\pi_{i-1}$ (which he can do by moving the game producing $\pi_{i-1}$ to state $s_2$, see \Cref{fig:gameStructure}).
	We can then compare the two bits of the $b$-counter pointed to by $\mathit{errorStart}_{\pi_{i-1}}$ and $\mathit{error}_{\pi_i}$.
	As the position where $\mathit{errorStart}_{\pi_{i-1}}$ is set is determined by \refu{} we additionally need to check that the position is correct, \ie, corresponds to the same position within the previous $b$-configuration. 
	We define formula $\mathit{falseAlarm}_i$ as follows:
	
	\begin{formulaBox}{$\mathit{falseAlarm}_i$}
		\small
		\begin{align*}
			\mathit{counterIsCorrect}_i \lor \mathit{wrongConfiguration}_i \lor \mathit{wrongBit}_i
		\end{align*}
	\end{formulaBox}

	\noindent
	We discuss each disjunct of $\mathit{falseAlarm}_i$ separately. Note that only one of these disjuncts needs to hold in order to show that that the supposed error identified by $\refu$ is not genuine. 
	
	\begin{itemize}
		\item $\mathit{counterIsCorrect}_i$ expresses that the two positions pointed to by $\mathit{errorStart}_{\pi_{i-1}}$ and $\mathit{error}_{\pi_i}$ are correct, \ie, \refu{} did not point to an actual error in the $b$-counter. We define it as follows:
		
		\begin{formulaBox}{$\mathit{counterIsCorrect}_i$}
			\small
			\begin{align*}
				\begin{split}
					&\Big(\ltlg \big(\mathit{errorStart}_{\pi_{i-1}} \rightarrow b_{\pi_i}\big) \leftrightarrow  \ltlg \big(\mathit{error}_{\pi_{i}} \rightarrow b_{\pi_i}\big)\Big)\\ 
					&\leftrightarrow \Big(\ltlg \big(\mathit{errorStart}_{\pi_{i-1}} \rightarrow  \ltlN \mathit{before}(\bullet_{\pi_i} \land \neg b_{\pi_i}, \triangle_{\pi_i} )\big)\Big)
				\end{split}
		\end{align*}
	\end{formulaBox}
	
		\noindent
			The formula is similar to $\mathit{falseAlarm}_1$, but uses $\mathit{errorStart}_{\pi_{i-1}}$ instead of $\mathit{prevPos}$ to point to the same position in the previous $b$-configuration.

		\item $\mathit{wrongConfiguration}_i$ expresses that $\mathit{errorStart}_{\pi_{i-1}}$ and $\mathit{error}_{\pi_{i}}$ do not occur in two consecutive $b$-configurations on $\pi_i$, \ie, there is not exactly one $\triangle_{\pi_i}$ between both.
		
		\begin{formulaBox}{$\mathit{wrongConfiguration}_i $}
			\small
			\begin{align*}
				\ltlG\Big(\mathit{errorStart}_{\pi_{i-1}} \rightarrow \neg \Big((\neg \triangle_{\pi_i} )\ltlU \big(\triangle_{\pi_i} \land \ltlN ((\neg \triangle_{\pi_i}) \ltlU \mathit{error}_{\pi_i})\big)\Big)\Big) 
			\end{align*}
		\end{formulaBox}
		
		\item $\mathit{wrongBit}_i$ expresses that $\mathit{errorStart}_{\pi_{i-1}}$ and $\mathit{error}_{\pi_{i}}$ do not point to the same bit in the two consecutive $b$-configurations.
		We again use the fact that a bit of the $b$-counter is precisely characterized by the $a$-configuration that starts at the bit.
		If \refu{} moved the game producing $\pi_{i-1}$ to $s_3$ (which he did as $\mathit{errorStart}_{\pi_{i-1}}$ occurs), \veri{} can loop in state $s_4$ and decide when to start the counter.
		To show that $\mathit{errorStart}_{\pi_{i-1}}$ and $\mathit{error}_{\pi_{i}}$ point to different $b$-bits on $\pi_i$, \veri{} should loop in $s_4$ and find a bit position at which the $a$-configuration that starts at position $\mathit{errorStart}_{\pi_{i-1}}$ and the $a$-configuration that starts at position $\mathit{error}_{\pi_{i}}$ differ.		
		 We define $\mathit{wrongBit}_i$ as follows:
		 
		 \begin{formulaBox}{$\mathit{wrongBit}_i$}
		 	\small
		 	\begin{align}
		 		&\ltlG\Big(\mathit{errorStart}_{\pi_{i-1}} \rightarrow \mathit{before}(\blacktriangle_{\pi_{i-1}}, \bullet_{\pi_i})  \Big) \, \land \label{eq:wrongBit1}\\
		 		\begin{split}
		 			&\neg\bigg(\Big(\ltlg \big(\mathit{errorStart}_{\pi_{i-1}} \rightarrow \mathit{next}(\blacktriangle_{\pi_{i-1}}, a_{\pi_{i}})\big) \Big) \\
		 			&\quad\quad\leftrightarrow \Big(\ltlg \big(\mathit{error}_{\pi_i} \rightarrow \mathit{next}(\blacktriangle_{\pi_{i-1}}, a_{\pi_{i}}) \big)\Big)\bigg)\label{eq:wrongBit2}
		 		\end{split}
		 	\end{align}
		 \end{formulaBox}
	 
	 	\noindent
		 Formula \ref{eq:wrongBit1} ensures that $\veri$ starts the counter soon enough by leaving $s_4$, \ie, after $\mathit{errorStart}_{\pi_{i-1}}$ is set, the counter on $\pi_{i-1}$ is started within the same $a$-configuration on $\pi_i$ (before $\bullet_{\pi_i}$ holds).
		 Formula \ref{eq:wrongBit2} states that $\veri$ started the counter at a time that shows that $\refu$ set $\mathit{errorStart}_{\pi_{i-1}}$ at a wrong location (\ie, did not point to a genuine error).
		 That is, the bit of the $a$-configuration where $\blacktriangle_{\pi_{i-1}}$ holds (for the first time) after $\mathit{errorStart}_{\pi_{i-1}}$ differs from the value of the $a$-configuration where $\blacktriangle_{\pi_{i-1}}$ holds (for the first time) after $\mathit{error}_{\pi_i}$.
	\end{itemize}
\end{enumerate}

\subsection{Lower Bound Proofs}

We use the counter construction to prove \refTheo{lb} and \refTheo{lb2}.

\lb*
\begin{proof}
	In the case where $k = 0$, the \HyperATLS{} formula is a \HyperLTL{} formula and we can reuse the $(l-1)$-\EXPSPACE{} lower bound shown by Rabe \cite{Rabe16}.
	So let us assume that $k \geq 1$.
	We distinguish if $l \geq 1$ or $l = 0$.
	
	\begin{itemize}
		\item If $l \geq 1$:
			The counter construction in \Cref{sec:counterConst1} and \Cref{sec:counterConst2} gives us a formula of the form
			\begin{align*}
				\textstyle\llangle \veri\rrangle \pi_k \ldots \llangle \veri\rrangle \pi_1\ldot \bigwedge_{i=1}^k \mathit{correct}_i
			\end{align*}
			and a game structure $\calG_\mathit{counter}$ (note that the size of $\calG_\mathit{counter}$ is independent of $n$), such that \refu{} can start the counter on path $\pi_k$ at any time and, once started, \veri{} needs to produce a correct counter, \ie,  $\blacktriangle_{\pi_k}$ holds every $\mathcal{C}(k, n)$ steps.
			Given a $\mathcal{C}(k, n)$-space-bounded Turing machine $\mathcal{T}$ and an input $w$ (with $|w| = n$), we can design a formula  of the form
			\begin{align}
				\exists \pi\ldot \llangle\veri \rrangle \pi_k \ldots \llangle \veri \rrangle \pi_1\ldot \psi_{(\calT, w)}\label{eq:finalConstruction}
			\end{align}
			and a game structure $\calG_{\calT}$ (the size of which is constant and does not depend on $w$), such that $\calG_\calT \models (\ref{eq:finalConstruction})$ iff $\calT$ accepts $w$.
			The idea of this encoding is similar to \cite{Rabe16}: 
			First, formula $\psi_{(\calT, w)}$ contains $\bigwedge_{i=1}^k \mathit{correct}_i$ as a conjunct to ensure that the counters on paths $\pi_1, \ldots, \pi_k$ are correct.
			In addition, the path $\pi$ should enumerate consecutive configurations of $\calT$ (each of which is $\mathcal{C}(k, n)$ steps long).
			The initial configuration should contain the input $w$ (which we simply hard-code in the formula).
			Using the yardstick (which \refu{} can start at any time), we can compare positions which are $\mathcal{C}(k, n)$-steps apart and -- as transitions of a TM are local -- enforce that $\pi$ encodes a valid accepting computation (see \cite[Lemma 5.6.3]{Rabe16} for details).
			Model checking of a formula of the form (\ref{eq:finalConstruction}) (with $k$ complex and $1$ simple quantifier) is thus $2k$-\EXPSPACE-hard (Note that $\mathcal{C}(k, n) \geq \mathit{Tower}_2(2k, n)$. We can scale the counter to an arbitrary base $c > 2$ by using $\lceil \log_2 c \rceil$ propositions for the counter).
			In cases of more than a single simple quantifier, we can construct a larger yardstick by adding the construction of Rabe \cite{Rabe16} (which extends the length by one exponent with each simple quantifier) to ours (which extends the length by two exponents with each complex quantifier).
			See \cite[Lemma 5.6.2]{Rabe16} for details.

		\item If $l = 0$:
			Similar to the previous case, we use our counter construction.
			Assume we are given a $2^{\mathcal{C}(k-1, n)}$-space-bounded Turing machine $\calT$ and an input $w$ (with $|w| = n$).
			We design a formula  
			\begin{align}
				\llangle \veri \rrangle \pi\ldot \llangle \veri \rrangle \pi_{k-1} \ldots \llangle \veri \rrangle \pi_1\ldot \psi_{(\calT, w)}
			\end{align}
			and game structure such that $\veri$ on path $\pi$ should produce a correct counter with $\mathcal{C}(k-1, n)$ many bits (similar to the $a$-counter) and in place of the $b$-counter output configurations of $\calT$ (so each configuration has length $2^{\mathcal{C}(k-1, n)}$).
			We use the yardstick of length $\mathcal{C}(k-1, n)$ on $\pi_{k-1}$ to verify the correctness of the $a$-counter on $\pi$ (if $k = 1$, we verify it directly using $\ltlN$s).
			We verify that consecutive configurations of $\calT$ are correct similar to the verification of the $b$-counter via the $\mathit{error}$ proposition (which now points to errors in the TM configurations as opposed to errors in the $b$-counter).
			Model checking a formula with $k$ complex quantifiers is thus $(2k-1)$-\EXPSPACE{}-hard.
			\qedhere
	\end{itemize}
\end{proof}

\lbb*
\begin{proof}
	If $k = 0$, we get an $(l-2)$-\EXPSPACE{} lower bound from \HyperLTL{} model checking hardness \cite{Rabe16} which is even better than the $(l-3)$-\EXPSPACE{} bound required.
	So let us assume that $k \geq 1$.
	We distinguish if $l \geq 1$ or $l = 0$.
	
	\begin{itemize}
		\item If $l \geq 1$:
			We first observe that we can construct a formula of the form
			\begin{align*}
				\textstyle\llangle \veri \rrangle \pi_k \ldots \llangle \veri \rrangle \pi_1\ldot \bigwedge_{i=1}^k \mathit{correct}'_i
			\end{align*}
			of constant size and a CGS $\calG$ (whose size depends on $n$) such that $\veri$ is required to output a yardstick of length $\mathcal{C}(k-1, n)$ on $\pi_k$.
			The construction is similar to the counter defined in \Cref{sec:counterConst1} and \Cref{sec:counterConst2} but modifies $\mathit{correct}_1$ (note that the size of $\mathit{correct}_1$ depends on $n$).  
			We ensure that $\pi_1$ no longer produces a yardstick of length $\mathcal{C}(1, n)$ but only of length $n$.
			We modify the game structure such that $n$ is hard-coded (\ie, $\blacktriangle$ occurs exactly every $n$ steps) and can ensure the correctness of the $n$-bit counter between two $\blacktriangle$s with a formula that does not depend on $n$.
			Similar to \refTheo{lb}, we can then encode the acceptance of a $\mathcal{C}(k-1, n)$-space bounded TM $\calT$ on input $w$  (with $|w|=n$) as a formula
			\begin{align*}
				\exists \pi\ldot \llangle \veri \rrangle \pi_k \ldots \llangle \veri \rrangle \pi_1\ldot \psi_{(\calT, w)}
			\end{align*}
			whose size does not depend on the size of input $w$ (the input is hard-coded in the game structure).
			Verification of a formula of the above form (with $k$ complex and $1$ simple quantifier) is thus $(2k-2)$-\EXPSPACE-hard in the size of the system. 
			In case of more than a single simple quantifier, we, again, use the construction of Rabe \cite{Rabe16} to extend the yardstick.
			
		\item If $l = 0$:
			This is analogous to the second case in the proof of \refTheo{lb}.
			To ensure that the size of the formula is independent of $n$, we again let $\pi_1$ only produce a counter of length $n$ (compared to $\mathcal{C}(1, n)$ in the proof of \refTheo{lb}).
			\qedhere
	\end{itemize}
	
\end{proof}

\section{Experimental Evaluation}\label{sec:proto}

As indicated by our lower bounds, \HyperATLS{} model checking for the full logic is not practical.
Instead, we focus on formulas of the form $[\llangle A_1 \rrangle \pi_1\ldots \llangle A_n\rrangle \pi_n]~\psi$ where $\psi$ is quantifier-free.
In terms of model checking complexity, this fragment is much cheaper than full \HyperATLS{}; it is $2$-\EXPTIME{}-complete in the size of the specification (as the fragment still subsumes \LTL{} realizability) and \PTIME{}-complete in the size of the system.
This fragment of \HyperATLS{} is expressive:
It subsumes alternation-free \HyperLTL{} specifications, many security specifications such as simulation-based security (\Cref{sec:simSec}), the game-based model checking approximation from \cite{CoenenFST19} (\Cref{sec:coenen}), and the asynchronous approach from \Cref{sec:async}.
Model checking a formula in the above fragment can be reduced to the solving of a parity game by building the product of the game structure with a deterministic parity automaton for $\psi$.
We account for the order of move selection by simulating a single step in the CGS with multiple intermediate steps where all agents in the same stage fix their move incrementally. 

We have implemented this construction in \texttt{hyperatlmc}, a prototype model checker for \HyperATLS{} formulas of the form $[\llangle A_1 \rrangle \pi_1\ldots \llangle A_n\rrangle \pi_n]~\psi$ where $\psi$ is quantifier-free.
Our tool uses \texttt{rabinizer4} \cite{KretinskyMSZ18} to compute deterministic parity automata and \texttt{pgsolver} \cite{FriedmannL09} to solve parity games. 

In this section, we give a simple operational semantics for a boolean programming language into CGSs (in \Cref{sec:semantics}). 
This allows us to check the (synchronous and asynchronous) security properties from \Cref{sec:examples1,sec:examples2} (which are stated at the level of CGSs) on programs.
Afterward, we report on experiments with \texttt{hyperatlmc} in \Cref{sec:experiments}.

\subsection{Compiling Programs into Game Structures}\label{sec:semantics}

\begin{figure}
	\small
	\begin{minipage}{0.33\textwidth}
		\vspace{2.5mm}
		\begin{prooftree}
			\AxiomC{}
			\UnaryInfC{$\langle x \leftarrow e, \mu \rangle \leadsto \langle \flat, \mu[x \mapsto \llbracket e \rrbracket(\mu)] \rangle$}
		\end{prooftree}
	\end{minipage}%
	\begin{minipage}{0.33\textwidth}
		\begin{prooftree}
			\AxiomC{$\llbracket e \rrbracket(\mu) = \top$}
			\UnaryInfC{$\langle \texttt{if}(e, P_1, P_2), \mu \rangle \leadsto \langle P_1, \mu \rangle$}
		\end{prooftree}
	\end{minipage}%
	\begin{minipage}{0.33\textwidth}
		\begin{prooftree}
			\AxiomC{$\llbracket e \rrbracket(\mu) = \bot$}
			\UnaryInfC{$\langle\texttt{if}(e, P_1, P_2), \mu \rangle \leadsto \langle P_2, \mu \rangle$}
		\end{prooftree}
	\end{minipage}
	
	\vspace{0.3cm}
	
	\begin{minipage}{0.33\textwidth}
		\begin{prooftree}
			\AxiomC{$b \in \mathbb{B}$}
			\AxiomC{$P \in \{L, H\}$}
			\BinaryInfC{$\langle x \leftarrow \texttt{Read}_P, \mu \rangle \leadsto \langle \flat, \mu[x \mapsto b] \rangle$}
		\end{prooftree}
	\end{minipage}%
	\begin{minipage}{0.33\textwidth}
		\vspace{3.7mm}
		\begin{prooftree}
			\AxiomC{}
			\UnaryInfC{$\langle P_1 \oplus P_2, \mu \rangle \leadsto \langle P_1, \mu\rangle$}
		\end{prooftree}
	\end{minipage}%
	\begin{minipage}{0.33\textwidth}
		\vspace{3.7mm}
		\begin{prooftree}
			\AxiomC{}
			\UnaryInfC{$\langle P_1 \oplus P_2, \mu \rangle \leadsto \langle P_2, \mu \rangle$}
		\end{prooftree}
	\end{minipage}
	
	\vspace{0.3cm}
	
	\begin{minipage}{0.5\textwidth}
		\begin{prooftree}
			\AxiomC{$\llbracket e \rrbracket(\mu) = \bot$}
			\UnaryInfC{$\langle \texttt{while}(e, P), \mu \rangle \leadsto \langle \flat, \mu \rangle$}
		\end{prooftree}
	\end{minipage}%
	\begin{minipage}{0.5\textwidth}
		\begin{prooftree}
			\AxiomC{$\llbracket e \rrbracket(\mu) = \top$}
			\UnaryInfC{$\langle \texttt{while}(e, P), \mu \rangle \leadsto \langle P;~\texttt{while}(e, P), \mu \rangle$}
		\end{prooftree}
	\end{minipage}
	
	\vspace{0.3cm}
	
	\begin{minipage}{0.33\textwidth}
		\begin{prooftree}
			\AxiomC{$\langle P_1, \mu \rangle \leadsto \langle \flat, \mu' \rangle$}
			\UnaryInfC{$\langle P_1 ; P_2, \mu \rangle \leadsto \langle P_2, \mu' \rangle$}
		\end{prooftree}
	\end{minipage}%
	\begin{minipage}{0.33\textwidth}
		\begin{prooftree}
			\AxiomC{$\langle P_1, \mu \rangle \leadsto \langle P_1', \mu' \rangle$}
			\AxiomC{$P_1' \neq \flat$}
			\BinaryInfC{$\langle P_1 ; P_2, \mu \rangle \leadsto \langle P_1';P_2, \mu' \rangle$}
		\end{prooftree}
	\end{minipage}%
	\begin{minipage}{0.33\textwidth}
		\vspace{3mm}
		\begin{prooftree}
			\AxiomC{}
			\UnaryInfC{$\langle \flat, \mu \rangle \leadsto \langle \flat, \mu \rangle$}
		\end{prooftree}
	\end{minipage}
	
	\caption{Small-step semantics for \texttt{bwhile}. A step has the form $\langle P, \mu \rangle \leadsto \langle P', \mu' \rangle$ where program $P$ in memory $\mu$ steps to program $P'$ and memory $\mu'$. For a boolean expression $e$ and memory $\mu$ we write $\llbracket e \rrbracket(\mu) \in \mathbb{B}$ for the value of $e$ in $\mu$ (defined as expected).} \label{fig:sem}
\end{figure}

To have a fixed language to express programs, we use a simple toy-language we call \texttt{bwhile}.
We endow \texttt{bwhile} programs with a direct semantics into a game structures over players $\agent_L, \agent_H$, and $\agent_N$ (\cf~\Cref{sec:examples1}), which allows us to apply the properties given in \Cref{sec:examples1,sec:examples2}.

We fix a finite set of program variables $\calX$ and define boolean expression as follows:
\begin{align*}
	e &:= x \mid \mathit{true} \mid \mathit{false} \mid e_1 \land e_2 \mid e_1 \lor e_2 \mid \neg e
\end{align*}
where $x \in \calX$. 

\noindent
\texttt{bwhile} programs are then generated by the following grammar:
\begin{align*}
	P &:= x \leftarrow e \mid x \leftarrow \texttt{Read}_H \mid x \leftarrow \texttt{Read}_L \mid \texttt{if}(e, P_1, P_2) \mid P_1 \oplus P_2 \mid \texttt{while}(e, P) \mid P_1 ; P_2 \mid \flat
\end{align*}
where $x \in \calX$.
Most language constructs are standard:
$x \leftarrow \texttt{Read}_H$ (resp.~$x \leftarrow \texttt{Read}_L$) reads the value of $x$ from a high-security (resp.~low security) source, $P_1 \oplus P_2$ is a nondeterministic choice between $P_1$ and $P_2$ and $\flat$ is the terminated program.
We endow our language with a standard small-step semantics operating on configurations of the form $\langle P, \mu \rangle$ where $P$ is a program and $\mu : \calX \to \mathbb{B}$ a memory.
The reduction steps are standard; we give them in \Cref{fig:sem} for completeness.
To obtain a game structure, we associated each program $P$ to a player $\llparenthesis P \rrparenthesis \in \{\agent_N, \agent_H, \agent_L\}$, where player $\llparenthesis P \rrparenthesis$ decides on the successor state of $P$.
We define $\llparenthesis P \rrparenthesis$ as follows:

\noindent
\begin{minipage}{0.25\textwidth}
	\begin{align*}
		\llparenthesis x \leftarrow \texttt{Read}_L \rrparenthesis &:= \agent_L\\
		\llparenthesis x \leftarrow e \rrparenthesis &:= \agent_N
 \end{align*}
\end{minipage}%
\begin{minipage}{0.25\textwidth}
	\begin{align*}
		\llparenthesis x \leftarrow \texttt{Read}_H \rrparenthesis &:= \agent_H \\
		\llparenthesis \texttt{if}(e, P_1, P_2) \rrparenthesis &:= \agent_N
	\end{align*}
\end{minipage}%
\begin{minipage}{0.27\textwidth}
	\begin{align*}
		 \llparenthesis P_1 \oplus P_2 \rrparenthesis &:= \agent_N\\
		\llparenthesis \texttt{while}(e, P)\rrparenthesis &:= \agent_N
	\end{align*}
\end{minipage}%
\begin{minipage}{0.25\textwidth}
	\begin{align*}
		 \llparenthesis P_1;P_2 \rrparenthesis &:= \llparenthesis P_1 \rrparenthesis \\
		\llparenthesis \flat \rrparenthesis &:= \agent_N
	\end{align*}
\end{minipage}

\vspace{2mm}
\noindent
Given a program $\dot{P}$, we define the game structure $\calG_{\dot{P}}$ over agents $\{\agent_N, \agent_H, \agent_L\}$ as follows:
The states of $\calG_{\dot{P}}$ are all configurations $\langle P, \mu\rangle$ where $P$ is a program and $\mu$ a memory.
The initial state is $\langle \dot{P}, \lambda \_. \bot\rangle$ (\ie, the initial memory assigns all variables to $\bot$).
In state $\langle P, \mu \rangle$, player $\llparenthesis P \rrparenthesis$ decides on a successor state from the set $\{\langle P', \mu' \rangle \mid \langle P, \mu \rangle \leadsto \langle P', \mu' \rangle\}$.\footnote{Note that for all constructs except $P_1 \oplus P_2, x \leftarrow \texttt{Read}_L, x \leftarrow \texttt{Read}_H$, and $P_1;P_2$ there is unique successor configuration (so the player is irrelevant).
	As expected, $\agent_L$ chooses the successor of a program $x \leftarrow \texttt{Read}_L$ and thereby fixes the next value of $x$ (and similarly for $\agent_H$ and $x \leftarrow \texttt{Read}_H$).
	In a non-deterministic branching $P_1 \oplus P_2$, player $\agent_N$ decides which branch to take. }
$\calG_{\dot{P}}$ is a turn-based game in the sense of \cite{AlurHK02}.
Note that the state-space of $\calG_{\dot{P}}$ is infinite (as there are infinitely many programs), but the reachable fragment is finite and computable. 
The atomic propositions in $\calG_{\dot{P}}$ are all variables from $\calX$ that are used in $\dot{P}$.
An atomic proposition (variable) $x \in \calX$ holds in state $\langle P, \mu \rangle$ iff $\mu(x) = \top$.

\begin{figure}
	
	\begin{tikzpicture}
		\node[align=center, draw, very thick] () at (0.5, -0.4) {\texttt{P1}};
		
		\node[align=center, draw, very thick] () at (0.5, -2.3) {\texttt{P2}};
		
		\node[align=center, draw, very thick] () at (4.5, -0.4) {\texttt{P3}};
		
		\node[align=center, draw, very thick] () at (8.5, -0.4) {\texttt{P4}};

		\node[align=left,anchor=north west] (n0) at (1,-0.1) {\myvar{o} $\leftarrow$ \myconst{true}\\
			\mycontrol{while}(\myconst{true})\\
			\makebox[0.3cm]{} \myvar{o} $\leftarrow$ $\neg$\myvar{o}};

		\node[align=left,anchor=north west] (n0) at (1,-2) {
			\myvar{o} $\leftarrow$ \myconst{true}\\
			\mycontrol{while}(\myconst{true})\\
			\makebox[0.3cm]{} \myvar{l} $\leftarrow$ \mycontrol{Read}$_L$\\
			\makebox[0.3cm]{} \myvar{o} $\leftarrow$ \myvar{l}
		};
		
		\node[align=left,anchor=north west] (n0) at (5,-0.1) {
			\myvar{o} $\leftarrow$ \myconst{true}\\
			\mycontrol{while}(\myconst{true})\\
			\makebox[0.3cm]{} \myvar{l} $\leftarrow$ \mycontrol{Read}$_L$\\
			\makebox[0.3cm]{} \mycontrol{if} $\star$ \mycontrol{then}\\
			\makebox[0.6cm]{} \myvar{o} $\leftarrow$ \myvar{l}\\
			\makebox[0.3cm]{} \mycontrol{else}\\
			\makebox[0.6cm]{} \myvar{o} $\leftarrow$ $\neg$\myvar{l}
		};
		
		\node[align=left,anchor=north west] (n0) at (9,-0.1) {
			\myvar{o} $\leftarrow$ \myconst{true}\\
			\mycontrol{while}(\myconst{true})\\
			\makebox[0.3cm]{} \mycontrol{if} $\star$ \mycontrol{then}\\
			\makebox[0.6cm]{} \myvar{h} $\leftarrow$ \mycontrol{Read}$_H$\\
			\makebox[0.6cm]{} \myvar{o} $\leftarrow$ \myvar{h}\\
			\makebox[0.3cm]{} \mycontrol{else}\\
			\makebox[0.6cm]{} \myvar{h} $\leftarrow$ \mycontrol{Read}$_H$\\
			\makebox[0.6cm]{} \myvar{o} $\leftarrow$ $\neg$\myvar{h}
		};
	\end{tikzpicture}
	
	\caption{Simple \texttt{bwhile} example programs that distinguish different information-flow policies. We write \texttt{if} $\star$ \texttt{then} $P_1$ \texttt{else} $P_2$ instead of $P_1 \oplus P_2$. }\label{fig:programs}
\end{figure}

\subsection{Experiments}\label{sec:experiments}

We applied \texttt{hyperatlmc} to small \texttt{bwhile} programs and checked synchronous and asynchronous information flow policies.

\begin{table}
	\caption{Model checking results for information-flow properties (expressed in \HyperATLS{}) on the small \texttt{bwhile} programs from \Cref{fig:programs}. 
		We give the model checking result (Res) (\cmark{} indicates that the property holds, and \xmark{} that it is violated) and time  taken by \texttt{hyperatlmc} in milliseconds ($t$). }\label{tab:res1}
	\small
	\renewcommand{\arraystretch}{1.3}
	\begin{tabular}{l@{\hspace{7mm}}ll@{\hspace{7mm}}ll@{\hspace{7mm}}ll@{\hspace{7mm}}ll}
		\toprule[1pt]
		& \multicolumn{2}{@{}c@{\hspace{7mm}}}{\textbf{(OD)}} & \multicolumn{2}{@{}c@{\hspace{7mm}}}{\textbf{(NI)}} & \multicolumn{2}{@{}c@{\hspace{8mm}}}{\textbf{(simSec)}} & \multicolumn{2}{@{}c@{\hspace{1mm}}}{\textbf{(aproxGNI$_3$)}} \\
		\cmidrule[1pt](l{-1mm}r{5mm}){2-3}
		\cmidrule[1pt](l{-1mm}r{5mm}){4-5}
		\cmidrule[1pt](l{-1mm}r{5mm}){6-7}
		\cmidrule[1pt](l{-1mm}){8-9}
		\textbf{Program} & \textbf{Res} & $\boldsymbol{t}$ & \textbf{Res} & $\boldsymbol{t}$ & \textbf{Res} & $\boldsymbol{t}$ & \textbf{Res} & $\boldsymbol{t}$ \\
		\cmidrule[1pt](l{0mm}r{6mm}){1-1}
		\cmidrule[1pt](l{-1mm}r{5mm}){2-3}
		\cmidrule[1pt](l{-1mm}r{5mm}){4-5}
		\cmidrule[1pt](l{-1mm}r{5mm}){6-7}
		\cmidrule[1pt](l{-1mm}){8-9}
		\texttt{P1} & \cmark & $12$ & \cmark & $11$ & \cmark & $12$ & \cmark & $31$ \\
		\texttt{P2} & \xmark & $41$ & \cmark & $33$ & \cmark & $34$ & \cmark & $112$\\
		\texttt{P3} & \xmark & $21$ & \xmark & $33$ & \cmark & $31$ & \cmark & $88$ \\
		\texttt{P4} & \xmark & $57$ & \xmark & $41$ & \xmark & $54$ & \cmark & $123$ \\
		\bottomrule[1pt]
	\end{tabular}
	
\end{table}

\paragraph{Information-Flow Policies}

We created a small benchmark of simple programs that distinguish different synchronous information-flow policies. See \Cref{fig:programs}.\footnote{We choose very simple programs to easily distinguish between the different security notions. Our tool \texttt{hyperatlmc} can handle more complex programs with larger bitwidths. }
We checked the following properties:
\textbf{(OD)} is the observational determinism property stating that the output is identical among all paths, \ie,$\forall \pi\ldot \forall \pi'\ldot \ltlG (o_\pi \leftrightarrow o_{\pi'})$.
\textbf{(NI)} is a simple formulation of non-interference due to Goguen and Meseguer  \cite{GoguenM82a} that states that the output is fully determined by the low-security inputs, \ie, $\forall \pi\ldot \forall \pi'\ldot \ltlG (l_\pi \leftrightarrow l_{\pi'}) \to \ltlG (o_\pi \leftrightarrow o_{\pi'})$.
\textbf{(simSec)} is simulation-based security as discussed in \Cref{sec:simSec}.
\textbf{(aproxGNI$_3$)} is the approximation of generalized non-interference with fixed lookahead of $3$ as discussed in  \Cref{sec:coenen}.
The results and running times for each instance (obtained using \texttt{hyperatlmc}) are given in \refTable{res1}.

\begin{table}
	\caption{Model checking results of synchronous and asynchronous properties (expressed in \HyperATLS{}) on small \texttt{bwhile} programs. 
		We give the model checking result (Res) (\cmark{} indicates that the property holds, and \xmark{} that it is violated) and time  taken by \texttt{hyperatlmc} in milliseconds ($t$). 
		Program \texttt{Q1} is the program from \Cref{fig:exProg}, and \texttt{Q2} is a slight modification that sets the output according to the low-security input but delays this update.}\label{tab:res2}
	\small
	\renewcommand{\arraystretch}{1.3}
	\begin{tabular}{l@{\hspace{7mm}}ll@{\hspace{7mm}}ll@{\hspace{7mm}}ll}
		\toprule[1pt]
		& \multicolumn{2}{@{}c@{\hspace{7mm}}}{\textbf{(OD)}} & \multicolumn{2}{@{}c@{\hspace{7mm}}}{\textbf{(OD$_{asynch}$)}} & \multicolumn{2}{@{}c@{\hspace{1mm}}}{\textbf{(NI$_{asynch}$)}} \\
		\cmidrule[1pt](l{-1mm}r{5mm}){2-3}
		\cmidrule[1pt](l{-1mm}r{5mm}){4-5}
		\cmidrule[1pt](l{-1mm}r{0mm}){6-7}
		\textbf{Program} & \textbf{Res} & $\boldsymbol{t}$ & \textbf{Res} & $\boldsymbol{t}$ & \textbf{Res} & $\boldsymbol{t}$  \\
		\cmidrule[1pt](l{0mm}r{6mm}){1-1}
		\cmidrule[1pt](l{-1mm}r{5mm}){2-3}
		\cmidrule[1pt](l{-1mm}r{5mm}){4-5}
		\cmidrule[1pt](l{-1mm}r{0mm}){6-7}
		\texttt{Q1} & \xmark & $54$ &\cmark & $97$ & \cmark & $121$ \\
		\texttt{Q2} & \xmark & $61$ &\xmark & $87$ & \cmark & $95$ \\
		\bottomrule[1pt]
	\end{tabular}
	
\end{table}

\paragraph{Asynchronous Hyperproperties}
Our model checker implements the transformation of a game structure to include an asynchronous scheduler (\cf~\refDef{stutter}). 
Using \texttt{hyperatlmc}, we checked synchronous observational-determinism \textbf{(OD)} and asynchronous versions of observational-determinism \textbf{(OD$_{asynch}$)} and non-interference \textbf{(NI)$_{asynch}$}.  
Note that while \textbf{(OD$_{asynch}$)} is expressible in the decidable fragment of \AHLTL, \textbf{(NI$_{asynch}$)} is not an admissible formula (and cannot be handled in \cite{BaumeisterCBFS21}). 
As non-interference only requires the outputs to align provided the inputs do, one needs to take care that the asynchronous scheduler does not ``cheat'' by deliberately misaligning inputs and thereby invalidating the premise of this implication. 
Our results are given in \refTable{res2}.

\section{Related Work}\label{sec:related}

\paragraph{The Landscape of Hyperproperties}

There has been a lot of recent interest in logics for hyperproperties. 
Most logics are obtained by extending standard temporal or first-order/second-order logics with either path quantification or by a special equal-level predicate~\cite{Finkbeiner017}.   
See \cite{CoenenFHH19} for an overview.
To the best of our knowledge, none of these logics can express strategic hyperproperties in multi-agent systems.
In \cite{Finkbeiner0SZ17,0008SZ18}, the authors study the verification of first-order \HyperLTL{} on multi-agent workflow systems specified as first-order transition systems.
The logic that is used (first-order \HyperLTL{}) does not reason about the strategic behavior in the multi-agent systems.

\paragraph{Hyperproperties in Multi-Agent Systems}

The approach taken in \HyperATLS{} of resolving the paths that are quantified in the prefix \emph{incrementally} is only one possible angle to express hyperproperties in multi-agent systems.
One could also envision a logic, that can state the existence of a strategy \emph{with respect to} a hyperproperty, \ie, state the existence of a strategy such that the set of plays compatible with this strategy satisfies a hyperproperty.
Model checking of the resulting logic would subsume \HyperLTL{} realizability, which is known to be undecidable even for simple alternation-free formulas \cite{FinkbeinerHLST18}.
The incremental approach in \HyperATLS{} is restrictive enough to maintain decidable model checking and powerful enough to express many properties of interest and subsume many existing logics (see \Cref{fig:expr}).

\paragraph{Epistemic Logics}

The relationship between epistemic logics and hyperlogics is interesting, as both reason about the flow of information in a system. 
\HyperLTL{} and \LTL{}$_\mathcal{K}$ (\LTL~extended with a knowledge operator \cite{FHMV1995}) have incomparable expressiveness \cite{BozzelliMP15}.
In \texttt{HyperQPTL} \cite{Rabe16,BeutnerF23b} -- which extends \HyperLTL~with additional propositional quantification -- the knowledge operator can be encoded by explicitly marking the knowledge positions via propositional quantification \cite[\S 7]{Rabe16}.
By allowing second-order quantification, one can even reason about \emph{common} knowledge in a system \cite{HalpernM90,BeutnerFFM23}.
Alternating-time temporal logic has also been extended with knowledge operators \cite{HoekW03a}. The resulting logic, \texttt{ATEL}, can express properties of the form ``if $\agent$ knows $\phi$, then she can enforce $\psi$ via a strategy.''
The natural extension of \texttt{ATEL} that allows for arbitrary nesting of quantification and temporal operators (\ie, an extension of \ATLS{} instead of \ATL{}), is incomparable to \HyperATLS{}.

\paragraph{Model Checking}

Decidable model checking is a crucial prerequisite for the effective use of a logic. Many of the existing (synchronous) hyperlogics admit decidable finite-state model checking, although mostly with non-elementary complexity \cite{ClarksonFKMRS14}. 
Most hyperproperties encountered in practice can be expressed with few (if any) quantifier alternations. 
Alternation-free \HyperLTL{} properties can be checked very efficiently by constructing the self-composition \cite{BartheDR11}, as, \eg, implemented in the \texttt{MCHyper} tool \cite{FinkbeinerRS15}.
Properties with quantifier alternations can be checked by using automata complementations or language inclusion checks, as, \eg, implemented in the \texttt{AutoHyper} tool \cite{BeutnerF23}.
For properties in the $\forall^*\exists^*$ fragment, efficient approximations, such as the game-based approach \cite{CoenenFST19,BeutnerF22b}, are applicable, even in infinite-state systems \cite{BeutnerF22a}.
For alternating-time temporal logic (in the non-hyper realm), model checking is efficient, especially when temporal operators cannot be nested as in \ATL{} \cite{AlurHK02,AlurHMQRT98}. 
In the presence of arbitrary nesting (as in \ATLS{}), model checking subsumes \LTL{} realizability \cite{PnueliR89}. 
This causes a jump in the model checking complexity to $2$-\EXPTIME-completeness \cite{AlurHK02}.
ATL model checking has also been investigated in the presence of imperfect information \cite{JamrogaA06, abs-1102-4225,BullingGJ15,BerthonMM17}, and imperfect recall \cite{Schobbens04}.
Strategy logic \cite{ChatterjeeHP10,MogaveroMPV14} (strictly) generalizes \ATLS{} by considering strategies as first class objects that can be quantified. 
Model checking of strategy logic is decidable, but nonelementary-hard \cite{ChatterjeeHP10,MogaveroMPV14}.

Our lower bounds demonstrate that the combination of strategic quantification and hyperproperties results in a logic that is algorithmically harder (for model checking) than non-strategic hyperlogics (such as \HyperLTL) or strategic (non-hyper) logics (such as \ATLS). 
The fragment of \HyperATLS{} supported by \texttt{hyperatlmc} is algorithmically cheaper than full \HyperATLS{}; it is $2$-\EXPTIME{}-complete in the size of the specification and \PTIME{}-complete in the size of the system.

\paragraph{Satisfiability}

The satisfiability of a formula (\ie, checking if a formula has a satisfying model) is relevant during the development of a specification.
It can be used as a sanity check (to ensure that the specification is not already contradictory) or to determine implications between different specifications. 
The hardness of \HyperLTL{} satisfiability can be characterized in terms of the structure of the quantifier prefix. 
Satisfiability is decidable (and \EXPSPACE{}-complete) for formulas in the $\exists^*\forall^*$ fragment and undecidable for all prefixes that contain a $\forall\exists$ alternation \cite{FinkbeinerH16}.
For \HyperCTLS{}, alternation-free formulas (where the quantifier structure is defined with respect to the scope of quantifiers in a negation-normal form) are decidable \cite{Hahn21}.
However, already $\exists^*\forall^*$ formulas lead to undecidability as quantification can occur at all points along a path (by placing quantification below a $\ltlg$) and create a comb-like structure that can ``simulate'' a $\forall\exists$ alternation  \cite{Hahn21}.
Fortin et al.~show that satisfiability of \HyperLTL{} and \HyperCTLS{} is highly undecidable; deciding satisfiability of a \HyperLTL{} formula is $\Sigma_1^1$-complete and deciding satisfiability of a \HyperCTLS{} formula is $\Sigma_1^2$-complete \cite{FortinKT021}.
By restricting the body of a formula and distinguishing between hyperproperty and trace property, one can identify classes of \HyperLTL{} within the $\forall^*\exists^*$ fragment for which satisfiability remains decidable \cite{BeutnerCFHK22}.
\ATLS{} satisfiability was studied by Schewe \cite{Schewe08} and found to coincide with the model-checking complexity ($2$-\EXPTIME{}-complete in the size of the specification).
This is surprising as for most branching-time temporal logics (such as  \CTL{} and \CTLS{}), satisfiability is strictly (at least exponentially) harder than model checking (in the size of the specification). 
As \HyperATLS{} subsumes \HyperCTLS{}, it inherits the $\Sigma_1^2$-hardness of \HyperCTLS{} satisfiability.\footnote{As alternating-time logics are evaluated over game structures, the satisfiability problem can either be stated as the search for a set of agents (containing the agents refereed to in the formula) \emph{and} game structure over those agents or the search for a game structure given a fixed set of agents (as part of the input). See \cite{WaltherLWW06} for details in the context of \ATL{}. We assume that the set of agents is provided with the input.}
Identifying fragments of \HyperATLS{} that are decidable (or sit below the general $\Sigma_1^2$-hardness) is interesting future work.

\paragraph{Asynchronous Hyperproperties}

Extending hyperlogics to express asynchronous properties has only recently started to gain momentum \cite{GutsfeldMO21,BaumeisterCBFS21,BozzelliPS21,BeutnerF22a}. 
Baumeister et al.~introduce \AHLTL{} by extending \HyperLTL{} with explicit trajectory quantification \cite{BaumeisterCBFS21}. 
Gutsfeld et al.~introduced a variant of the polyadic $\mu$-calculus, called $H_\mu$, and accompanying asynchronous automata that are able to express asynchronous hyperproperties \cite{GutsfeldMO21}.
Bozzelli et al.~present \HyperLTL{}$_S$ by extending \HyperLTL{} with new modalities that remove redundant (for example stuttering) parts of a trace \cite{BozzelliPS21}.
Finite-state model checking for the logics proposed in \cite{GutsfeldMO21,BaumeisterCBFS21,BozzelliPS21} is undecidable.
Observation-based HyperLTL \cite{BeutnerF22a}, can be seen as fragment of \HyperLTL{}$_S$ that is geared towards automated verification and admits decidable finite-state model checking. 
We can obtain decidable fragments of $H_\mu$ \cite{GutsfeldMO21} and $\texttt{HyperLTL}_S$ \cite{BozzelliPS21} by bounding the asynchronous offset by a constant $k$, \ie, asynchronous execution may not run apart (``diverge'') for more than $k$ steps. 
The (known) decidable fragment of \AHLTL{} \cite{BaumeisterCBFS21} can be encoded into \HyperATLS{}  (\Cref{sec:async}).

\section{Conclusion}

We have introduced \HyperATLS{}, a temporal logic to express hyperproperties in multi-agent systems.
Besides the obvious benefits of simultaneously reasoning about strategic choice and information flow, \HyperATLS{} 
provides a natural formalism to express \emph{asynchronous} hyperproperties.
Despite the added expressiveness, \HyperATLS{} model checking remains decidable for the entire logic.
Its expressiveness and decidability, as well as the availability of practical model checking algorithms, make it a very promising choice for model checking tools for hyperproperties.

\bibliographystyle{alphaurl}
\bibliography{references}

\appendix

\section{}\label{app:proofEquiv}

In this section, we show \refProp{MCequiv} for the most interesting case where $\varphi = \llangle A \rrangle \pi\ldot \psi$.
For the correctness of the construction in the other cases see \eg, \cite{MullerSS88,FinkbeinerRS15}.
For simplicity, we assume that $\calG$ is a CGS.
Let $\calA_\psi$ be the inductively constructed automaton for $\psi$ that, by the induction hypothesis, is $\calG$-equivalent to $\psi$.
We consider the construction in \Cref{fig:stratConstruct}.
We show that $\calA_\varphi$ is $\calG$-equivalent to $\varphi$ by showing both directions of the ``iff'' in the definition of $\calG$-equivalence separately. 

\begin{lem}\label{lem:dir1}
	For any $p_1, \ldots, p_n \in S^\omega$, if $\otimes(p_1, \ldots, p_n) \in \calL(\calA_\varphi)$ then $[\pi_i \mapsto p_i]_{i = 1}^n \models_{\calG} \varphi$
\end{lem}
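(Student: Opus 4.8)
The plan is to prove the implication by extracting a winning strategy for the agents in $A$ from an accepting run of $\calA_\varphi$. Writing $\Pi := [\pi_i \mapsto p_i]_{i=1}^n$ and $s^\star := s_n^\circ = \Pi(\epsilon)(0)$ for the state at which the game quantified by $\llangle A \rrangle \pi$ begins, the semantics reduce the goal $\Pi \models_\calG \llangle A \rrangle \pi. \psi$ to exhibiting a set of strategies $F_A$ such that every outcome $p \in \out{\calG, s^\star, F_A}$ satisfies $\Pi[\pi \mapsto p] \models_\calG \psi$. By the induction hypothesis ($\calA_\psi$ is $\calG$-equivalent to $\psi$) together with $\calL(\calA^\mathit{det}_\psi) = \calL(\calA_\psi)$, the latter is equivalent to $\otimes(p_1, \cdots, p_n, p) \in \calL(\calA^\mathit{det}_\psi)$. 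Hence it suffices to produce $F_A$ every outcome $p$ of which is accepted (in the combined word) by the deterministic automaton $\calA^\mathit{det}_\psi$. Since $\calG$ is a CGS ($d = \mathbf{0}$), the transition function of \Cref{fig:stratConstruct} is a disjunction over $A$-moves $\sigma$ of conjunctions over adversarial moves $\sigma'$, and crucially the target $Q$-component $\rho(q, \cdot)$ depends on neither $\sigma$ nor $\sigma'$.

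First I would fix an accepting run $(T, r)$ of $\calA_\varphi$ on $\otimes(p_1, \cdots, p_n)$ and read a strategy off it. At every node $\tau \in T$, the children labels satisfy the DNF transition formula, so there is a witnessing $A$-move $\sigma_\tau$ for the outer disjunction such that for \emph{every} adversarial move $\sigma'$ the pair $\big(\delta(s_\tau, \sigma_\tau + \sigma'), q_\tau'\big)$ (with $s_\tau$ the current $\calG$-state at $\tau$ and $q_\tau' = \rho(q_\tau, \cdot)$ fixed) occurs among $\{r(\tau') \mid \tau' \in \mathit{children}(\tau)\}$. I would then build, by induction on play length, a map $\eta$ from the finite plays of the game starting in $s^\star$ to nodes of $T$: set $\eta(\langle s^\star \rangle) := \epsilon$, and given $\eta(u) = \tau$ with $s_\tau$ the last state of $u$, for each successor state $t = \delta(s_\tau, \sigma_\tau + \sigma')$ pick one child of $\tau$ labeled $(t, q_\tau')$ and let $\eta(u \cdot t) := \tau'$. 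Defining $f_\agent(u) := \sigma_{\eta(u)}(\agent)$ for each $\agent \in A$ yields a well-defined strategy $F_A$, because $\eta(u)$ depends only on $u$ and the required children exist by the conjunctive (adversarial) part of the transition.

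It then remains to verify the strategy is winning. I would take an arbitrary outcome $p \in \out{\calG, s^\star, F_A}$; its prefixes $p[0,k]$ form an $\eta$-increasing chain of nodes lying along a single infinite branch of $T$ whose $\calG$-components spell out $p$ and whose $Q$-components, by construction, are exactly the (deterministic) run of $\calA^\mathit{det}_\psi$ on $\otimes(p_1, \cdots, p_n, p)$. Since $(T, r)$ is accepting and $c'(s,q) = c(q)$, this branch is accepting, so $\calA^\mathit{det}_\psi$ accepts $\otimes(p_1, \cdots, p_n, p)$; by the reduction above $\Pi[\pi \mapsto p] \models_\calG \psi$. As $p$ was arbitrary, $\Pi \models_\calG \varphi$.

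The hard part will be the bookkeeping in the middle step: checking that $\eta$ is genuinely well-defined (that the children demanded at each stage really are present, and that the $Q$-label is independent of the adversary so the run we read off is the intended deterministic one) and that every outcome of $F_A$ arises as the state-projection of an infinite branch. This is exactly where the $\bigvee_\sigma \bigwedge_{\sigma'}$ shape of the transition function is used: the disjunction furnishes the strategic choice $\sigma_\tau$, while the conjunction guarantees the branch continues no matter how the environment plays. The only additional care is the special treatment of the root $q_\mathit{init}$, whose ``current state'' is $s^\star$ rather than a pair, together with the observation that $s^\star = \Pi(\epsilon)(0)$ matches the starting state in the semantics of $\llangle A \rrangle \pi$.
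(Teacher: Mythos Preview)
Your proposal is correct and takes essentially the same approach as the paper: both extract a winning strategy for the agents in $A$ from the disjunctive choices in an accepting run $(T,r)$, then argue that every outcome of this strategy traces an infinite branch of $T$ whose $Q$-projection is the (accepting) run of $\calA_\psi^\mathit{det}$ on the combined word. The only cosmetic difference is that the paper defines $f_\agent(u)$ by searching $T$ for \emph{some} node whose label-sequence matches $u$ (and defaulting arbitrarily when none exists), whereas you build the correspondence inductively via the map $\eta$; your formulation is slightly cleaner in that it avoids the need to argue that the nodes chosen at successive prefixes of $u$ are coherent, but you should still remark that $f_\agent(u)$ is set arbitrarily for $u$ outside the domain of $\eta$ so that the strategies are total on $S^+$.
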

\begin{proof}
	Let $(T, r)$ be an accepting run of $\calA_\varphi$ on $\otimes(p_1, \ldots, p_n)$.
	We use the disjunctive choices made in $(T, r)$ to construct strategies $F_A = \{f_\agent \mid \agent \in A\}$ where $f_\agent : S^+ \to \moves$.
	For each element $u \in S^+$ we define $f_\agent(u)$ for each $\agent \in A$ as follows (it is important to construct the response to $u$ together as the path identified next is not be unique).
	Let $u = u(0)\cdots u(k)$.
	We check if there exists a node $\tau$ in $(T, r)$ such that the nodes along $\tau$ are labeled by $u$, \ie, 
	\begin{align*}
		r(\epsilon), r(\tau[0, 0]), r(\tau[0, 1]), \ldots, r(\tau[0, |\tau|-1]) = q_{\mathit{init}},(u(1), \_), \ldots, (u(k), \_).
	\end{align*}
	Note that the offset is intentional, \ie, the first element $u(0)$ does not occur in $\tau$ (as $\calA_\varphi$ skips over $s_n^\circ$ in the first step).
	If no such node exists, we define $f_\agent(u)$ arbitrarily for all $\agent \in A$ (any play that is compatible with the strategy never reaches this situation). 
	Otherwise, let $r(\tau) = (u(k), q)$ where $q$ is a state of $\calA^\mathit{det}_\psi$ (or $r(\tau) = q_\mathit{init}$ if $|u| = 1$).
	By construction of $\tilde{\calA}_\varphi$ we have that the children of $\tau$ satisfy the formula
	\begin{align*}
		\bigvee\limits_{\sigma : A \to \moves} \bigwedge\limits_{\sigma' : \overline{A} \to \moves}   
		\big(\delta(u(k), \sigma  + \sigma'), \rho(q, [p_1(k), \ldots, p_n(k), u(k)])\big).
	\end{align*}
	(The case where $r(\tau) = q_\mathit{init}$ is analogous.)
	There must thus exist (at least one) $\sigma_u : A \to \moves$ such that for every $\sigma' : \overline{A} \to \moves$ there is a child of $\tau$ labeled with
	\begin{align*}
		(\delta(u(k), \sigma_u  + \sigma'), \rho(q, [p_1(k), \ldots, p_n(k), u(k)]).
	\end{align*}
	We define
	\begin{align*}
		f_\agent(u) := \sigma_u(\agent)
	\end{align*}
	for each $\agent \in A$.
	By assumption of $\sigma_u$ for any $\sigma' : \overline{A} \to \moves$, there is a successor of $\tau$ labeled by $\delta(u(k), \sigma_u  + \sigma')$, \ie, all adversarial moves lead to a node in $(T, r)$ if agents in $A$ play $\sigma_u$.
	
	It is, therefore, easy to see that for all $p \in \mathit{out}(\calG, s_n^\circ, F_A)$ (where $s_n^\circ := s_0$ if $n = 0$ and $s_n^\circ := p_n(0)$ otherwise), there exist a path in $(T, r)$ labeled with $q_\mathit{init} (p(1), q_1) (p(2), q_2) \cdots$.
	By definition of $\tilde{\rho}$, the sequence of automaton state $q_0, q_1, q_2, \ldots$ (where $q_0$ is the initial state of $\calA^\mathit{det}_\psi$) is the unique run of $\calA^\mathit{det}_\psi$ on $\otimes(p_1, \ldots, p_n, p)$.
	As $(T, r)$ is accepting this sequence of automata states is accepting, we thus get that $\otimes(p_1, \ldots, p_n, p) \in \calL(\calA^\mathit{det}_\psi) = \calL(\calA_\psi) $.
	By the induction hypothesis (from the proof of \refProp{MCequiv}) we have that $\calA_\psi$ is $\calG$-equivalent to $\psi$ and so $[\pi_i \mapsto p_i]_{i = 1}^n \cup [\pi \mapsto p] \models_{\calG} \psi$.
	As this holds for all $p \in \mathit{out}(\calG, s_n^\circ, F_A)$, $F_A$ is a winning set of strategies and $[\pi_i \mapsto p_i]_{i = 1}^n \models_{\calG} \varphi$ by the semantics of \HyperATLS{}.
\end{proof}

\begin{lem}\label{lem:dir2}
	For any $p_1, \ldots, p_n \in S^\omega$, if $[\pi_i \mapsto p_i]_{i = 1}^n \models_{\calG} \varphi$ then $\otimes(p_1, \ldots, p_n) \in \calL(\calA_\varphi)$
\end{lem}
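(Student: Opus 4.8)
The plan is to prove this converse inclusion dually to \refLemma{dir1}: rather than reading disjunctive choices off a given accepting run, I will \emph{compile} a winning family of strategies into an accepting run tree. By the semantics of $\llangle A \rrangle$, the hypothesis $[\pi_i \mapsto p_i]_{i=1}^n \models_\calG \varphi$ supplies a set of strategies $F_A = \{f_\agent : S^+ \to \moves \mid \agent \in A\}$ such that every play $p \in \out{\calG, s_n^\circ, F_A}$ satisfies $[\pi_i \mapsto p_i]_{i=1}^n \cup [\pi \mapsto p] \models_\calG \psi$. By the induction hypothesis of \refProp{MCequiv}, the automaton $\calA_\psi$ and hence its deterministic equivalent $\calA^\mathit{det}_\psi$ is $\calG$-equivalent to $\psi$, so $\otimes(p_1, \cdots, p_n, p) \in \calL(\calA^\mathit{det}_\psi)$ for every such $p$. (As in the appendix, I assume $\calG$ is a CGS, so $\rho'$ is a single disjunction over $A$-moves followed by a conjunction over adversarial moves.)

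First I would construct the run $(T, r)$ of $\calA_\varphi$ on $\otimes(p_1, \cdots, p_n)$ by recursion on depth, maintaining for each node the finite history $h \in S^+$ of $\calG$-states read so far (with $h(0) = s_n^\circ$) together with the current state $q$ of $\calA^\mathit{det}_\psi$. At a node of depth $k$ with history $h = p(0) \cdots p(k)$, I set $\sigma(\agent) := f_\agent(h)$ for $\agent \in A$ and create exactly one child for each adversarial move $\sigma' : \overline{A} \to \moves$, labeling it $\big(\delta(p(k), \sigma + \sigma'),\, \rho(q, [p_1(k), \cdots, p_n(k), p(k)])\big)$; the root is treated specially, since $\calA_\varphi$ skips over $s_n^\circ$ and only records its successors. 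Because the agents in $A$ play the fixed $\sigma$ while $\sigma'$ ranges over \emph{all} adversarial moves, the children realize precisely one disjunct $\bigwedge_{\sigma'}(\cdots)$ of $\rho'$, so the local constraint $\{r(\tau') \mid \tau' \in \mathit{children}(\tau)\} \models \rho'(r(\tau), \cdot)$ holds at every node and $(T, r)$ is a genuine run.

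For acceptance I would argue as follows. Every infinite path of $(T, r)$ reads off a play $p$ in which $A$ follows $F_A$, hence $p \in \out{\calG, s_n^\circ, F_A}$; and since $\calA^\mathit{det}_\psi$ is deterministic, the $Q$-components along that path form the \emph{unique} run of $\calA^\mathit{det}_\psi$ on $\otimes(p_1, \cdots, p_n, p)$. As $c'(s, q) = c(q)$ and the freely chosen color $c'(q_{\mathit{init}})$ occurs only at the root, the minimal color occurring infinitely often on the path equals that of the $\calA^\mathit{det}_\psi$-run, which is even because $\otimes(p_1, \cdots, p_n, p) \in \calL(\calA^\mathit{det}_\psi)$. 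Thus every path is accepting, whence $\otimes(p_1, \cdots, p_n) \in \calL(\calA_\varphi)$. The main obstacle I anticipate is organizational rather than conceptual: handling the one-step offset at the root (where the $(n+1)$-st tracked component is $s_n^\circ$, never recorded as a label) and insisting that the strategy response is indexed by tree \emph{nodes} and not by labels, since the same pair $(s, q)$ may be reached along distinct histories that call for different moves of $F_A$ — exactly the bookkeeping dual to that in \refLemma{dir1}. The parity argument itself is routine once the correspondence between infinite paths and outcomes of $F_A$ is pinned down.
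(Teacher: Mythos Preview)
Your proposal is correct and follows essentially the same argument as the paper: compile the winning strategies $F_A$ into a run tree by fixing the $A$-move at each node from the history along the path to that node and branching over all adversarial responses, then observe that every infinite branch is a play in $\out{\calG, s_n^\circ, F_A}$ whose $Q$-projection is the unique (hence accepting) run of $\calA_\psi^\mathit{det}$. Your explicit remarks on the root offset, the need to index by histories rather than labels, and the irrelevance of $c'(q_{\mathit{init}})$ for the liminf color are all correct and make points the paper leaves implicit.
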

\begin{proof}
	Let $F_A = \{f_\agent \mid \agent \in A\}$ be a winning strategy for the agents in $A$, \ie, for all $p \in \mathit{out}(\calG, s_n^\circ, F_A)$,  $[\pi_i \mapsto p_i]_{i = 1}^n \cup [\pi \mapsto p]\models_{\calG} \psi$.
	We construct an accepting run $(T, r)$ of $\calA_\varphi$ on $\otimes(p_1, \ldots, p_n)$. 
	We construct this tree incrementally by adding children to existing nodes.
	
	The root $\epsilon$ is labeled by $q_\mathit{init}$.
	Now let $\tau \in T$ be any node in the tree constructed so far and let
	\begin{align*}
		r(\epsilon),r(\tau[0,0]),r(\tau[0, 1]), \ldots, r(\tau[0, |\tau|-1]) = q_{\mathit{init}},(s_1, q_1), \ldots, (s_k, q_k)
	\end{align*}
	be the label of the nodes along $\tau$.
	We define the move vector $\sigma_\tau : A \to \moves$ via $\sigma_\tau (\agent) := f_\agent(s_n^\circ,s_1, \ldots, s_k)$ for each $\agent \in A$ (where $s_n^\circ := s_0$ if $n = 0$ and $s_n^\circ := p_n(0)$ otherwise).
	For each move vectors $\sigma' : \overline{A} \to \moves$ we add a new child of $\tau$ labeled with 
	\begin{align*}
		(\delta(s_k, \sigma_\tau  + \sigma'), \rho(q, [p_1(|\tau|), \ldots, p_n(|\tau|), s_k]).
	\end{align*}
	By construction of the transition function of $\calA_\varphi$, those children satisfy the transition relation $\rho'$ (see \Cref{fig:stratConstruct}).
	
	The constructed tree $(T, r)$ is thus a run on $\otimes(p_1, \ldots, p_n)$.
	We now claim that $(T, r)$ is accepting. Consider any infinite path in $(T, r)$ labeled by $q_\mathit{init} (s_1, q_1) (s_2, q_2), \ldots$. By construction of the tree, it is easy to see that the path $p = s_n^\circ, s_1, s_2, \ldots$ is contained in $\mathit{out}(\calG, s_n^\circ, F_A)$ as all children added to a node were added in accordance with $F_A$.
	As $F_A$ is winning and by the \HyperATLS{} semantics, we get that $[\pi_i \mapsto p_i]_{i = 1}^n \cupdot [\pi \mapsto p] \models_{\calG} \psi$.
	By induction hypothesis (from the proof of \refProp{MCequiv}) we get that $\calA_\psi$ is $\calG$-equivalent to $\psi$ so $\otimes(p_1, \ldots, p_n, p) \in \calL(\calA_\psi) = \calL(\calA_\psi^\mathit{det})$.
	By construction of $\calA_\varphi$ the automaton sequence $q_0, q_1, q_2, \ldots$ (where $q_0$ is the initial state of $\calA_\psi^\mathit{det}$) is the unique run of $\calA_\psi^\mathit{det}$ on $\otimes(p_1, \ldots, p_n, p)$ and therefore accepting.
	So $(T, r)$ is accepting, and it follows that $\otimes(p_1, \ldots, p_n) \in \calL(\calA_\varphi)$.
\end{proof}

\end{document}